\theoremstyle{plain}
\newtheorem{lemma}{Lemma}
\theoremstyle{definition}
\newtheorem{definition}{Definition}
\title{Pebbles, graphs and equilibria: higher order shape descriptors for sedimentary particles}
\author[1,2]{Balázs Ludmány\thanks{\href{mailto:ludmany.balazs@cloud.bme.hu}{ludmany.balazs@cloud.bme.hu}}}
\author[1,3]{Gábor Domokos}
\affil[1]{ELKH\==BME Morphodynamics Research Group}
\affil[2]{Dept. of Control Engineering and Information Technology, Budapest University of Technology and Economics, 2. Magyar tudosok Blvd., H\==1117, Budapest}
\affil[3]{Dept. of Morphology and Geometric Modeling, Budapest University of Technology and Economics 1\==3. Műegyetem rakpart, H\==1111, Budapest}
\begin{document}

\onecolumn
\maketitle

\begin{abstract}
  While three\=/dimensional measurement technology is spreading fast, its meaningful application to sedimentary geology still lacks content. Classical shape descriptors (such as axis ratios, circularity of projection) were not inherently three\=/dimensional, because no such technology existed. Recently a new class of three\=/dimensional descriptors, collectively referred to as \emph{mechanical descriptors} has been introduced and applied for a broad range of sedimentary particles. \emph{First order} mechanical descriptors (registered for each pebble as a pair $\{S,U\}$ of integers), refer to the respective \emph{numbers} of stable and unstable static equilibria and can be reliably detected by hand experiments. However, they have limited ability of distinction as the majority of coastal pebbles fall into primary class $\{S,U\}=\{2,2\}$. \emph{Higher order} mechanical descriptors offer a more refined distinction.  However for the extraction of these descriptors (registered as \emph{graphs} for each pebble) hand measurements are not an option and even computer\=/based extraction from 3D scans offers a formidable challenge. Here we not only describe and implement an algorithm to perform this task, but also apply it to a collection of 271 pebbles with various lithologies, illustrating that the application of higher order descriptors is a viable option for geomorphologists. We also show that the so-far uncharted connection between the two known secondary descriptors, the so-called    \emph{Morse\==Smale graph} and the \emph{Reeb\=/graph} can be established via a \emph{third order descriptor} which we call the master graph.
  
  \paragraph{Keywords} shape descriptor, sedimentary particle, 3D scan, mechanical equilibrium, Reeb\=/graph, Morse\==Smale graph

  \paragraph{Acknowledgments} Support of the NKFIH Hungarian Research Fund, grants 134199 and of the NKFIH Fund TKP2021 BME\=/NVA, carried out at the Budapest University of Technology and Economics, is kindly acknowledged.
\end{abstract}

\twocolumn

\setcounter{tocdepth}{2}
\tableofcontents

\section{Introduction: Motivation and main goals}\label{sec:intro}

The shape of  sedimentary particles carries an infinite amount of information and we know that some portion of this is highly relevant for detecting the provenance of the particle~\cite{domokos_natural, szabo_universal,szabo_mars}. However, picking this relevant portion  may not always be trivial:  this is the observer's prerogative, manifested in the choice of \emph{shape descriptors}. This choice is a trade\=/off between the observer's ability to reliably measure the chosen shape descriptor and the observer's desire to extract maximal information.  \emph{Classical descriptors}, such as axis ratios and the isoperimetric ratio (which we discuss in subsection~\ref{ss:classical}) rely on hand measurement and two\=/dimensional image analysis. They were picked by geomorphologists in the absence of computerized three\=/dimensional tools, based on the mentioned trade\=/off. One, recently introduced set of shape descriptors, called \emph{first order mechanical descriptors}~\cite{szabo_pebbles, varkonyi_gomboc}, registered for each particle as a pair of integers, can already be regarded as three\=/dimensional descriptors but still relies on hand measurements. First order mechanical descriptors proved to be particularly informative~\cite{domokos_natural, domokos_rocking,  szabo_universal} in describing the provenance of particles (see subsection~\ref{sss:higher}, and see also Figure~\ref{fig:primary}.) Their generalizations, also known as \emph{second order mechanical descriptors}~\cite{holmes, domokos_morse_smale, ludmany2021morsesmale, nicolaescu_counting}  registered for each particle as a graph (either a so-called Morse\==Smale graph or a so-called  Reeb\=/graph, see subsection~\ref{sss:higher}, see also Figure~\ref{fig:ms_and_reeb}), also appear to be promising tools in sedimentary geomorphology. However, they are fully three\=/dimensional descriptors and their measurement has been prohibitive until now and they could not be applied in either laboratory or field studies.

The challenge of measuring second order mechanical shape descriptors is three\=/fold: 
\begin{enumerate}
    \item \emph{Mathematical challenge:} There exist at least two different kinds of secondary descriptors: Morse\==Smale graphs~\cite{holmes, domokos_morse_smale, ludmany2021morsesmale} and Reeb\=/graphs~\cite{nicolaescu_counting}. The connection between these two secondary descriptors has not been determined until now: that is, it was not clear how the two classification schemes are related, whether one of them may be regarded as a refinement of the other or not.     

    \item \emph{Algorithmic challenge:} The extraction of second order descriptors from 3D datasets is far from trivial. While related problems have been solved in image processing on surfaces defined in orthogonal coordinates \cite{edelsbrunner_morse_smale}, the spherical version of this method (to be used on particle surfaces) has not been developed.
    
    \item \emph{Technological challenge:} Obtaining 3D (scanned) datasets for sedimentary particles in a fast and reliable manner appeared to be, until very recently, quite challenging. 3D scanning technologies did not offer the option to measure an object quickly and reliably on the full spherical horizon.
\end{enumerate}

The last mentioned (technological) challenge appears to be resolved: as 3D measurement technology is spreading fast~\cite{Rodriguez_3D_2013, LATHAM_3D_2008, Sun_3D_2014}, it may become soon the de~facto standard also in sedimentary geomorphology~\cite{steer2022, havasi}. Encouraged by these developments, our paper takes aim at the first two challenges. In particular we offer the following:
\begin{enumerate}
    \item \emph{Mathematical results:} In subsections~\ref{sec:master} and \ref{sec:poly_master} we  introduce a \emph{third order mechanical descriptor} which we call the \emph{master graph} which establishes a meaningful connection between second order mechanical descriptors, the Reeb\=/graph and the Morse\==Smale graph. We show that neither of them can be derived from the other, but both can be derived from the master graph. We will also give a complete third\=/order description of the geologically most relevant primary class $\{S,U\}=\{2,2\}$, containing the majority of all coastal pebbles (see also Table~\ref{tbl:distribution} and Figure~\ref{fig:subdivision}). In particular, in subsection~\ref{sss:higher} we will formulate Lemma~\ref{lem:cardinality}, claiming that there exist 3 tertiary classes in the primary class $\{S,U\}=\{2,2\}$ and we prove this claim in Section~\ref{sec:catalog}.
    
    \item \emph{Algorithmic results and application}: In Section~\ref{sec:polyhedral} we describe a reliable tool to determine higher order mechanical descriptors based on 3D point clouds obtained from scans. Here we will consider all three relevant types of graphs: Morse\==Smale graphs, Reeb\=/graphs and the master graph.  We illustrate our algorithm in the Supplementary material where we show both first, second and third\=/order mechanical descriptors for 271 scanned pebbles of various lithologies. On these 271 pebbles we identified 29 primary equilibrium classes, 69 distinct Reeb\=/graphs, 62 distinct Morse\==Smale graphs and 115 distinct master graphs. In particular, despite the fact that over 50 pebbles belong to the primary class $\{2,2\}$, inside it we only identified one Reeb\=/graph, one Morse\==Smale graph and one single master graph.
\end{enumerate}

Despite the formidable difficulties of their measurement, higher order mechanical descriptors appear to be an interesting and inviting tool for geomorphology: they are naturally encoded in the pebble shape, that is, when we use them in the description of the pebble, we do not add any arbitrary, man\=/made information. They carry deep, essential information on the shape and its evolution and thus they might help to uncover new, surprising connections between pebbles and pebble populations. It is not a coincidence that these concepts have been applied in image processing and morphology. In this paper we offer the above mentioned results and algorithmic tools as the first step towards the geological application of these deep, natural shape descriptors.

The structure of the paper is the following: In Section~\ref{sec:concepts} we give intuitive definitions of the basic concepts and state the above mentioned Lemma~\ref{lem:cardinality}. In Section~\ref{sec:classification} we discuss the previous concepts more rigorously on smooth, convex shapes. In Section~\ref{sec:polyhedral} we interpret the same concepts in the context of convex polyhedra as models of scanned particles. In Section~\ref{sec:catalog} we prove Lemma~\ref{lem:cardinality} and other mathematical results. Section~\ref{sec:flocks} describes our algorithmic results, in particular, we show how we can extract higher order classes from natural shapes based on 3D scans.

\section{Basic concepts}\label{sec:concepts}
\subsection{Shape catalogs}\label{ss:catalogs}

If the value of  a shape descriptor may assume an interval of real numbers (such as the value of roundness) then we refer to it as a \emph{continuous descriptor}, whereas if it defines discrete classes (i.e.~its set of values is discrete) then we call it a \emph{shape catalog} (Zingg classes, discussed below, are an example). Catalogs (not necessarily of shapes) are fundamental tools of scientific progress and they have been applied broadly in physics (conductors/semiconductors/insulators), chemistry (periodic table) and biology (taxonomy). The main advantage of a shape catalog, compared to a continuous shape descriptor, is that recorded datasets are exact, unambiguous and they can be easily interpreted, compared and understood. Catalogs may be either \emph{natural} or \emph{artificial}, depending on whether classes are separated based on some natural, or some man\=/made condition. Catalogs may be, depending on the number of classes, either \emph{finite} or \emph{infinite}. Catalogs may be either \emph{complete} or \emph{incomplete}, depending on whether each class contains shapes or not. Catalogs may be either \emph{biased} or \emph{uniform}, depending on the statistical distribution of natural shapes among classes (in biased catalogs this distribution is non\=/uniform). In case of shape catalogs it is easy to see that from the geophysical point of view we seek natural, biased classifications (whether or not they are complete may be more of mathematical interest).

While we did not apply higher order mechanical descriptors in field studies yet, still, by relying on a small laboratory dataset of 52 pebbles we argue that their application carries substantial potential for geophysical insight as they offer \emph{biased natural catalogs} of sedimentary shapes, meaning they offer naturally defined, discrete classification schemes where the majority of natural shapes is contained in very few classes. Such strong bias can not only motivate the search for particular (rare) natural shapes, it can also offer clues about the underlying evolution process. The first such strongly biased natural catalog is associated with first\=/order mechanical shape descriptors and led to the discovery of a fundamental monotonic trend in natural shape evolution, the monotonic decrease of static balance points~\cite{domokos_natural, szabo_universal}. Below we briefly review classical shape descriptors as well as mechanical shape descriptors and explain the concept of biased catalogs in more detail.

\subsection{Classical descriptors}\label{ss:classical}

The most established shape descriptors are, without doubt, \emph{axis ratios}~\cite{zingg} which, for an ellipsoid with axes $a>b>c$ may be written as $y^1=c/b, y^2=b/a$ and for non\=/ellipsoidal shapes an approximating ellipsoid is considered. An alternative geophysical shape descriptor is \emph{roundness}~\cite{krumbein} which is commonly measured as the isoperimetric ratio $0 \leq I \leq 1$ of the pebble's contour~\cite{szabo_mars, szabo_universal}. We will  refer to the axis ratios~$y^1,y^2$ and the isoperimetric ratio~$I$ as \emph{classical descriptors}. Classical descriptors are, by definition, real numbers defined on a continuous domain, so, in the sense defined in subsection~\ref{ss:catalogs}, they are \emph{continuous descriptors}. Such descriptors admit the comparison of sample averages, however, they do not immediately provide a classification or catalog for natural shapes which could be based only on integer\=/type descriptors. While we are not aware of any existing catalog for the isoperimetric ratio~$I$, in case of axis ratios it is apparent that there is a need for such catalogs: by introducing the thresholds at $y^1=y^2=2/3$, Zingg~\cite{zingg} created the first such system, subdividing all shapes into the four classes called \emph{discs, spheres, blades} and \emph{rods}.

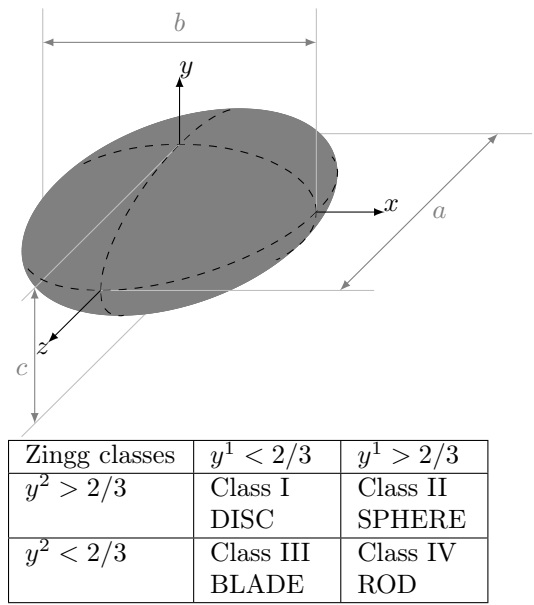
\begin{figure}[!ht]
  \centering
  \begin{tikzpicture}[scale=.9]
    \draw[gray!50!white] (0,0,-3) -- (4,0,-3);
    \draw[gray!50!white] (-2,0,0) -- (-2,3,0);
    \draw[gray!50!white] (0,-1,0) -- (0,-1,6);
    \fill[gray] (0, 0, -3.0) ellipse (0.0 and 0.0);
\fill[gray] (0, 0, -2.9) ellipse (0.5120763831912405 and 0.25603819159562025);
\fill[gray] (0, 0, -2.8) ellipse (0.7180219742846008 and 0.3590109871423004);
\fill[gray] (0, 0, -2.7) ellipse (0.8717797887081344 and 0.4358898943540672);
\fill[gray] (0, 0, -2.6) ellipse (0.9977753031397176 and 0.4988876515698588);
\fill[gray] (0, 0, -2.5) ellipse (1.1055415967851334 and 0.5527707983925667);
\fill[gray] (0, 0, -2.4) ellipse (1.2000000000000002 and 0.6000000000000001);
\fill[gray] (0, 0, -2.3) ellipse (1.284090685617215 and 0.6420453428086075);
\fill[gray] (0, 0, -2.2) ellipse (1.359738536958076 and 0.679869268479038);
\fill[gray] (0, 0, -2.1) ellipse (1.42828568570857 and 0.714142842854285);
\fill[gray] (0, 0, -2.0) ellipse (1.4907119849998598 and 0.7453559924999299);
\fill[gray] (0, 0, -1.9) ellipse (1.5477582354991866 and 0.7738791177495933);
\fill[gray] (0, 0, -1.7999999999999998) ellipse (1.6 and 0.8);
\fill[gray] (0, 0, -1.7) ellipse (1.6478942792411033 and 0.8239471396205517);
\fill[gray] (0, 0, -1.5999999999999999) ellipse (1.6918103387266028 and 0.8459051693633014);
\fill[gray] (0, 0, -1.5) ellipse (1.7320508075688772 and 0.8660254037844386);
\fill[gray] (0, 0, -1.4) ellipse (1.7688665548562132 and 0.8844332774281066);
\fill[gray] (0, 0, -1.2999999999999998) ellipse (1.80246744461277 and 0.901233722306385);
\fill[gray] (0, 0, -1.2) ellipse (1.8330302779823362 and 0.9165151389911681);
\fill[gray] (0, 0, -1.0999999999999999) ellipse (1.8607047649270483 and 0.9303523824635241);
\fill[gray] (0, 0, -1.0) ellipse (1.8856180831641267 and 0.9428090415820634);
\fill[gray] (0, 0, -0.8999999999999999) ellipse (1.9078784028338913 and 0.9539392014169457);
\fill[gray] (0, 0, -0.7999999999999998) ellipse (1.9275776393067947 and 0.9637888196533974);
\fill[gray] (0, 0, -0.6999999999999997) ellipse (1.9447936194419762 and 0.9723968097209881);
\fill[gray] (0, 0, -0.5999999999999996) ellipse (1.9595917942265426 and 0.9797958971132713);
\fill[gray] (0, 0, -0.5) ellipse (1.9720265943665387 and 0.9860132971832694);
\fill[gray] (0, 0, -0.3999999999999999) ellipse (1.9821424996424675 and 0.9910712498212337);
\fill[gray] (0, 0, -0.2999999999999998) ellipse (1.98997487421324 and 0.99498743710662);
\fill[gray] (0, 0, -0.19999999999999973) ellipse (1.9955506062794355 and 0.9977753031397177);
\fill[gray] (0, 0, -0.09999999999999964) ellipse (1.9988885800753267 and 0.9994442900376633);
\fill[gray] (0, 0, 0.0) ellipse (2.0 and 1.0);
\fill[gray] (0, 0, 0.10000000000000009) ellipse (1.9988885800753267 and 0.9994442900376633);
\fill[gray] (0, 0, 0.20000000000000018) ellipse (1.9955506062794353 and 0.9977753031397176);
\fill[gray] (0, 0, 0.30000000000000027) ellipse (1.98997487421324 and 0.99498743710662);
\fill[gray] (0, 0, 0.40000000000000036) ellipse (1.9821424996424675 and 0.9910712498212337);
\fill[gray] (0, 0, 0.5) ellipse (1.9720265943665387 and 0.9860132971832694);
\fill[gray] (0, 0, 0.6000000000000001) ellipse (1.9595917942265426 and 0.9797958971132713);
\fill[gray] (0, 0, 0.7000000000000002) ellipse (1.9447936194419762 and 0.9723968097209881);
\fill[gray] (0, 0, 0.8000000000000003) ellipse (1.9275776393067947 and 0.9637888196533974);
\fill[gray] (0, 0, 0.9000000000000004) ellipse (1.9078784028338913 and 0.9539392014169457);
\fill[gray] (0, 0, 1.0) ellipse (1.8856180831641267 and 0.9428090415820634);
\fill[gray] (0, 0, 1.1000000000000005) ellipse (1.8607047649270483 and 0.9303523824635241);
\fill[gray] (0, 0, 1.2000000000000002) ellipse (1.833030277982336 and 0.916515138991168);
\fill[gray] (0, 0, 1.2999999999999998) ellipse (1.80246744461277 and 0.901233722306385);
\fill[gray] (0, 0, 1.4000000000000004) ellipse (1.7688665548562132 and 0.8844332774281066);
\fill[gray] (0, 0, 1.5) ellipse (1.7320508075688772 and 0.8660254037844386);
\fill[gray] (0, 0, 1.6000000000000005) ellipse (1.6918103387266024 and 0.8459051693633012);
\fill[gray] (0, 0, 1.7000000000000002) ellipse (1.6478942792411033 and 0.8239471396205517);
\fill[gray] (0, 0, 1.8000000000000007) ellipse (1.5999999999999996 and 0.7999999999999998);
\fill[gray] (0, 0, 1.9000000000000004) ellipse (1.5477582354991866 and 0.7738791177495933);
\fill[gray] (0, 0, 2.0) ellipse (1.4907119849998598 and 0.7453559924999299);
\fill[gray] (0, 0, 2.1000000000000005) ellipse (1.4282856857085697 and 0.7141428428542849);
\fill[gray] (0, 0, 2.2) ellipse (1.359738536958076 and 0.679869268479038);
\fill[gray] (0, 0, 2.3000000000000007) ellipse (1.2840906856172143 and 0.6420453428086071);
\fill[gray] (0, 0, 2.4000000000000004) ellipse (1.1999999999999997 and 0.5999999999999999);
\fill[gray] (0, 0, 2.5) ellipse (1.1055415967851334 and 0.5527707983925667);
\fill[gray] (0, 0, 2.6000000000000005) ellipse (0.9977753031397172 and 0.4988876515698586);
\fill[gray] (0, 0, 2.7) ellipse (0.8717797887081344 and 0.4358898943540672);
\fill[gray] (0, 0, 2.8000000000000007) ellipse (0.7180219742845992 and 0.3590109871422996);
\fill[gray] (0, 0, 2.9000000000000004) ellipse (0.5120763831912397 and 0.25603819159561986);
\fill[gray] (0, 0, 3.0) ellipse (0.0 and 0.0);
    \draw[gray!50!white] (0,0,3) -- (4,0,3);
    \draw[gray!50!white] (2,0,0) -- (2,3,0);
    \draw[gray!50!white] (0,1,0) -- (0,1,6);
    \draw[latex-latex, gray] (3.5,0,-3) -- (3.5,0,3);
    \draw[latex-latex, gray] (-2,2.5,0) -- (2,2.5,0);
    \draw[latex-latex, gray] (0,-1,5.5) -- (0,1,5.5);
    \draw[gray] (3.8,0,0) node {$a$};
    \draw[gray] (0,2.8,0) node {$b$};
    \draw[gray] (0,0,6) node {$c$};
    \begin{scope}
      \clip (-2,1) -- (2,-1) -- (2,1);
      \draw[dashed] ellipse (2 and 1);
    \end{scope}
    \begin{scope}[rotate around x=90]
      \clip (-2,3) -- (2,-3) -- (2,3);
      \draw[dashed] ellipse (2 and 3);
    \end{scope}
    \begin{scope}[rotate around y=-90]
      \clip (-3,1) -- (3,-1) -- (3,1);
      \draw[dashed] ellipse (3 and 1);
    \end{scope}
    \draw[-latex] (2,0,0) -- (3,0,0);
    \draw (3.1,0.1,0) node {$x$};
    \draw[-latex] (0,1,0) -- (0,2,0);
    \draw (0.1,2.1,0) node {$y$};
    \draw[-latex] (0,0,3) -- (0,0,5);
    \draw (0,0,5.2) node {$z$};
  \end{tikzpicture}\\
  \begin{tabular}{|p{2cm}|p{1.5cm}|p{1.5cm}|}
    \hline
    Zingg classes&$y^1<2/3$&$y^1>2/3$\\
    \hline
    $y^2>2/3$&Class I\newline DISC&Class II\newline SPHERE\\
    \hline
    $y^2<2/3$&Class III\newline BLADE&Class IV\newline ROD\\
    \hline
  \end{tabular}
  \caption{The Zingg catalog}\label{fig:zingg}
\end{figure}

Zingg's catalog, illustrated in Figure~\ref{fig:zingg} is finite (it defines 4 classes) and it is complete, as each class contains geometric shapes (in fact, each class contains ellipsoids). On the other hand, the Zingg classification is artificial, as the threshold $2/3$ is an arbitrary choice. One can, of course, study generalized Zingg catalogs where this threshold is being varied~\cite{szabo_pebbles}, however, there exits infinitely many generalized Zingg catalogs and it is not clear, which one should be used. Although artificial, the Zingg classification still has the advantage to offer some bias: in coastal environments \emph{blades} and \emph{discs} appear to be dominant, admitting conclusions about the effects of friction~\cite{Domokos_Gibbons_2019}. We remark that we also computed the Zingg classes for the laboratory dataset, shown in Online Resource 1.

\subsection{Mechanical shape descriptors and natural classes}\label{ss:mechanical}

\subsubsection{The primary mechanical classification}\label{sss:primary}

The \emph{primary mechanical classification}, introduced in~\cite{varkonyi_gomboc} is based on the number of different types of \emph{equilibrium points}, that is, positions where the body is at rest when supported on a horizontal surface, under gravity. When pushed gently from any direction, the body sitting on a \emph{stable} equilibrium returns to its original position, while it tips over from an \emph{unstable} equilibrium. We denote the respective numbers for stable and unstable equilibria by $S,U$ and we also note that 3\=/dimensional objects also have $H$~\emph{saddle}\=/type equilibria,  where the behavior depends on the direction of the push. The mathematical background of the equilibrium points is the analysis of the \emph{radial distance function}~$r_K$ measured from the center of gravity~$o$ of the body~$K$. For planar objects this is a function~$r_K(\varphi)$ of the single polar angle~$\varphi$ while in 3~dimensions we have $r_K(\varphi, \theta)$ depending on two angles. Stable, unstable and saddle points of the body correspond to the minima, maxima and saddles of this function, respectively. These concepts are illustrated in Figure~\ref{fig:distance} for an ellipse in two\=/dimensions and for an ellipsoid in three\=/dimensions. For the latter, with main axes $a>b>c$ we can use the following parameterization in the orientation depicted in Figure~\ref{fig:zingg}:
\begin{displaymath}
  \begin{aligned}
    x(\varphi,\theta)&=\tfrac{b}{2}\sin\theta\cos\varphi\\
    y(\varphi,\theta)&=\tfrac{c}{2}\sin\theta\sin\varphi\\
    z(\varphi,\theta)&=\tfrac{a}{2}\cos\theta
  \end{aligned}
\end{displaymath}
where $0\leq\varphi<2\pi$ and $0\leq\theta\leq\pi$. The distance from the ellipsoid's center of gravity~$o$ is then the Euclidean~distance:
\begin{equation}
  r_\text{ell}(\varphi,\theta)=
  \scriptstyle\sqrt{(x(\varphi,\theta)-o_x)^2+(y(\varphi,\theta)-o_y)^2+(z(\varphi,\theta)-o_z)^2}\label{eq:ellipsoid}
\end{equation}
The function $r_\text{ell}$ has maxima at $(0,0,\pm \frac{a}{2})$, saddles at $(\pm \frac{b}{2},0,0)$ and minima at $(0,\pm \frac{c}{2},0)$. We are going to return to this example in subsection~\ref{sec:equilibria} where we give a more rigorous description of the equilibria of convex surfaces defined by smooth functions. We are also giving a more precise definition of equilibria on convex polyhedra, another well studied subset of convex bodies later in Section~\ref{sec:polyhedral}. You can also see Figure~\ref{fig:poly_equilibria} with the equilibria of a regular tetrahedron marked.

\begin{figure}[!ht]
    \centering
    \includegraphics[width=\columnwidth]{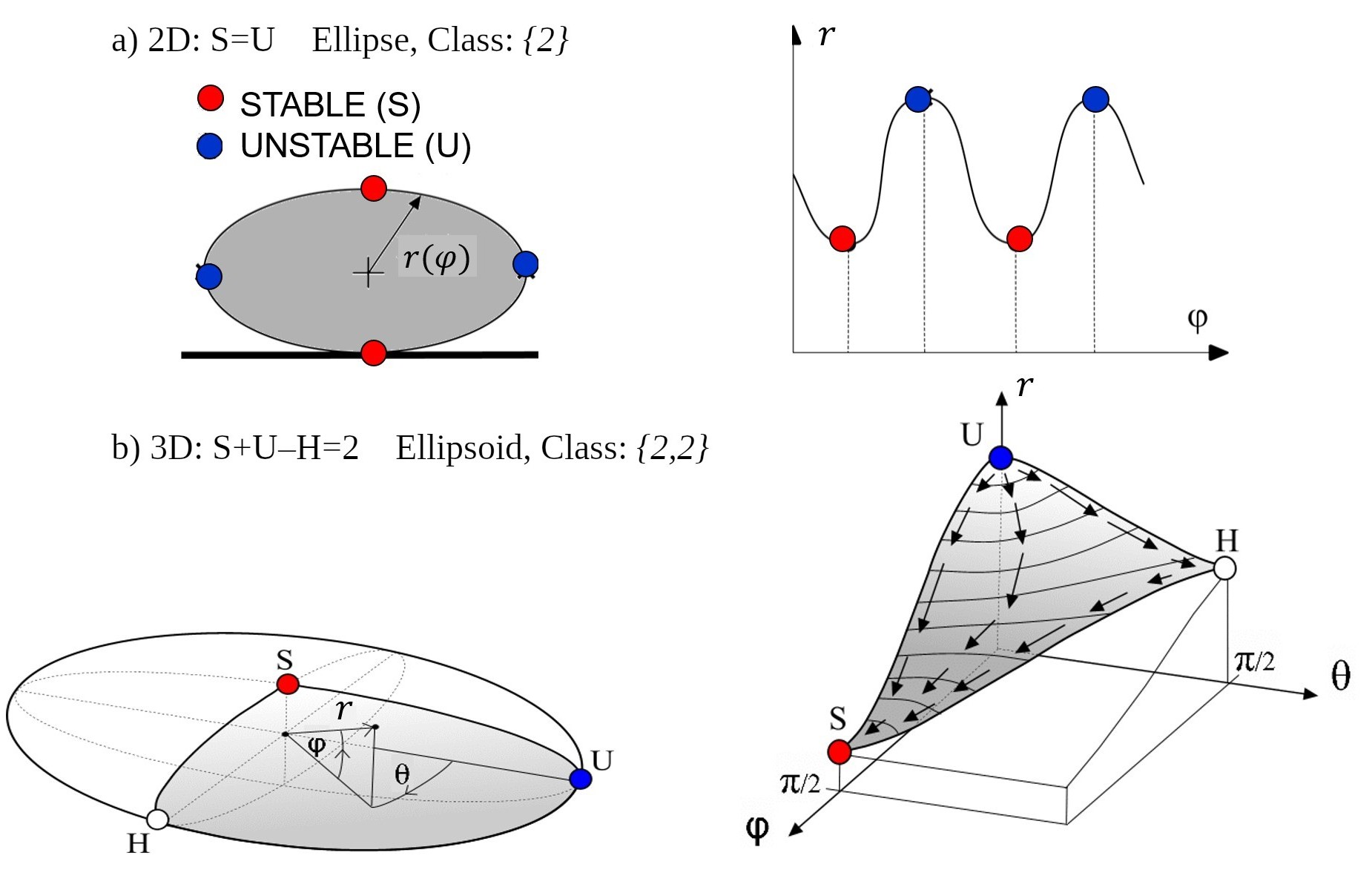}
    \caption{Mechanical equilibria as stationary points of the radial distance function $r$ a) In two\=/dimensions we have $r=r(\varphi)$ and stable and unstable equilibria appear alternating, in pairs, so we have $S=U$ b) In three\=/dimensions we have $r=r(\varphi, \theta)$. The respective numbers $S,U,H$ for stable, unstable and saddle\=/type equilibria satisfy the Poincaré\==Hopf formula $S+U-H=2$}\label{fig:distance}
\end{figure}

The numbers of different types of equilibria are related by the Poincaré\==Hopf formula $S+U-H=2$~\cite{milnor} so it is sufficient to record $S$ and $U$ and we call the pair $\{S,U\}$ the \emph{primary equilibrium class} of the body~\cite{varkonyi_gomboc, domokos_morse_smale, domokos_natural}. This means that the ellipsoid described by Eq.~\ref{eq:ellipsoid} is in the primary class \(\{2,2\}\). The primary mechanical classification system is illustrated in Figure~\ref{fig:primary}.

\begin{figure*}[!ht]
    \centering
    \includegraphics[width=1\textwidth]{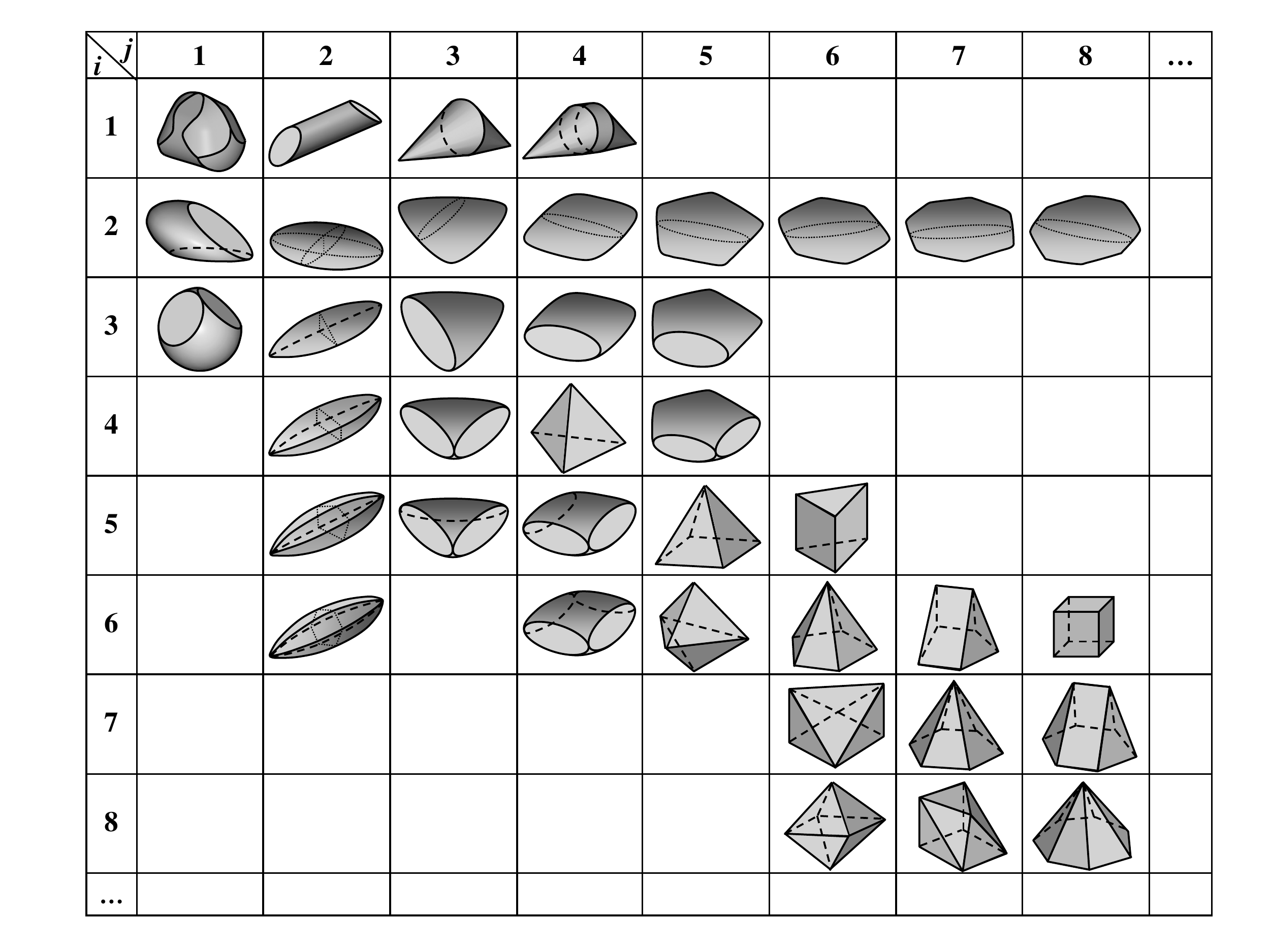}
    \caption{The primary mechanical catalog, showing examples of shapes with $i$ stable and $j$ unstable static balance points}\label{fig:primary}
\end{figure*}

Unlike the Zingg system, the primary mechanical catalog is infinite, and it is also natural, since we do not make any arbitrary choices when assigning the primary class $\{S,U\}$ to a particular shape, the class is \emph{encoded in the shape itself}. We also know that this catalog is complete, meaning no primary class is empty~\cite{varkonyi_gomboc}. This mathematical fact, however, is not related to the statistical distribution of natural particles: datasets of beach pebbles classified according to the primary mechanical catalog show very strong bias, as approximately 70\% of all beach pebbles appear in class $\{S,U\}=\{2,2\}$~\cite{szabo_pebbles}, making $\{2,2\}$ the dominant primary class. This fact, along with analysis of the statistical distribution is a key geophysical evidence supporting the theory that the total number $N=S+U+H$ of equilibria is monotonically decreasing in natural abrasion~\cite{domokos_natural}.

This bias is also present in the dataset with 271 pebbles we based our results on. Their distribution in the primary classes is presented in Table~\ref{tbl:distribution}. We will have special focus on the class \(\{2,2\}\) which contains 52 pebbles in our dataset.

\begin{table*}[!ht]
  \centering
  \caption{Pebbles used in the experiment}\label{tbl:distribution}
  \begin{subtable}{\textwidth}
    \centering
    \begin{tabular}{r|*{7}{r}|r}
      \diagbox{S}{U}&1&2&3&4&5&6&7&sum\\
      \hline
      1&0&0&0&0&0&0&0&0\\
      2&0&\cellcolor[gray]{0.8}52&\cellcolor[gray]{0.8}32&\cellcolor[gray]{0.8}7&\cellcolor[gray]{0.8}2&0&0&93\\
      3&0&\cellcolor[gray]{0.8}13&\cellcolor[gray]{0.8}28&\cellcolor[gray]{0.8}10&\cellcolor[gray]{0.8}3&1&0&55\\
      4&0&\cellcolor[gray]{0.8}15&\cellcolor[gray]{0.8}28&\cellcolor[gray]{0.8}19&\cellcolor[gray]{0.8}3&1&0&66\\
      5&0&\cellcolor[gray]{0.8}2&\cellcolor[gray]{0.8}10&\cellcolor[gray]{0.8}10&\cellcolor[gray]{0.8}12&1&1&36\\
      6&0&3&5&2&3&0&0&13\\
      7&0&0&0&3&1&1&1&6\\
      8&0&0&0&0&0&2&0&2\\
      \hline
      sum&0&85&103&51&24&6&2&271
    \end{tabular}
    \caption{Number of pebbles in each primary equilibrium class}
  \end{subtable}
  \begin{subtable}{\textwidth}
    \centering
    \begin{tabular}{r|@{}*{4}{c@{}}}
      \diagbox{S}{U}&2&3&4&5\\
      \hline
      2&
      \raisebox{-1cm}{\includegraphics[width=2cm]{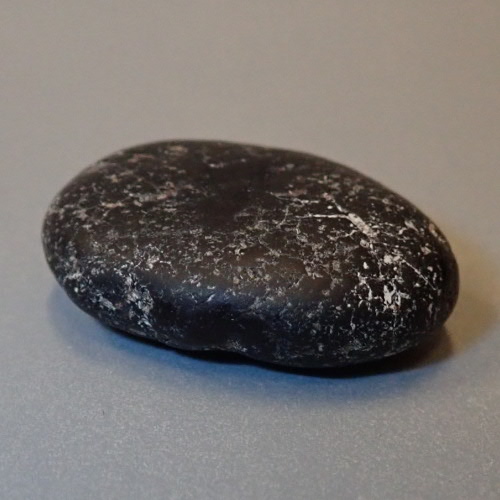}}&
      \raisebox{-1cm}{\includegraphics[width=2cm]{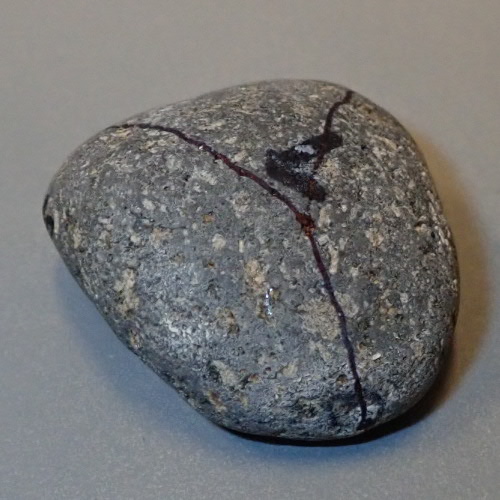}}&
      \raisebox{-1cm}{\includegraphics[width=2cm]{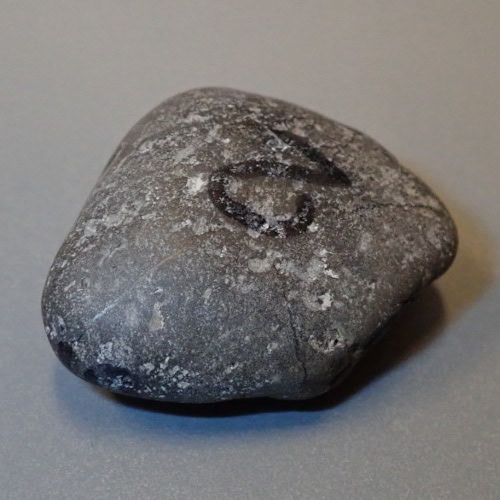}}&
      \raisebox{-1cm}{\includegraphics[width=2cm]{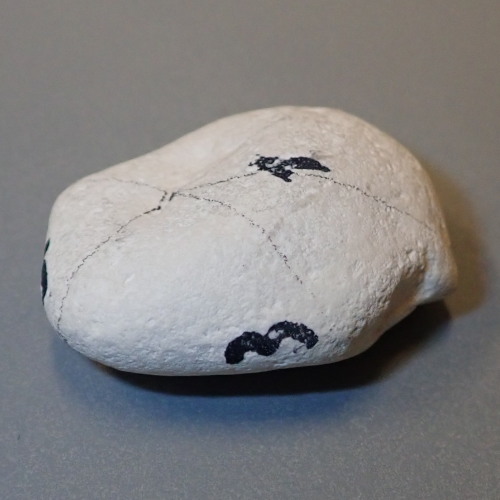}}\\
      3&
      \raisebox{-1cm}{\includegraphics[width=2cm]{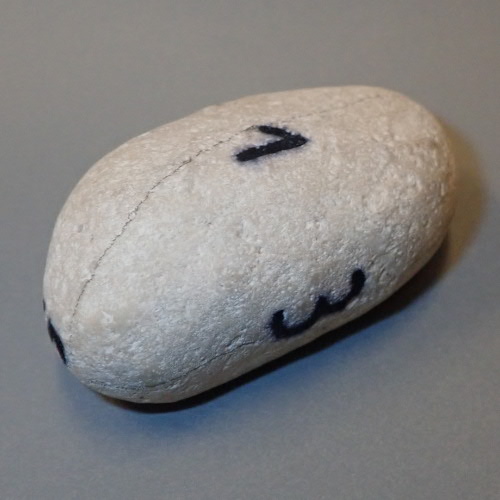}}&
      \raisebox{-1cm}{\includegraphics[width=2cm]{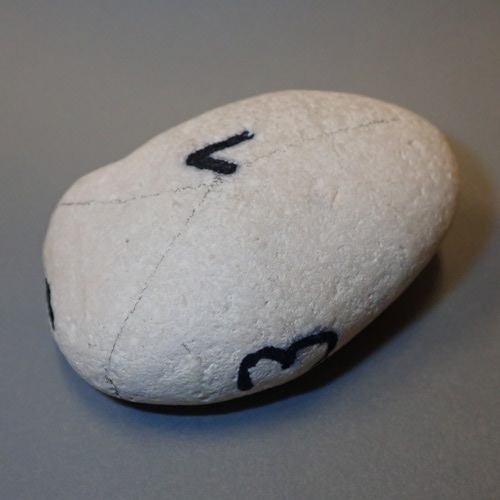}}&
      \raisebox{-1cm}{\includegraphics[width=2cm]{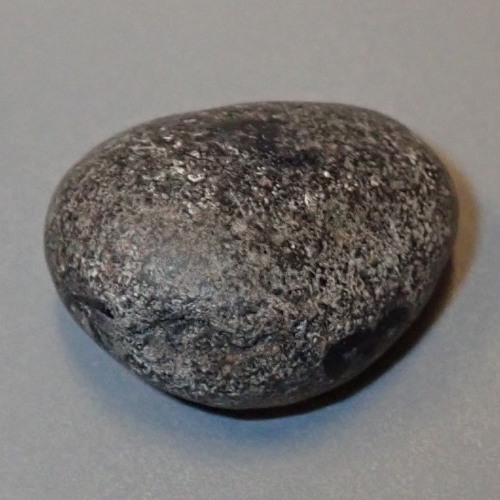}}&
      \raisebox{-1cm}{\includegraphics[width=2cm]{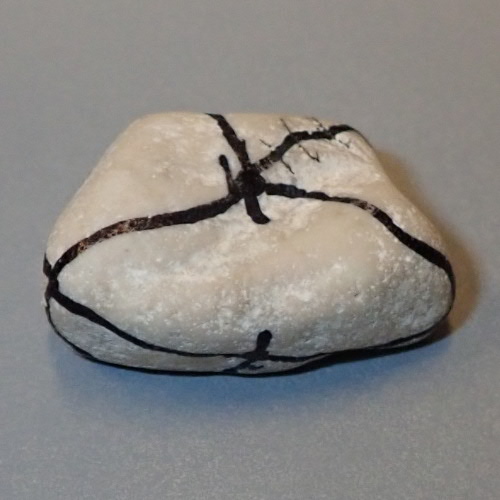}}\\
      4&
      \raisebox{-1cm}{\includegraphics[width=2cm]{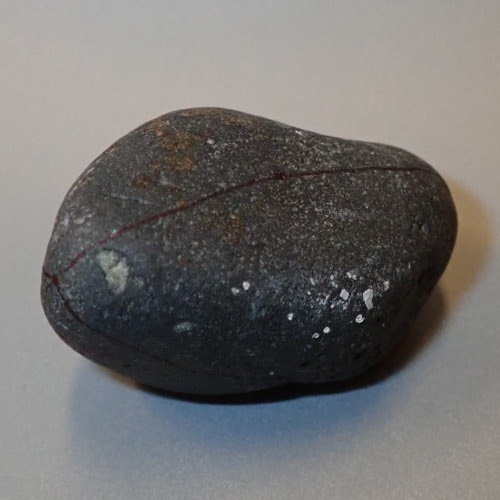}}&
      \raisebox{-1cm}{\includegraphics[width=2cm]{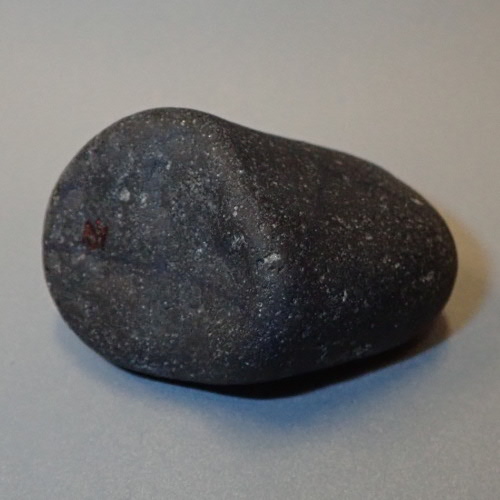}}&
      \raisebox{-1cm}{\includegraphics[width=2cm]{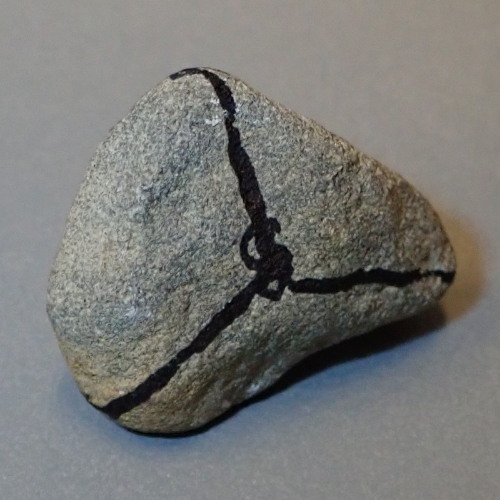}}&
      \raisebox{-1cm}{\includegraphics[width=2cm]{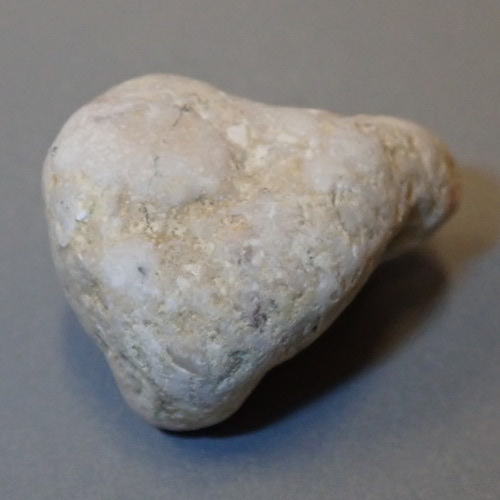}}\\
      5&
      \raisebox{-1cm}{\includegraphics[width=2cm]{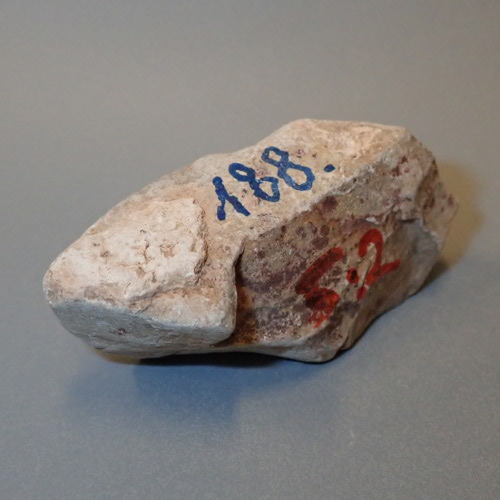}}&
      \raisebox{-1cm}{\includegraphics[width=2cm]{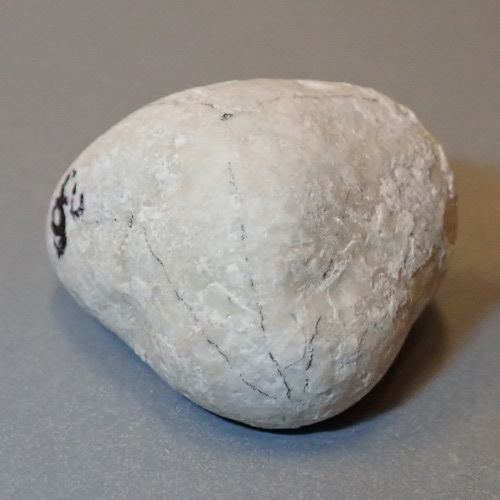}}&
      \raisebox{-1cm}{\includegraphics[width=2cm]{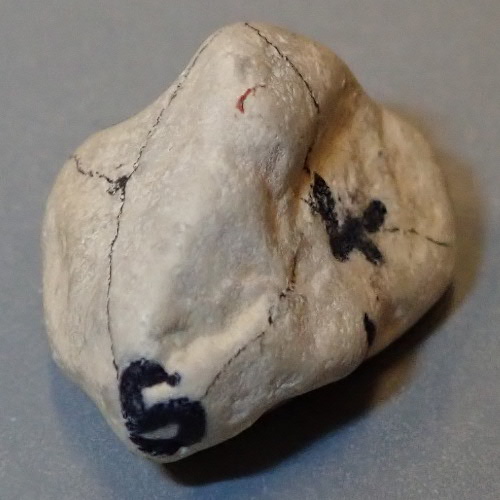}}&
      \raisebox{-1cm}{\includegraphics[width=2cm]{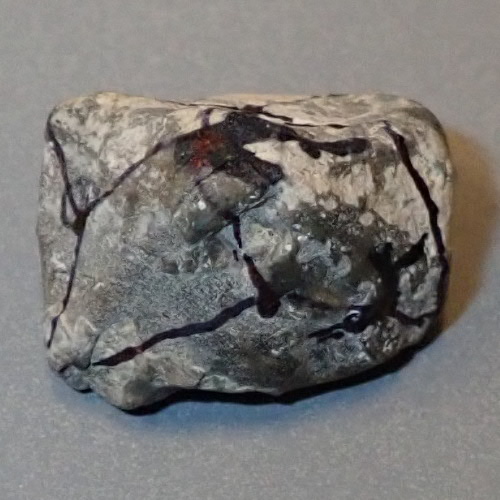}}
    \end{tabular}
    \caption{A pebble from each class in the highlighted region}
  \end{subtable}
\end{table*}

While the strong bias in the primary mechanical classification is most telling, it also has the drawback that it does not offer any clue on shapes \emph{inside} any of the primary classes, most notably, inside the dominant class $\{S,U\}=\{2,2\}$.  

\subsubsection{Higher order mechanical classifications}\label{sss:higher}

The trivial drawback of catalogs is that they do not distinguish between shapes in one class. This can be remedied by introducing finer, higher order catalogs. In case of mechanical descriptors this idea led to study, beyond the number, also the \emph{relative locations} of equilibrium points. We can have two alternative approaches to this task: we can either apply a natural, discrete decomposition to the \emph{range} of the distance function $r=r(\varphi, \theta)$, or, we can apply a natural, discrete decomposition of its \emph{domain}. To decompose the range, we will use saddle points and to decompose the domain, we will use isolated integral curves of the gradient. In the first case we arrive at a graph defining a natural \emph{hierarchy} among equilibria, in the second case we arrive at a graph defining a natural \emph{arrangement} among equilibria. We describe both graphs below.
 
The hierarchy among equilibria is based on the value of the radial distance function, and this information is carried by the \emph{Reeb\=/graph}~$R(K)$ associated with the body~$K$~\cite{arnold_counting}. Each point of an edge in the Reeb\=/graph~$R(K)$ corresponds to a connected component of the level set $r_K=\text{constant}$. We will define the Reeb\=/graph more rigorously for smooth functions in subsection~\ref{sec:reeb}, where we also show how the ellipsoid of Eq.~\ref{eq:ellipsoid} is degenerate due to its symmetry. A similar issue arises with the regular tetrahedron, an issue we will discuss in subsection~\ref{sec:poly_reeb} along with the definition of the Reeb\=/graph for polyhedra. Slightly moving the reference point~\(o\) off all 3~symmetry planes of the ellipsoid results in the non\=/degenerate Reeb\=/graph shown in Figure~\ref{fig:ell_reeb}. However, this is not the only Reeb\=/graph in the primary class \(\{2,2\}\). The number $R(S,U)$ of distinct Reeb\=/graphs in each primary class grows exponentially~\cite{arnold_counting, nicolaescu_counting}. Reeb\=/graphs define a finite, complete natural catalog inside any primary equilibrium class and of course they define an infinite catalog if we consider all primary classes. We call this the \emph{R\=/secondary mechanical classification} scheme. Whether and to what extent R\=/secondary classification is biased, we will discuss below.

The information on the spatial arrangement of equilibria can be described by using the isolated integral curves of the gradient vector field~$\nabla r_K$~\cite{guckenheimerholmes}, which define, as edges,  the so\=/called Morse\==Smale graph~$M(K)$ associated with the body~$K$~\cite{domokos_morse_smale}. The vertices of the Morse\==Smale graph are the static equilibrium points. Morse\==Smale graphs have various equivalent representations, to which we return in subsection~\ref{sec:ms} where we define the Morse\==Smale graph for smooth functions. Each of these representations can be enumerated and thus can be associated with an integer label, the so\=/called canonical code~\cite{babai}. A polyhedral version of the Morse\==Smale graph also exists, which well be discussed in subsection~\ref{sec:poly_ms}.

The Morse\==Smale graph belonging to the ellipsoid in the previous example is shown in Figure~\ref{fig:ell_ms}. However, this is not the only possible Morse\==Smale graph in the primary class $\{S,U\}=\{2,2\}$. The number~$M(S,U)$ of distinct Morse\==Smale graphs in each primary class grows approximately with the exponent $p = S + U$~\cite{kapolnai}. Similarly to Reeb\=/graphs, Morse\==Smale graphs define a finite, complete natural catalog inside any primary equilibrium class and they define an infinite catalog if we consider all primary classes. We call this the \emph{M\=/secondary mechanical classification} scheme. Similarly to Reeb\=/graphs, we will discuss whether the M\=/secondary classification is biased.
 
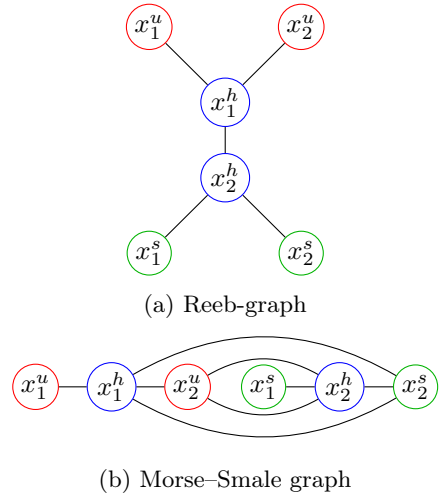
\begin{figure}[!ht]
  \begin{subfigure}{\columnwidth}
    \centering
    \begin{tikzpicture}[baseline=1.5cm]
      \node [circle, draw=red, fill=white, inner sep=1pt] (U1) at (0,3) {$x^u_1$};
      \node [circle, draw=red, fill=white, inner sep=1pt] (U2) at (2,3) {$x^u_2$};
      \node [circle, draw=blue, fill=white, inner sep=1pt] (H1) at (1,2) {$x^h_1$};
      \node [circle, draw=blue, fill=white, inner sep=1pt] (H2) at (1,1) {$x^h_2$};
      \node [circle, draw=green!70!black, fill=white, inner sep=1pt] (S1) at (0,0) {$x^s_1$};
      \node [circle, draw=green!70!black, fill=white, inner sep=1pt] (S2) at (2,0) {$x^s_2$};
      \draw (H1) to (U1) (H1) to (U2) (H2) to (H1) (S1) to (H2) (S2) to (H2);
    \end{tikzpicture}
    \caption{Reeb\=/graph}\label{fig:ell_reeb}
  \end{subfigure}\\
  \begin{subfigure}{\columnwidth}
    \centering
    \begin{tikzpicture}[baseline=-0.5]
      \node[circle, draw=red, fill=white, inner sep=1pt] (U1) at (0,0) {$x^u_1$};
      \node[circle, draw=blue, fill=white, inner sep=1pt] (H1) at (1,0) {$x^h_1$};
      \node[circle, draw=red, fill=white, inner sep=1pt] (U2) at (2,0) {$x^u_2$};
      \node[circle, draw=green!70!black, fill=white, inner sep=1pt] (S1) at (3,0) {$x^s_1$};
      \node[circle, draw=blue, fill=white, inner sep=1pt] (H2) at (4,0) {$x^h_2$};
      \node[circle, draw=green!70!black, fill=white, inner sep=1pt] (S2) at (5,0) {$x^s_2$};
      \draw (H1) to[bend left] (S2) (H1) to[bend right] (S2) (H1) to (U1) (H1) to (U2);
      \draw (H2) to (S1) (H2) to (S2) (H2) to[bend right] (U2) (H2) to[bend left] (U2);
    \end{tikzpicture}
    \caption{Morse\==Smale graph}\label{fig:ell_ms}
  \end{subfigure}
  \caption{Secondary mechanical descriptors associated with the tri\=/axial ellipsoid
  given in Eq.~\ref{eq:ellipsoid} (a) Reeb\=/graph with center of mass~\(o\) slightly offset from symmetry planes (b) Morse\==Smale graph, center of mass~\(o\) can be located either at the intersection of symmetry planes or with small offset}\label{fig:ms_and_reeb}
\end{figure}

These R\=/secondary and M\=/secondary  classification schemes are independent in the sense that identifying the class of a body in one system  does not locate it in the other system. To bridge the gap we will introduce the \emph{master graph}~$G(K)$ associated with the body~$K$, which carries information on both the integral curves and the level sets of the radial distance function and from which both the M\=/secondary and the R\=/secondary classification can be obtained. The master graph is going to be defined separately for smooth surfaces in subsection~\ref{sec:master} and for polyhedra in subsection~\ref{sec:poly_master}. We will call the scheme defined by the master graph the \emph{tertiary mechanical classification}, defining up to $M(S,U) \cdot R(S,U)$ tertiary classes in the primary class $\{S,U\}$. We will denote their exact number by \(G(S,U)\). Inside any primary class $\{S,U\}$, the tertiary scheme defines a natural, finite catalog. However, unlike any previous catalog, the tertiary classification is incomplete as there exists pairs of Reeb\=/graphs and Morse\==Smale graphs in the same primary class which can not belong to the same object. We will illustrate the tertiary classification scheme on the primary class $\{S,U\}=\{2,2\}$ and explain its significance. In particular, in Section~\ref{sec:catalog} we will prove

\begin{lemma}\label{lem:cardinality}
  \begin{displaymath}
    G(2,2)=3
  \end{displaymath}
\end{lemma}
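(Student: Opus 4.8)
The plan is to enumerate, in the primary class $\{S,U\}=\{2,2\}$ (so $H=S+U-2=2$), all combinatorially possible master graphs $G(K)$, and then show that exactly three of them are realizable by an actual convex body, with the remaining combinatorial types excluded on topological or geometric grounds. The starting point is that the master graph carries both the Reeb-graph $R(K)$ and the Morse--Smale graph $M(K)$; since $M(2,2)$ and $R(2,2)$ are each small finite sets (and, by Figures~\ref{fig:ell_reeb}--\ref{fig:ell_ms}, nonempty), the crude bound $G(2,2)\le M(2,2)\cdot R(2,2)$ already makes the problem finite. So the real content is (i) pinning down $M(2,2)$ and $R(2,2)$ exactly, (ii) deciding which pairs $(M,R)$ are \emph{compatible} in the sense of arising from a common master graph, and (iii) among the compatible pairs, deciding realizability.

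First I would work out $R(2,2)$. A Reeb-graph here has $2$ leaves of minimum type ($x^s_1,x^s_2$), $2$ leaves of maximum type ($x^u_1,x^u_2$), and $2$ interior degree-$3$ vertices (the saddles $x^h_1,x^h_2$), and it must be a tree (the surface is a sphere, so $H_1=0$, hence the Reeb-graph of $r_K$ on $S^2$ is a tree). A tree with $4$ leaves and $2$ trivalent internal nodes has a unique shape: the two internal nodes are adjacent, and each carries two of the leaves. The only freedom is how the four leaves (two "$s$" and two "$u$") are distributed onto the two internal nodes up to the symmetry swapping the nodes: either each saddle joins one min and one max (the "generic ellipsoid" graph of Figure~\ref{fig:ell_reeb}, which one checks is forced by the ordering $u>h>h>s$ on a convex body since a max cannot sit directly below a saddle), or one saddle carries both maxima and the other both minima. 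A short argument using the value ordering (on a convex pebble the two maxima are the two highest critical values, the two minima the two lowest) should show the second option cannot occur with a \emph{valid} height ordering, leaving $R(2,2)=1$ — consistent with the statement in the text that only one Reeb-graph appears in class $\{2,2\}$ among the scanned pebbles, and I would expect it to be provably the only one.

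Next I would treat $M(2,2)$. The Morse--Smale graph on $S^2$ is a quadrangulation-type structure: its faces correspond to the quadrilateral cells of the Morse--Smale complex, each bounded by an alternating cycle max--saddle--min--saddle, and Euler's formula with $V=S+U+H=6$, $F$ = number of cells, $E=2F$ (each cell is a quadrangle, each edge on two cells) gives $F=4$. One then enumerates, up to orientation-preserving homeomorphism of $S^2$, the ways to glue four quadrilateral cells with the prescribed vertex types so that every saddle has exactly two stable and two unstable separatrices; this is a small finite check (closely parallel to the enumeration in \cite{kapolnai, domokos_morse_smale}), and I expect it to yield a short list — plausibly again essentially one graph, matching the observation that only one Morse--Smale graph was seen in class $\{2,2\}$. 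The master graph $G(K)$ is then obtained by overlaying the level-set data on the Morse--Smale cell structure; the count $G(2,2)$ is the number of distinct such overlays that (a) project to a legal $(M,R)$ pair and (b) are realized by some convex surface. I would exhibit three explicit pebble-like surfaces (e.g. perturbations of the tri-axial ellipsoid with the reference point moved in different ways relative to the principal planes, as hinted for Figure~\ref{fig:ell_reeb}) realizing three distinct master graphs, and then show no fourth combinatorial type survives the compatibility-plus-realizability filter.

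The main obstacle, I expect, is step (iii): realizability. Ruling out a combinatorial master graph requires more than Euler characteristic bookkeeping — one needs a genuinely geometric obstruction showing that a particular arrangement of separatrices together with a particular nesting of level sets cannot be simultaneously metrically realized on a convex body (this is exactly the source of the \emph{incompleteness} of the tertiary catalog mentioned just before the Lemma). I would try to encode the obstruction as a consistency condition between the cyclic order in which separatrices emanate from each saddle (the Morse--Smale data) and the order of critical values along each Reeb-edge (the Reeb data), and show that for the excluded types these two orders are forced to conflict. Producing the three positive examples is comparatively routine — small explicit perturbations with a numerically verified equilibrium count — so the write-up weight falls on the finite combinatorial enumeration and on the handful of non-realizability arguments; I would organize Section~\ref{sec:catalog} as: enumerate candidates, then kill the impossible ones, then list the survivors with realizing bodies.
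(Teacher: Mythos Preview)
Your high-level plan (enumerate secondary classes, check compatibility, realize the survivors) is exactly what the paper does, but the concrete enumeration you sketch is wrong in a way that makes the proposal internally inconsistent. You claim $R(2,2)=1$ and suggest $M(2,2)$ is ``plausibly again essentially one graph''; if that were so, the crude bound $G(2,2)\le M(2,2)\cdot R(2,2)$ would give $G(2,2)\le 1$, flatly contradicting the statement you are trying to prove. In fact $R(2,2)=2$ and $M(2,2)=2$. Your argument ruling out the second Reeb tree is mistaken on two counts: first, you have the ellipsoid case backwards (in Figure~\ref{fig:ell_reeb} the upper saddle $x^h_1$ is adjacent to \emph{both} maxima and the lower saddle $x^h_2$ to \emph{both} minima, not one-and-one); second, the other Reeb tree---each saddle adjacent to one minimum, one maximum, and the other saddle---is perfectly consistent with the value ordering (nothing forces both saddle values to lie strictly between all minima and all maxima). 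Likewise there are two non-isomorphic Morse--Smale graphs in $\{2,2\}$: one with no parallel edges, and one in which a saddle is joined twice to the same extremum; your quadrangulation count $F=4$ is correct, but the gluing admits both types.

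With the correct counts the grid is $2\times 2=4$ candidate tertiary cells, and the task is to show exactly one is empty. The paper's exclusion argument is not a geometric realizability obstruction of the kind you anticipate in step~(iii); it is a purely combinatorial incompatibility between one Reeb type and one Morse--Smale type. Concretely: for the ``one-and-one'' Reeb tree, the lower saddle $x^h_1$ has only a single neighbour at smaller $r_K$ (the minimum $x^s_2$), so the saddle contour through $x^h_1$ bounds a region whose sole sub-$r_K(x^h_1)$ component contains $x^s_2$ alone; both descending separatrices from $x^h_1$ must therefore land at $x^s_2$, forcing parallel edges in the Morse--Smale graph and excluding the no-parallel-edge MS type. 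For the lower bound the paper exhibits explicit polyhedra (a $\{2,2\}$ tetrahedron, plus two variants of a modified Conway--Guy body) in each of the three surviving cells. So rather than looking for a metric obstruction on smooth convex bodies, you should reorganize around the $2\times 2$ table and prove the single Reeb/Morse--Smale incompatibility.
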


For an example see Figure~\ref{fig:subdivision} showing the possible secondary and tertiary classes in the primary class with 2~stable and 2~unstable points. A more detailed view of this primary class and its higher order subclasses will be presented later in Table~\ref{tbl:twotwo}.

\begin{figure*}[!ht]
  \centering
  \begin{tikzpicture}
    \colorlet{stable}{green!70!black};
    \colorlet{saddle}{blue};
    \colorlet{unstable}{red};
    \colorlet{inter}{orange};
    \draw[step=1cm, gray] (0,0) grid (4,-4);
    \foreach \s in {1,...,4} {
      \foreach \u in {1,...,4} {
        \draw (\u-0.5,-\s+0.5) node {$\{\s,\u\}$};
      }
    }
    \draw[ultra thick] (1,-1) rectangle (2,-2);
    \coordinate (p-top-right) at (2,-1);
    \coordinate (p-bottom-right) at (2,-2);

    \begin{scope}[xshift=6.5cm, rotate=-22.5]
      \draw[gray] (-0.5,1) coordinate (r-top-left) --
      (0.5,1) coordinate (r-top-right) --
      (0.5,-1) coordinate (r-bottom-right) --
      (-0.5,-1) coordinate (r-bottom-left) --
      cycle;
      \draw[gray] (-0.5,0) -- (0.5,0);
      
      \draw[thin] (-0.25,0.8) -- (0.25,0.4);
      \draw[thin] (-0.25,0.6) -- (0.25,0.2);
      \draw[thin] (0,0.6) -- (0,0.4);
      \fill[unstable] (-0.25,0.8) circle (0.05);
      \fill[unstable] (-0.25,0.6) circle (0.05);
      \fill[saddle] (0,0.6) circle (0.05);
      \fill[saddle] (0,0.4) circle (0.05);
      \fill[stable] (0.25,0.4) circle (0.05);
      \fill[stable] (0.25,0.2) circle (0.05);
      
      \draw[thin] (-0.25,-0.2) -- (0,-0.4) -- (0.25,-0.2);
      \draw[thin] (-0.25,-0.8) -- (0,-0.6) -- (0.25,-0.8);
      \draw[thin] (0,-0.4) -- (0,-0.6);
      \fill[unstable] (-0.25,-0.2) circle (0.05);
      \fill[unstable] (0.25,-0.2) circle (0.05);
      \fill[saddle] (0,-0.4) circle (0.05);
      \fill[saddle] (0,-0.6) circle (0.05);
      \fill[stable] (-0.25,-0.8) circle (0.05);
      \fill[stable] (0.25,-0.8) circle (0.05);

      \draw[ultra thick] (-0.5,0) rectangle (0.5,-1);
    \end{scope}

    \begin{scope}[xshift=6.5cm, yshift=-3cm, rotate=22.5]
      \draw[gray] (-0.5,1) coordinate (m-top-left) --
      (0.5,1) coordinate (m-top-right) --
      (0.5,-1) coordinate (m-bottom-right) --
      (-0.5,-1) coordinate (m-bottom-left) --
      cycle;
      \draw[gray] (-0.5,0) -- (0.5,0);

      \draw[thin] (0,0.9) -- (0,0.58);
      \draw[thin] (0,0.42) -- (0,0.1);
      \draw[thin, bend left=90] (0,0.74) to (0,0.1);
      \draw[thin, bend left=90] (0,0.58) to (0,0.26);
      \draw[thin, bend right=90] (0,0.9) to (0,0.26);
      \draw[thin, bend right=90] (0,0.74) to (0,0.42);
      \fill[stable] (0,0.9) circle (0.05);
      \fill[saddle] (0,0.74) circle (0.05);
      \fill[stable] (0,0.58) circle (0.05);
      \fill[unstable] (0,0.42) circle (0.05);
      \fill[saddle] (0,0.26) circle (0.05);
      \fill[unstable] (0,0.1) circle (0.05);

      \draw[thin] (0,-0.1) -- (0,-0.42);
      \draw[thin] (0,-0.58) -- (0,-0.9);
      \draw[thin, bend right=90] (0,-0.1) to (0,-0.74);
      \draw[thin, bend right=90] (0,-0.26) to (0,-0.58);
      \draw[thin, bend left=90] (0,-0.1) to (0,-0.74);
      \draw[thin, bend left=90] (0,-0.26) to (0,-0.58);
      \fill[stable] (0,-0.1) circle (0.05);
      \fill[saddle] (0,-0.26) circle (0.05);
      \fill[stable] (0,-0.42) circle (0.05);
      \fill[unstable] (0,-0.58) circle (0.05);
      \fill[saddle] (0,-0.74) circle (0.05);
      \fill[unstable] (0,-0.9) circle (0.05);

      \draw[ultra thick] (-0.5,0) rectangle (0.5,1);
    \end{scope}
    \begin{scope}[xshift=10.5cm, yshift=-1.5cm, rotate=45]
      \draw[gray] (-1,1) coordinate (t-top-left) --
      (1,1) coordinate (t-top-right) --
      (1,-1) coordinate (t-bottom-right) --
      (-1,-1) coordinate (t-bottom-left) --
      cycle;
      \draw[gray] (-1,0) -- (1,0) (0,-1) -- (0,1);
      \draw (0.5,0.5) node {$\emptyset$};

      \draw[thin] (-0.5,0.9) -- (-0.5,0.58);
      \draw[thin] (-0.5,0.42) -- (-0.5,0.1);
      \draw[thin, bend left=90] (-0.5,0.74) to (-0.5,0.1);
      \draw[thin, bend left=90] (-0.5,0.58) to (-0.5,0.26);
      \draw[thin, bend right=90] (-0.5,0.9) to (-0.5,0.26);
      \draw[thin, bend right=90] (-0.5,0.74) to (-0.5,0.42);
      \draw[thin, gray] (-0.67,0.74) -- (-0.5,0.74);
      \draw[thin, gray, bend right=45] (-0.6,0.58) to (-0.5,0.26);
      \draw[thin, gray, bend right=45] (-0.4,0.42) to (-0.5,0.74);
      \draw[thin, gray] (-0.33,0.26) -- (-0.5,0.26);
      \fill[stable] (-0.5,0.9) circle (0.05);
      \fill[saddle] (-0.5,0.74) circle (0.05);
      \fill[stable] (-0.5,0.58) circle (0.05);
      \fill[unstable] (-0.5,0.42) circle (0.05);
      \fill[saddle] (-0.5,0.26) circle (0.05);
      \fill[unstable] (-0.5,0.1) circle (0.05);
      \fill[inter] (-0.67,0.74) circle (0.04);
      \fill[inter] (-0.6,0.58) circle (0.04);
      \fill[inter] (-0.4,0.42) circle (0.04);
      \fill[inter] (-0.33,0.26) circle (0.04);

      \begin{scope}[yshift=-1cm]
        \draw[thin] (-0.5,0.9) -- (-0.5,0.58);
        \draw[thin] (-0.5,0.42) -- (-0.5,0.1);
        \draw[thin, bend right=90] (-0.5,0.26) to (-0.5,0.9);
        \draw[thin, bend right=90] (-0.5,0.42) to (-0.5,0.74);
        \draw[thin, bend right=90] (-0.5,0.9) to (-0.5,0.26);
        \draw[thin, bend right=90] (-0.5,0.74) to (-0.5,0.42);
        \draw[thin, gray] (-0.67,0.74) -- (-0.33,0.74);
        \draw[thin, gray, bend right=45] (-0.6,0.58) to (-0.5,0.26);
        \draw[thin, gray, bend left=45] (-0.4,0.58) to (-0.5,0.26);
        \fill[stable] (-0.5,0.9) circle (0.05);
        \fill[saddle] (-0.5,0.74) circle (0.05);
        \fill[stable] (-0.5,0.58) circle (0.05);
        \fill[unstable] (-0.5,0.42) circle (0.05);
        \fill[saddle] (-0.5,0.26) circle (0.05);
        \fill[unstable] (-0.5,0.1) circle (0.05);
        \fill[inter] (-0.67,0.74) circle (0.04);
        \fill[inter] (-0.6,0.58) circle (0.04);
        \fill[inter] (-0.4,0.58) circle (0.04);
        \fill[inter] (-0.33,0.74) circle (0.04);
      \end{scope}

      \draw[thin] (0.5,-0.1) -- (0.5,-0.42);
      \draw[thin] (0.5,-0.58) -- (0.5,-0.9);
      \draw[thin, bend right=90] (0.5,-0.1) to (0.5,-0.74) to cycle;
      \draw[thin, bend right=90] (0.5,-0.42) to (0.5,-0.58) to cycle;
      \draw[thin, gray, bend right=45] (0.5,-0.18) to (0.5,-0.74);
      \draw[thin, gray, bend left=45] (0.5,-0.26) to (0.5,-0.66);
      \fill[stable] (0.5,-0.1) circle (0.05);
      \fill[saddle] (0.5,-0.26) circle (0.05);
      \fill[stable] (0.5,-0.42) circle (0.05);
      \fill[unstable] (0.5,-0.58) circle (0.05);
      \fill[saddle] (0.5,-0.74) circle (0.05);
      \fill[unstable] (0.5,-0.9) circle (0.05);
      \fill[inter] (0.5,-0.18) circle (0.04);
      \fill[inter] (0.5,-0.66) circle (0.04);

      \draw[ultra thick] (-1,0) rectangle (0,1);
    \end{scope}
    \begin{scope}[dashed]
      \draw (p-top-right) -- (r-top-left);
      \draw (p-bottom-right) -- (r-bottom-left);
      \draw (p-top-right) -- (m-top-left);
      \draw (p-bottom-right) -- (m-bottom-left);
      \draw (r-top-right) -- (t-top-right);
      \draw (r-bottom-right) -- (t-top-left);
      \draw (m-top-right) -- (t-top-left);
      \draw (m-bottom-right) -- (t-bottom-left);
    \end{scope}
    \draw (2,1.5) node {primary};
    \draw (6.5,1.5) node {R\=/secondary};
    \draw (6.5,-4.5) node {M\=/secondary};
    \draw (10.5,1.5) node {tertiary};
    \begin{scope}[very thick, gray]
      \draw (4.5,2) -- (4.5,-5);
      \draw (8.5,2) -- (8.5,-5);
      \draw (4.5,-1.5) -- (8.5,-1.5);
    \end{scope}
  \end{tikzpicture}
  \caption{Division of the primary class \(\{2,2\}\) into higher order subclasses. The class of the ellipsoid used in previous examples is highlighted under each classification scheme}\label{fig:subdivision}
\end{figure*}
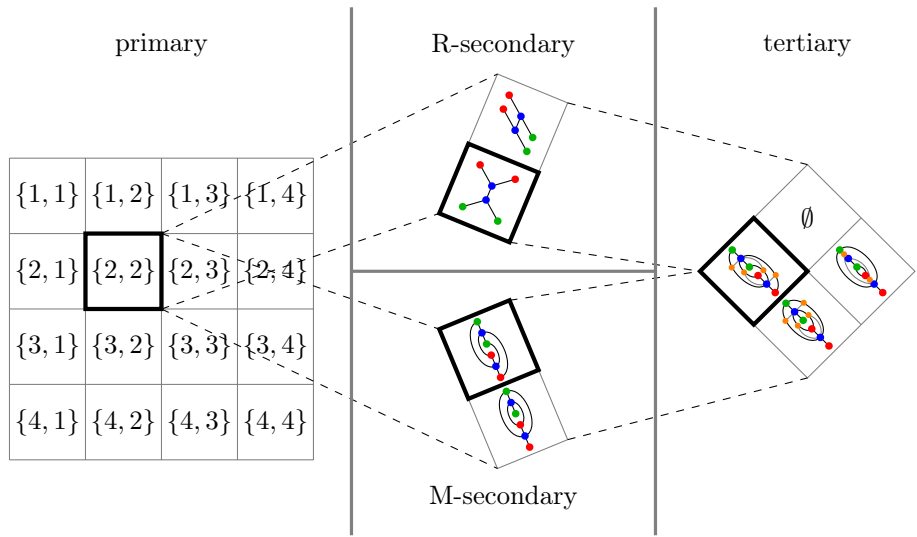

While these higher order classifications are most promising, the main obstacle in their application was that they cannot be reliably identified by hand experiments and there was no reliable algorithm and no computer\=/based tool to extract them from natural shapes. The first main result presented in this article is an algorithm computing all three previously mentioned graphs using the 3D scan of the particle.

We have investigated the primary class with the highest number of naturally occurring pebbles, the class of bodies with 2 stable and 2 unstable points. Every such pebble in our collection fell into the same tertiary class. This suggests that the natural abrasion process is heavily skewed towards certain classes, just like in the case of the primary classes.

\section{Classification of smooth, convex shapes}\label{sec:classification}

\subsection{Primary equilibrium classes}\label{sec:equilibria}

In subsection~\ref{ss:mechanical} we have introduced the radial distance function~$r_K(\varphi,\theta):S^2\to\mathbb{R}$, and also stated how its extrema correspond to equilibria of the convex body $K$. Next we examine the case where $r_K$ is at least twice continuously differentiable. In this case the non\=/degenerate \emph{critical point~$x\in S^2$} of the gradient field $\nabla r_K$ corresponds to a static equilibrium point of the body: a minimum, maximum or saddle point corresponds to a stable, unstable or saddle\=/type equilibrium point, respectively~\cite{varkonyi_gomboc}. Non\=/critical points are called \emph{regular}. We denote the type of equilibrium point with a superscript when relevant, $x^s, x^u$ and $x^h\in S^2$ are stable, unstable and saddle points, respectively. The capital letters $S, U$ and $H\in\mathbb{N}$ still represent the total number of these equilibria for $K$.

The smooth function~$r_K$ is called \emph{Morse} if all its critical points are non\=/degenerate. In this case the numbers of different equilibria are all finite because the function's domain is compact. The Poincaré\==Hopf theorem establishes the relationship between these values: $H=S+U-2$. If the \emph{critical value} $r_K(x)$ for every critical point~$x$ is distinct too then the function is called an \emph{excellent Morse function}~\cite{nicolaescu_counting}.

For example, let us return to the ellipsoid introduced in Eq.~\ref{eq:ellipsoid}. All of its equilibria are non\=/degenerate, but the ones of the same type are at equal distances from the center of gravity. This makes it a Morse function but not an excellent one. Choosing the value of $o$ such that none of its coordinates is zero breaks the symmetry, making $r_\text{ell}$ an excellent Morse function. Figure~\ref{fig:ell_equilibria} shows the visible equilibrium points for $o=(0.2,0.1,0.3)$. An additional minimum, maximum and saddle point is hidden behind the ellipsoid.

\begin{figure}[!ht]
  \centering
  \begin{tikzpicture}
    \fill[gray] (0, 0, -3.0) ellipse (0.0 and 0.0);
\fill[gray] (0, 0, -2.9) ellipse (0.5120763831912405 and 0.25603819159562025);
\fill[gray] (0, 0, -2.8) ellipse (0.7180219742846008 and 0.3590109871423004);
\fill[gray] (0, 0, -2.7) ellipse (0.8717797887081344 and 0.4358898943540672);
\fill[gray] (0, 0, -2.6) ellipse (0.9977753031397176 and 0.4988876515698588);
\fill[gray] (0, 0, -2.5) ellipse (1.1055415967851334 and 0.5527707983925667);
\fill[gray] (0, 0, -2.4) ellipse (1.2000000000000002 and 0.6000000000000001);
\fill[gray] (0, 0, -2.3) ellipse (1.284090685617215 and 0.6420453428086075);
\fill[gray] (0, 0, -2.2) ellipse (1.359738536958076 and 0.679869268479038);
\fill[gray] (0, 0, -2.1) ellipse (1.42828568570857 and 0.714142842854285);
\fill[gray] (0, 0, -2.0) ellipse (1.4907119849998598 and 0.7453559924999299);
\fill[gray] (0, 0, -1.9) ellipse (1.5477582354991866 and 0.7738791177495933);
\fill[gray] (0, 0, -1.7999999999999998) ellipse (1.6 and 0.8);
\fill[gray] (0, 0, -1.7) ellipse (1.6478942792411033 and 0.8239471396205517);
\fill[gray] (0, 0, -1.5999999999999999) ellipse (1.6918103387266028 and 0.8459051693633014);
\fill[gray] (0, 0, -1.5) ellipse (1.7320508075688772 and 0.8660254037844386);
\fill[gray] (0, 0, -1.4) ellipse (1.7688665548562132 and 0.8844332774281066);
\fill[gray] (0, 0, -1.2999999999999998) ellipse (1.80246744461277 and 0.901233722306385);
\fill[gray] (0, 0, -1.2) ellipse (1.8330302779823362 and 0.9165151389911681);
\fill[gray] (0, 0, -1.0999999999999999) ellipse (1.8607047649270483 and 0.9303523824635241);
\fill[gray] (0, 0, -1.0) ellipse (1.8856180831641267 and 0.9428090415820634);
\fill[gray] (0, 0, -0.8999999999999999) ellipse (1.9078784028338913 and 0.9539392014169457);
\fill[gray] (0, 0, -0.7999999999999998) ellipse (1.9275776393067947 and 0.9637888196533974);
\fill[gray] (0, 0, -0.6999999999999997) ellipse (1.9447936194419762 and 0.9723968097209881);
\fill[gray] (0, 0, -0.5999999999999996) ellipse (1.9595917942265426 and 0.9797958971132713);
\fill[gray] (0, 0, -0.5) ellipse (1.9720265943665387 and 0.9860132971832694);
\fill[gray] (0, 0, -0.3999999999999999) ellipse (1.9821424996424675 and 0.9910712498212337);
\fill[gray] (0, 0, -0.2999999999999998) ellipse (1.98997487421324 and 0.99498743710662);
\fill[gray] (0, 0, -0.19999999999999973) ellipse (1.9955506062794355 and 0.9977753031397177);
\fill[gray] (0, 0, -0.09999999999999964) ellipse (1.9988885800753267 and 0.9994442900376633);
\fill[gray] (0, 0, 0.0) ellipse (2.0 and 1.0);
\fill[gray] (0, 0, 0.10000000000000009) ellipse (1.9988885800753267 and 0.9994442900376633);
\fill[gray] (0, 0, 0.20000000000000018) ellipse (1.9955506062794353 and 0.9977753031397176);
\fill[gray] (0, 0, 0.30000000000000027) ellipse (1.98997487421324 and 0.99498743710662);
\fill[gray] (0, 0, 0.40000000000000036) ellipse (1.9821424996424675 and 0.9910712498212337);
\fill[gray] (0, 0, 0.5) ellipse (1.9720265943665387 and 0.9860132971832694);
\fill[gray] (0, 0, 0.6000000000000001) ellipse (1.9595917942265426 and 0.9797958971132713);
\fill[gray] (0, 0, 0.7000000000000002) ellipse (1.9447936194419762 and 0.9723968097209881);
\fill[gray] (0, 0, 0.8000000000000003) ellipse (1.9275776393067947 and 0.9637888196533974);
\fill[gray] (0, 0, 0.9000000000000004) ellipse (1.9078784028338913 and 0.9539392014169457);
\fill[gray] (0, 0, 1.0) ellipse (1.8856180831641267 and 0.9428090415820634);
\fill[gray] (0, 0, 1.1000000000000005) ellipse (1.8607047649270483 and 0.9303523824635241);
\fill[gray] (0, 0, 1.2000000000000002) ellipse (1.833030277982336 and 0.916515138991168);
\fill[gray] (0, 0, 1.2999999999999998) ellipse (1.80246744461277 and 0.901233722306385);
\fill[gray] (0, 0, 1.4000000000000004) ellipse (1.7688665548562132 and 0.8844332774281066);
\fill[gray] (0, 0, 1.5) ellipse (1.7320508075688772 and 0.8660254037844386);
\fill[gray] (0, 0, 1.6000000000000005) ellipse (1.6918103387266024 and 0.8459051693633012);
\fill[gray] (0, 0, 1.7000000000000002) ellipse (1.6478942792411033 and 0.8239471396205517);
\fill[gray] (0, 0, 1.8000000000000007) ellipse (1.5999999999999996 and 0.7999999999999998);
\fill[gray] (0, 0, 1.9000000000000004) ellipse (1.5477582354991866 and 0.7738791177495933);
\fill[gray] (0, 0, 2.0) ellipse (1.4907119849998598 and 0.7453559924999299);
\fill[gray] (0, 0, 2.1000000000000005) ellipse (1.4282856857085697 and 0.7141428428542849);
\fill[gray] (0, 0, 2.2) ellipse (1.359738536958076 and 0.679869268479038);
\fill[gray] (0, 0, 2.3000000000000007) ellipse (1.2840906856172143 and 0.6420453428086071);
\fill[gray] (0, 0, 2.4000000000000004) ellipse (1.1999999999999997 and 0.5999999999999999);
\fill[gray] (0, 0, 2.5) ellipse (1.1055415967851334 and 0.5527707983925667);
\fill[gray] (0, 0, 2.6000000000000005) ellipse (0.9977753031397172 and 0.4988876515698586);
\fill[gray] (0, 0, 2.7) ellipse (0.8717797887081344 and 0.4358898943540672);
\fill[gray] (0, 0, 2.8000000000000007) ellipse (0.7180219742845992 and 0.3590109871422996);
\fill[gray] (0, 0, 2.9000000000000004) ellipse (0.5120763831912397 and 0.25603819159561986);
\fill[gray] (0, 0, 3.0) ellipse (0.0 and 0.0);
    \input{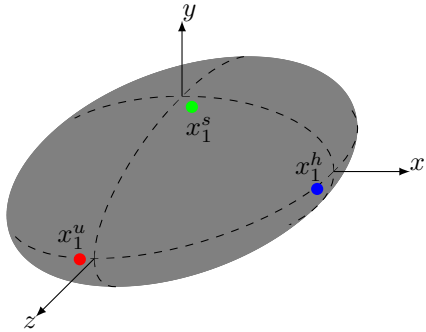}
    \begin{scope}
      \clip (-2,1) -- (2,-1) -- (2,1);
      \draw[dashed] ellipse (2 and 1);
    \end{scope}
    \begin{scope}[rotate around x=90]
      \clip (-2,3) -- (2,-3) -- (2,3);
      \draw[dashed] ellipse (2 and 3);
    \end{scope}
    \begin{scope}[rotate around y=-90]
      \clip (-3,1) -- (3,-1) -- (3,1);
      \draw[dashed] ellipse (3 and 1);
    \end{scope}
    \fill[green] (s2) circle (0.08);
    \draw ($(s2)+(0.1,-0.3)$) node {$x^s_1$};
    \fill[blue] (h2) circle (0.08);
    \draw ($(h2)+(-0.1,0.3)$) node {$x^h_1$};
    \fill[red] (u2) circle (0.08);
    \draw ($(u2)+(-0.1,0.3)$) node {$x^u_1$};
    \draw[-latex] (2,0,0) -- (3,0,0);
    \draw (3.1,0.1,0) node {$x$};
    \draw[-latex] (0,1,0) -- (0,2,0);
    \draw (0.1,2.1,0) node {$y$};
    \draw[-latex] (0,0,3) -- (0,0,5);
    \draw (0,0,5.2) node {$z$};
  \end{tikzpicture}
  \caption{All visible equilibria}\label{fig:ell_equilibria}
\end{figure}

The \emph{primary equilibrium class $\{i,j\}$} ($i,j=0,1,\dots$) contains all convex bodies with $S=i$ stable equilibria and $U=j$ unstable equilibria. Bodies with 1 stable point are called \emph{mono\=/stable}, bodies with 1 unstable point are called \emph{mono\=/unstable} and bodies in the class $\{1,1\}$ are called \emph{mono\=/monostatic}. Classifying an object whose radial distance function is Morse is trivial, but we will show in the next section that primary equilibrium classes are not limited to these.

Now we define two important tools~--~introduced in subsection~\ref{sss:higher} as second order mechanical descriptors~--~for Morse functions specifically, the Reeb\=/graph and the Morse\==Smale graph.

\subsection{Reeb\=/graph}\label{sec:reeb}

A \emph{level set} of a real valued function is the set of points where the function takes on a particular value ${d\in\mathbb{R}}$. In our case these are the points at distance~$d$ from the point~$o$. A level set of $r_K$ may consist of several closed curves, called \emph{contour lines}. Contour lines are unique at regular points. Contour lines at a stable or an unstable point consist of that single point. We will call a contour line of $r_K$ containing a saddle point~$x^s$ the \emph{saddle contour line of~$x^s$}. An excellent Morse function's saddle contour lines always contain a single saddle point due to the critical values being distinct. Therefore the saddle contour line of~$x^s$ has a single self\=/intersection at $x^s$, as shown in Figure~\ref{fig:ell_contour}.

\begin{figure}[!ht]
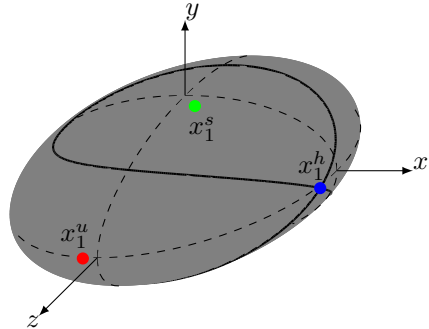

  \centering
  \begin{tikzpicture}
    \input{contours}
    \input{equilibria}
    \begin{scope}
      \clip (-2,1) -- (2,-1) -- (2,1);
      \draw[dashed] ellipse (2 and 1);
    \end{scope}
    \begin{scope}[rotate around x=90]
      \clip (-2,3) -- (2,-3) -- (2,3);
      \draw[dashed] ellipse (2 and 3);
    \end{scope}
    \begin{scope}[rotate around y=-90]
      \clip (-3,1) -- (3,-1) -- (3,1);
      \draw[dashed] ellipse (3 and 1);
    \end{scope}
    \fill[green] (s2) circle (0.08);
    \draw ($(s2)+(0.1,-0.3)$) node {$x^s_1$};
    \fill[blue] (h2) circle (0.08);
    \draw ($(h2)+(-0.1,0.3)$) node {$x^h_1$};
    \fill[red] (u2) circle (0.08);
    \draw ($(u2)+(-0.1,0.3)$) node {$x^u_1$};
    \draw[-latex] (2,0,0) -- (3,0,0);
    \draw (3.1,0.1,0) node {$x$};
    \draw[-latex] (0,1,0) -- (0,2,0);
    \draw (0.1,2.1,0) node {$y$};
    \draw[-latex] (0,0,3) -- (0,0,5);
    \draw (0,0,5.2) node {$z$};
  \end{tikzpicture}
  \caption{The saddle contour line of $x^h_1$}\label{fig:ell_contour}
\end{figure}

The number of contour lines is not constant throughout the codomain of $r_K$. New contour lines appear at critical values of minima, disappear at critical values of maxima, and split or merge at critical values of saddles. To capture this evolution of level sets, first an equivalence relation~$\sim_{r_K}$ is defined for two points $p, q\in S^2$ such that $p\sim_{r_K} q$ whenever $p$ and $q$ belong to the same contour line of $r_K$. The \emph{Reeb\=/graph of $r_K$} is then the quotient space $S^2/\sim_{r_K}$ equipped with the quotient topology. This graph's vertices correspond to the previously mentioned points of change in the number of contour lines: every leaf corresponds to a minimum or maximum, the rest of the vertices correspond to saddles. If $r_K$ is excellent Morse then its Reeb\=/graph is a tree with inner vertices of degree 3~\cite{arnold_counting}. Figure~\ref{fig:ell_reeb} shows the Reeb\=/graph of the ellipsoid in Figure~\ref{fig:ell_equilibria}.

Due to the one\=/to\=/one correspondence from its vertices to critical points, the Reeb\=/graph of an object also determines its primary equilibrium class. On the other hand, two objects in the same primary class might not have isomorphic Reeb\=/graphs. For example see the first row of Table~\ref{tbl:twotwo} showing the different Reeb\=/graphs in the primary equilibrium class  $\{2,2\}$. An \emph{R\=/secondary equilibrium class} contains all convex bodies with isomorphic Reeb graphs.

\subsection{Morse\==Smale graph}\label{sec:ms}

By definition at any regular point~$p$ the gradient vector~$\nabla r_K$ is non\=/zero. Following the vector an \emph{integral curve} $c:\mathbb{R}\to S^2$ is traced out, which is the longest possible curve through $p$ whose derivatives agree with the gradient~$\nabla r_K$~\cite{edelsbrunner_morse_smale}.

If $r_K$ is Morse, then every regular point belongs to one and only one integral curve, two integral curves are either disjoint or exactly the same. An integral curve starts at a critical point and ends at a critical point but does not contain them. The two critical points at the ends of an integral curve are distinct because the function's value is strictly ascending along the curve. The critical point with lower critical value is called the \emph{origin} while the other one is the \emph{destination} of the integral curve. The origin and destination is either minimum\==maximum, minimum\==saddle or saddle\==maximum. Figure~\ref{fig:ell_integral} gives an example for the first one.

\begin{figure}[!ht]
  \centering
  \begin{tikzpicture}
    \fill[gray] (0, 0, -3.0) ellipse (0.0 and 0.0);
\fill[gray] (0, 0, -2.9) ellipse (0.5120763831912405 and 0.25603819159562025);
\fill[gray] (0, 0, -2.8) ellipse (0.7180219742846008 and 0.3590109871423004);
\fill[gray] (0, 0, -2.7) ellipse (0.8717797887081344 and 0.4358898943540672);
\fill[gray] (0, 0, -2.6) ellipse (0.9977753031397176 and 0.4988876515698588);
\fill[gray] (0, 0, -2.5) ellipse (1.1055415967851334 and 0.5527707983925667);
\fill[gray] (0, 0, -2.4) ellipse (1.2000000000000002 and 0.6000000000000001);
\fill[gray] (0, 0, -2.3) ellipse (1.284090685617215 and 0.6420453428086075);
\fill[gray] (0, 0, -2.2) ellipse (1.359738536958076 and 0.679869268479038);
\fill[gray] (0, 0, -2.1) ellipse (1.42828568570857 and 0.714142842854285);
\fill[gray] (0, 0, -2.0) ellipse (1.4907119849998598 and 0.7453559924999299);
\fill[gray] (0, 0, -1.9) ellipse (1.5477582354991866 and 0.7738791177495933);
\fill[gray] (0, 0, -1.7999999999999998) ellipse (1.6 and 0.8);
\fill[gray] (0, 0, -1.7) ellipse (1.6478942792411033 and 0.8239471396205517);
\fill[gray] (0, 0, -1.5999999999999999) ellipse (1.6918103387266028 and 0.8459051693633014);
\fill[gray] (0, 0, -1.5) ellipse (1.7320508075688772 and 0.8660254037844386);
\fill[gray] (0, 0, -1.4) ellipse (1.7688665548562132 and 0.8844332774281066);
\fill[gray] (0, 0, -1.2999999999999998) ellipse (1.80246744461277 and 0.901233722306385);
\fill[gray] (0, 0, -1.2) ellipse (1.8330302779823362 and 0.9165151389911681);
\fill[gray] (0, 0, -1.0999999999999999) ellipse (1.8607047649270483 and 0.9303523824635241);
\fill[gray] (0, 0, -1.0) ellipse (1.8856180831641267 and 0.9428090415820634);
\fill[gray] (0, 0, -0.8999999999999999) ellipse (1.9078784028338913 and 0.9539392014169457);
\fill[gray] (0, 0, -0.7999999999999998) ellipse (1.9275776393067947 and 0.9637888196533974);
\fill[gray] (0, 0, -0.6999999999999997) ellipse (1.9447936194419762 and 0.9723968097209881);
\fill[gray] (0, 0, -0.5999999999999996) ellipse (1.9595917942265426 and 0.9797958971132713);
\fill[gray] (0, 0, -0.5) ellipse (1.9720265943665387 and 0.9860132971832694);
\fill[gray] (0, 0, -0.3999999999999999) ellipse (1.9821424996424675 and 0.9910712498212337);
\fill[gray] (0, 0, -0.2999999999999998) ellipse (1.98997487421324 and 0.99498743710662);
\fill[gray] (0, 0, -0.19999999999999973) ellipse (1.9955506062794355 and 0.9977753031397177);
\fill[gray] (0, 0, -0.09999999999999964) ellipse (1.9988885800753267 and 0.9994442900376633);
\fill[gray] (0, 0, 0.0) ellipse (2.0 and 1.0);
\fill[gray] (0, 0, 0.10000000000000009) ellipse (1.9988885800753267 and 0.9994442900376633);
\fill[gray] (0, 0, 0.20000000000000018) ellipse (1.9955506062794353 and 0.9977753031397176);
\fill[gray] (0, 0, 0.30000000000000027) ellipse (1.98997487421324 and 0.99498743710662);
\draw[thick] (0.257941, 0.985406, 0.333272) -- (0.059918, 0.993163, 0.338475);
\draw[thick] (0.059918, 0.993163, 0.338475) -- (0.041423, 0.993344, 0.339925);
\draw[thick] (0.041423, 0.993344, 0.339925) -- (-0.047634, 0.992900, 0.349627);
\draw[thick] (-0.047634, 0.992900, 0.349627) -- (-0.158284, 0.989265, 0.368548);
\fill[gray] (0, 0, 0.40000000000000036) ellipse (1.9821424996424675 and 0.9910712498212337);
\draw[thick] (-0.158284, 0.989265, 0.368548) -- (-0.297294, 0.979651, 0.404594);
\draw[thick] (-0.297294, 0.979651, 0.404594) -- (-0.448452, 0.962292, 0.462001);
\fill[gray] (0, 0, 0.5) ellipse (1.9720265943665387 and 0.9860132971832694);
\draw[thick] (-0.448452, 0.962292, 0.462001) -- (-0.604065, 0.935792, 0.545559);
\fill[gray] (0, 0, 0.6000000000000001) ellipse (1.9595917942265426 and 0.9797958971132713);
\draw[thick] (-0.604065, 0.935792, 0.545559) -- (-0.758731, 0.898609, 0.661255);
\fill[gray] (0, 0, 0.7000000000000002) ellipse (1.9447936194419762 and 0.9723968097209881);
\fill[gray] (0, 0, 0.8000000000000003) ellipse (1.9275776393067947 and 0.9637888196533974);
\draw[thick] (-0.758731, 0.898609, 0.661255) -- (-0.905396, 0.849199, 0.815682);
\fill[gray] (0, 0, 0.9000000000000004) ellipse (1.9078784028338913 and 0.9539392014169457);
\fill[gray] (0, 0, 1.0) ellipse (1.8856180831641267 and 0.9428090415820634);
\draw[thick] (-0.905396, 0.849199, 0.815682) -- (-1.033908, 0.786302, 1.015081);
\fill[gray] (0, 0, 1.1000000000000005) ellipse (1.8607047649270483 and 0.9303523824635241);
\fill[gray] (0, 0, 1.2000000000000002) ellipse (1.833030277982336 and 0.916515138991168);
\draw[thick] (-1.033908, 0.786302, 1.015081) -- (-1.130477, 0.709234, 1.263898);
\fill[gray] (0, 0, 1.2999999999999998) ellipse (1.80246744461277 and 0.901233722306385);
\fill[gray] (0, 0, 1.4000000000000004) ellipse (1.7688665548562132 and 0.8844332774281066);
\fill[gray] (0, 0, 1.5) ellipse (1.7320508075688772 and 0.8660254037844386);
\draw[thick] (-1.130477, 0.709234, 1.263898) -- (-1.177599, 0.618169, 1.562255);
\fill[gray] (0, 0, 1.6000000000000005) ellipse (1.6918103387266024 and 0.8459051693633012);
\fill[gray] (0, 0, 1.7000000000000002) ellipse (1.6478942792411033 and 0.8239471396205517);
\fill[gray] (0, 0, 1.8000000000000007) ellipse (1.5999999999999996 and 0.7999999999999998);
\fill[gray] (0, 0, 1.9000000000000004) ellipse (1.5477582354991866 and 0.7738791177495933);
\draw[thick] (-1.177599, 0.618169, 1.562255) -- (-1.155152, 0.514538, 1.901292);
\fill[gray] (0, 0, 2.0) ellipse (1.4907119849998598 and 0.7453559924999299);
\fill[gray] (0, 0, 2.1000000000000005) ellipse (1.4282856857085697 and 0.7141428428542849);
\fill[gray] (0, 0, 2.2) ellipse (1.359738536958076 and 0.679869268479038);
\draw[thick] (-1.155152, 0.514538, 1.901292) -- (-1.045898, 0.402334, 2.254299);
\fill[gray] (0, 0, 2.3000000000000007) ellipse (1.2840906856172143 and 0.6420453428086071);
\fill[gray] (0, 0, 2.4000000000000004) ellipse (1.1999999999999997 and 0.5999999999999999);
\fill[gray] (0, 0, 2.5) ellipse (1.1055415967851334 and 0.5527707983925667);
\draw[thick] (-1.045898, 0.402334, 2.254299) -- (-0.861032, 0.294600, 2.559454);
\fill[gray] (0, 0, 2.6000000000000005) ellipse (0.9977753031397172 and 0.4988876515698586);
\fill[gray] (0, 0, 2.7) ellipse (0.8717797887081344 and 0.4358898943540672);
\draw[thick] (-0.861032, 0.294600, 2.559454) -- (-0.676699, 0.215118, 2.748308);
\fill[gray] (0, 0, 2.8000000000000007) ellipse (0.7180219742845992 and 0.3590109871422996);
\draw[thick] (-0.676699, 0.215118, 2.748308) -- (-0.532469, 0.162050, 2.850567);
\fill[gray] (0, 0, 2.9000000000000004) ellipse (0.5120763831912397 and 0.25603819159561986);
\draw[thick] (-0.532469, 0.162050, 2.850567) -- (-0.420609, 0.124385, 2.909072);
\draw[thick] (-0.420609, 0.124385, 2.909072) -- (-0.334358, 0.096894, 2.943461);
\draw[thick] (-0.334358, 0.096894, 2.943461) -- (-0.268827, 0.076739, 2.963848);
\draw[thick] (-0.268827, 0.076739, 2.963848) -- (-0.219940, 0.062052, 2.975988);
\draw[thick] (-0.219940, 0.062052, 2.975988) -- (-0.184244, 0.051484, 2.983248);
\draw[thick] (-0.184244, 0.051484, 2.983248) -- (-0.158831, 0.044018, 2.987608);
\draw[thick] (-0.158831, 0.044018, 2.987608) -- (-0.141281, 0.038862, 2.990234);
\draw[thick] (-0.141281, 0.038862, 2.990234) -- (-0.129611, 0.035399, 2.991810);
\draw[thick] (-0.129611, 0.035399, 2.991810) -- (-0.122228, 0.033146, 2.992741);
\draw[thick] (-0.122541, 0.029218, 2.993080) -- (-0.122087, 0.029264, 2.993118);
\draw[thick] (-0.122228, 0.033146, 2.992741) -- (-0.117882, 0.031732, 2.993271);
\draw[thick] (-0.122087, 0.029264, 2.993118) -- (-0.119703, 0.029489, 2.993315);
\draw[thick] (-0.119703, 0.029489, 2.993315) -- (-0.117886, 0.029649, 2.993463);
\draw[thick] (-0.117882, 0.031732, 2.993271) -- (-0.115623, 0.030879, 2.993550);
\draw[thick] (-0.117886, 0.029649, 2.993463) -- (-0.116474, 0.029780, 2.993576);
\draw[thick] (-0.116474, 0.029780, 2.993576) -- (-0.115423, 0.029913, 2.993655);
\draw[thick] (-0.115623, 0.030879, 2.993550) -- (-0.114755, 0.030380, 2.993671);
\draw[thick] (-0.114755, 0.030380, 2.993671) -- (-0.114792, 0.030091, 2.993694);
\draw[thick] (-0.115423, 0.029913, 2.993655) -- (-0.114792, 0.030091, 2.993694);
\fill[gray] (0, 0, 3.0) ellipse (0.0 and 0.0);
    \input{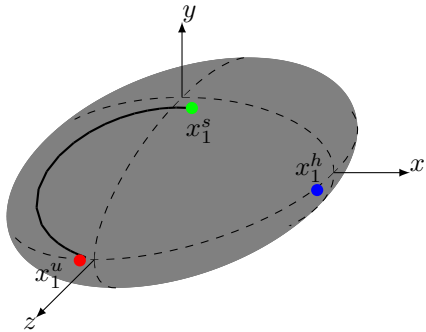}
    \begin{scope}
      \clip (-2,1) -- (2,-1) -- (2,1);
      \draw[dashed] ellipse (2 and 1);
    \end{scope}
    \begin{scope}[rotate around x=90]
      \clip (-2,3) -- (2,-3) -- (2,3);
      \draw[dashed] ellipse (2 and 3);
    \end{scope}
    \begin{scope}[rotate around y=-90]
      \clip (-3,1) -- (3,-1) -- (3,1);
      \draw[dashed] ellipse (3 and 1);
    \end{scope}
    \fill[green] (s2) circle (0.08);
    \draw ($(s2)+(0.1,-0.3)$) node {$x^s_1$};
    \fill[blue] (h2) circle (0.08);
    \draw ($(h2)+(-0.1,0.3)$) node {$x^h_1$};
    \fill[red] (u2) circle (0.08);
    \draw ($(u2)+(-0.4,-0.2)$) node {$x^u_1$};
    \draw[-latex] (2,0,0) -- (3,0,0);
    \draw (3.1,0.1,0) node {$x$};
    \draw[-latex] (0,1,0) -- (0,2,0);
    \draw (0.1,2.1,0) node {$y$};
    \draw[-latex] (0,0,3) -- (0,0,5);
    \draw (0,0,5.2) node {$z$};
  \end{tikzpicture}
  \caption{One of the integral curves from $x^s_1$ to $x^u_1$}\label{fig:ell_integral}
\end{figure}

The \emph{descending (ascending) manifold} of a maximum (minimum) critical point~$x$ is the union of $x$ and all integral curves with $x$ as their destination (origin). Only minimum\==maximum integral curves belong to both a descending and an ascending manifold, an integral curve is \emph{isolated} otherwise. Every saddle point is the origin of exactly two isolated integral curves as well as the destination of exactly two isolated integral curves. Figure~\ref{fig:ell_isolated} shows the isolated integral curves of the ellipsoid in previous examples.

\begin{figure}[!ht]
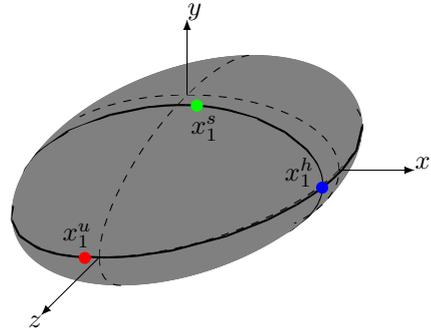

  \centering
  \begin{tikzpicture}
    \input{isolated}
    \input{equilibria}
    \begin{scope}
      \clip (-2,1) -- (2,-1) -- (2,1);
      \draw[dashed] ellipse (2 and 1);
    \end{scope}
    \begin{scope}[rotate around x=90]
      \clip (-2,3) -- (2,-3) -- (2,3);
      \draw[dashed] ellipse (2 and 3);
    \end{scope}
    \begin{scope}[rotate around y=-90]
      \clip (-3,1) -- (3,-1) -- (3,1);
      \draw[dashed] ellipse (3 and 1);
    \end{scope}
    \fill[green] (s2) circle (0.08);
    \draw ($(s2)+(0.1,-0.3)$) node {$x^s_1$};
    \fill[blue] (h2) circle (0.08);
    \draw ($(h2)+(-0.3,0.2)$) node {$x^h_1$};
    \fill[red] (u2) circle (0.08);
    \draw ($(u2)+(-0.1,0.3)$) node {$x^u_1$};
    \draw[-latex] (2,0,0) -- (3,0,0);
    \draw (3.1,0.1,0) node {$x$};
    \draw[-latex] (0,1,0) -- (0,2,0);
    \draw (0.1,2.1,0) node {$y$};
    \draw[-latex] (0,0,3) -- (0,0,5);
    \draw (0,0,5.2) node {$z$};
  \end{tikzpicture}
  \caption{Isolated integral curves}\label{fig:ell_isolated}
\end{figure}

The cells of the \emph{Morse\==Smale complex of $r_K$} are the connected components of the intersections between descending and ascending manifolds. Every cell of a Morse\==Smale complex is a quadrangle bounded by isolated integral curves connecting critical points in the following order: saddle, maximum, saddle, minimum~\cite{edelsbrunner_morse_smale}. We call the 1\=/skeleton of a Morse\==Smale complex the \emph{Morse\==Smale graph}.

The vertices in the Morse\==Smale graph of $r_K$ correspond to critical points and the edges correspond to isolated integral curves. Every vertex corresponding to a saddle is of degree 4 and it is connected to exactly two vertices corresponding to stable points and two vertices corresponding to unstable points. There are no edges in the graph other than these. The Morse\==Smale graph is a 3\=/colored quadrangular graph. You can see the Morse\==Smale graph corresponding to Figure~\ref{fig:ell_isolated} in Figure~\ref{fig:ell_ms}. The \emph{quasi\=/dual} is an alternative, equivalent representation of the Morse\==Smale graph, which is constructed from the original graph by adding stable\==unstable diagonal edges and removing all saddle\=/type vertices. The quasi\=/dual of the Morse\==Smale graph is a two\=/colored quadrangular graph~\cite{domokos_morse_smale}. This representation has fewer vertices and edges, which makes it more suitable for visualization. For this reason, we used it in the pebble catalog published as Online Resource 1.

A secondary classification method was introduced in \cite{domokos_morse_smale} based on the isomorphism classes of Morse\==Smale graphs~--~that is, shapes that belong to the same secondary class are in the same primary class as well. The authors also proved that for every combinatorially possible Morse\==Smale graph a smooth convex body exists. The Reeb\=/graph also has the first property, two shapes with isomorphic Reeb graphs belong to the same primary class. On the other hand, the Reeb\=/graph associated with a function does not uniquely define the Morse\==Smale graph associated with the same function, or vice versa.

An \emph{M\=/secondary equilibrium class} contains all convex bodies with isomorphic Morse\==Smale graphs. Objects in the same M\=/secondary class belong to the same primary class. However, bodies in the same M\=/secondary class do not necessarily belong to the same R\=/secondary class or vice versa. See Table~\ref{tbl:twotwo} for examples in the $\{2,2\}$ primary class. In the next section we introduce a tertiary classification scheme where bodies in the same tertiary class belong to the same primary, M\=/secondary and R\=/secondary class.

\subsection{Master graph}\label{sec:master}

The Morse\==Smale graph encompasses information about integral curves, the lines of fastest ascend. The Reeb\=/graph on the other hand portrays contour lines, paths of constant function value. We have introduced the so called \emph{master graph} in subsection~\ref{sss:higher} that describes both aspects. This subsection will focus on the master graph of Morse functions specifically. The key is the relationship between two types of significant curves: the isolated integral curves and the saddle contour lines.

\begin{definition}
  The \emph{intersection point} $y\in S^2$ is a point where an isolated integral curve of $\nabla r_K$ and a saddle contour of $r_K$ intersect.
\end{definition}

\begin{definition}
  Let $x_1$ and $x_2$ be critical points with an isolated integral curve~$c$ between them. Let $y_1,\dots,y_n$ be all the intersection points on $c$ ($r_K(y_1)<\dots<r_K(y_n)$). Since $r_K$ is monotonic along $c$, the ordering along $c$ will agree with the ordering by function values. An \emph{isolated set} is the ordered set of points $x_1-y_1-y_2-\dots-y_{n-1}-y_n-x_2$. Two adjacent points in this sequence are called \emph{gradient neighbors}.
\end{definition}

\begin{definition}
  A saddle point~$x^h$ and an intersection point~$y$ are \emph{contour neighbors} if there is a saddle contour line of $r_K$ that contains both $x^h$ and $y$.
\end{definition}

\begin{definition}
  In the \emph{master graph} a vertex corresponds to every critical point and every intersection point. Edges run between every gradient neighbor and every contour neighbor.
\end{definition}

We call edges between gradient neighbors \emph{gradient edges}. We call the path corresponding to an isolated set an \emph{isolated path} which is made up of gradient edges. We call edges between contour neighbors \emph{contour edges}. Figure~\ref{fig:master} shows an example.

\begin{figure}[!ht]
  \centering
  \begin{tikzpicture}
    \node[circle, draw=red, fill=white, inner sep=1pt] (U1) at (0,0) {$x^u_1$};
    \node[circle, draw=blue, fill=white, inner sep=1pt] (H1) at (1,0) {$x^h_1$};
    \node[circle, draw=red, fill=white, inner sep=1pt] (U2) at (2,0) {$x^u_2$};
    \node[circle, draw=green!70!black, fill=white, inner sep=1pt] (S1) at (3,0) {$x^s_1$};
    \node[circle, draw=blue, fill=white, inner sep=1pt] (H2) at (4,0) {$x^h_2$};
    \node[circle, draw=green!70!black, fill=white, inner sep=1pt] (S2) at (5,0) {$x^s_2$};
    
    \node[circle, fill=orange] (X1) at (4,1.2) {};
    \node[circle, fill=orange] (X2) at (4,-1.2) {};
    \node[circle, fill=orange] (X3) at (3,0.6) {};
    \node[circle, fill=orange] (X4) at (3,-0.6) {};
    
    \draw[ultra thick, red] (H1) to[bend left] (X1) (X1) to[bend left] (S2);
    \draw[dashed] (X1) to (H2);
    \draw[ultra thick, blue] (H1) to[bend right] (X2) (X2) to[bend right] (S2);
    \draw[dashed] (X2) to (H2);
    \draw[ultra thick, green] (H1) to (U1);
    \draw[ultra thick] (H1) to (U2);
    \draw[ultra thick, green] (H2) to[bend right] (X3) (X3) to[bend right] (U2);
    \draw[dashed] (X3) to[bend right] (H1);
    \draw[ultra thick, orange] (H2) to[bend left] (X4) (X4) to[bend left] (U2);
    \draw[dashed] (X4) to[bend left] (H1);
    \draw[ultra thick, purple] (H2) to (S1);
    \draw[ultra thick, yellow!70!black] (H2) to (S2);
  \end{tikzpicture}
  \caption{A master graph (\(x^u_1,x^u_2\): unstable points, \(x^h_1,x^h_2\): saddle points, \(x^s_1,x^s_2\): stable points, orange dots: intersection points, solid lines: gradient edges and solid lines of the same color constitute an isolated path, dashed lines: contour edges)}\label{fig:master}
\end{figure}
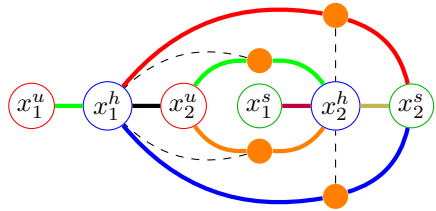

We may obtain the Morse\==Smale complex from the master graph by removing every contour edge and replacing isolated paths with single edges. We may obtain the Reeb\=/graph by contracting the contour edges and unifying parallel edges. Figure~\ref{fig:ms_and_reeb} shows the Morse\==Smale complex and Reeb\=/graph that belong to the master graph in Figure~\ref{fig:master}.

\begin{table}[!ht]
  \centering
  \caption{Secondary and tertiary classes in the $\{2,2\}$ primary class. Rows correspond to M\=/secondary classes, columns correspond to R\=/secondary classes. Intersection of row~$i$ and column~$j$ shows the master graph belonging to the $i$\=/th M\=/secondary and $j$\=/th R\=/secondary class.}\label{tbl:twotwo}
  \setlength{\tabcolsep}{0pt}
  \begin{tabular}{|c|c|c|}
    \hline
    \diagbox{MS}{Reeb}
    &\begin{tikzpicture}
      \node [circle, draw=red, fill=white, inner sep=1pt] (U1) at (0,3) {$x^u_1$};
      \node [circle, draw=red, fill=white, inner sep=1pt] (U2) at (2,3) {$x^u_2$};
      \node [circle, draw=blue, fill=white, inner sep=1pt] (H1) at (1,2) {$x^h_1$};
      \node [circle, draw=blue, fill=white, inner sep=1pt] (H2) at (1,1) {$x^h_2$};
      \node [circle, draw=green!70!black, fill=white, inner sep=1pt] (S1) at (0,0) {$x^s_1$};
      \node [circle, draw=green!70!black, fill=white, inner sep=1pt] (S2) at (2,0) {$x^s_2$};
      \draw (H1) to (U1) (H1) to (U2) (H2) to (H1) (S1) to (H2) (S2) to (H2);
    \end{tikzpicture}
    &\begin{tikzpicture}
      \node [circle, draw=red, fill=white, inner sep=1pt] (U1) at (0,3) {$x^u_2$};
      \node [circle, draw=red, fill=white, inner sep=1pt] (U2) at (0,2) {$x^u_1$};
      \node [circle, draw=blue, fill=white, inner sep=1pt] (H1) at (1,2) {$x^h_2$};
      \node [circle, draw=blue, fill=white, inner sep=1pt] (H2) at (1,1) {$x^h_1$};
      \node [circle, draw=green, fill=white, inner sep=1pt] (S1) at (2,1) {$x^s_1$};
      \node [circle, draw=green, fill=white, inner sep=1pt] (S2) at (2,0) {$x^s_2$};
      \draw (H1) to (U1) (H2) to (H1) (H2) to (U2) (S1) to (H1) (S2) to (H2);
    \end{tikzpicture}\\
    \hline
    \begin{tikzpicture}
      \node[circle, draw=red, fill=white, inner sep=1pt] (U1) at (0,0) {$x^u_1$};
      \node[circle, draw=blue, fill=white, inner sep=1pt] (H1) at (0,1) {$x^h_1$};
      \node[circle, draw=red, fill=white, inner sep=1pt] (U2) at (0,2) {$x^u_2$};
      \node[circle, draw=green!70!black, fill=white, inner sep=1pt] (S1) at (0,3) {$x^s_1$};
      \node[circle, draw=blue, fill=white, inner sep=1pt] (H2) at (0,4) {$x^h_2$};
      \node[circle, draw=green!70!black, fill=white, inner sep=1pt] (S2) at (0,5) {$x^s_2$};
      \draw (H1) to[bend right] (S1) (H1) to[bend left] (S2) (H1) to (U1) (H1) to (U2);
      \draw (H2) to (S1) (H2) to (S2) (H2) to[bend left] (U1) (H2) to[bend right] (U2);
    \end{tikzpicture}
    &\begin{tikzpicture}
      \node[circle, draw=red, fill=white, inner sep=1pt] (U1) at (0,0) {$x^u_1$};
      \node[circle, draw=blue, fill=white, inner sep=1pt] (H1) at (0,1) {$x^h_1$};
      \node[circle, draw=red, fill=white, inner sep=1pt] (U2) at (0,2) {$x^u_2$};
      \node[circle, draw=green!70!black, fill=white, inner sep=1pt] (S1) at (0,3) {$x^s_1$};
      \node[circle, draw=blue, fill=white, inner sep=1pt] (H2) at (0,4) {$x^h_2$};
      \node[circle, draw=green!70!black, fill=white, inner sep=1pt] (S2) at (0,5) {$x^s_2$};
      
      \node[circle, fill=orange] (X1) at (1.2,1) {};
      \node[circle, fill=orange] (X2) at (-1.2,4) {};
      \node[circle, fill=orange] (X3) at (0.6,2) {};
      \node[circle, fill=orange] (X4) at (-0.6,3) {};

      \draw (H1) to[bend right] (X3) (X3) to[bend right] (S1);
      \draw[dashed] (X3) to[bend right] (H2);
      \draw (H1) to[bend left] (X2) (X2) to[bend left] (S2);
      \draw[dashed] (X2) to (H2);
      \draw (H1) to (U1) (H1) to (U2);
      \draw (H2) to[bend left] (X1) (X1) to[bend left] (U1);
      \draw[dashed] (X1) to (H1);
      \draw (H2) to[bend right] (X4) (X4) to[bend right] (U2);
      \draw[dashed] (X4) to[bend right] (H1);
      \draw (H2) to (S1) (H2) to (S2);
    \end{tikzpicture}&empty\\
    \hline
    \begin{tikzpicture}
      \node[circle, draw=red, fill=white, inner sep=1pt] (U1) at (0,0) {$x^u_1$};
      \node[circle, draw=blue, fill=white, inner sep=1pt] (H1) at (0,1) {$x^h_1$};
      \node[circle, draw=red, fill=white, inner sep=1pt] (U2) at (0,2) {$x^u_2$};
      \node[circle, draw=green!70!black, fill=white, inner sep=1pt] (S1) at (0,3) {$x^s_1$};
      \node[circle, draw=blue, fill=white, inner sep=1pt] (H2) at (0,4) {$x^h_2$};
      \node[circle, draw=green!70!black, fill=white, inner sep=1pt] (S2) at (0,5) {$x^s_2$};
      \draw (H1) to[bend left] (S2) (H1) to[bend right] (S2) (H1) to (U1) (H1) to (U2);
      \draw (H2) to (S1) (H2) to (S2) (H2) to[bend right] (U2) (H2) to[bend left] (U2);
    \end{tikzpicture}
    &\begin{tikzpicture}
      \node[circle, draw=red, fill=white, inner sep=1pt] (U1) at (0,0) {$x^u_1$};
      \node[circle, draw=blue, fill=white, inner sep=1pt] (H1) at (0,1) {$x^h_1$};
      \node[circle, draw=red, fill=white, inner sep=1pt] (U2) at (0,2) {$x^u_2$};
      \node[circle, draw=green!70!black, fill=white, inner sep=1pt] (S1) at (0,3) {$x^s_1$};
      \node[circle, draw=blue, fill=white, inner sep=1pt] (H2) at (0,4) {$x^h_2$};
      \node[circle, draw=green!70!black, fill=white, inner sep=1pt] (S2) at (0,5) {$x^s_2$};
      
      \node[circle, fill=orange] (X1) at (1.2,4) {};
      \node[circle, fill=orange] (X2) at (-1.2,4) {};
      \node[circle, fill=orange] (X3) at (0.6,3) {};
      \node[circle, fill=orange] (X4) at (-0.6,3) {};

      \draw (H1) to[bend right] (X1) (X1) to[bend right] (S2);
      \draw[dashed] (X1) to (H2);
      \draw (H1) to[bend left] (X2) (X2) to[bend left] (S2);
      \draw[dashed] (X2) to (H2);
      \draw (H1) to (U1) (H1) to (U2);
      \draw (H2) to[bend left] (X3) (X3) to[bend left] (U2);
      \draw[dashed] (X3) to[bend left] (H1);
      \draw (H2) to[bend right] (X4) (X4) to[bend right] (U2);
      \draw[dashed] (X4) to[bend right] (H1);
      \draw (H2) to (S1) (H2) to (S2);
    \end{tikzpicture}
    &\begin{tikzpicture}
      \node[circle, draw=red, fill=white, inner sep=1pt] (U1) at (0,0) {$x^u_1$};
      \node[circle, draw=blue, fill=white, inner sep=1pt] (H1) at (0,1) {$x^h_1$};
      \node[circle, draw=red, fill=white, inner sep=1pt] (U2) at (0,2.5) {$x^u_2$};
      \node[circle, draw=green!70!black, fill=white, inner sep=1pt] (S1) at (0,3.5) {$x^s_1$};
      \node[circle, draw=blue, fill=white, inner sep=1pt] (H2) at (0,4.5) {$x^h_2$};
      \node[circle, draw=green!70!black, fill=white, inner sep=1pt] (S2) at (0,6) {$x^s_2$};
      
      \node[circle, fill=orange] (X1) at (0,1.75) {};
      \node[circle, fill=orange] (X2) at (0,5.25) {};

      \draw (H1) .. controls ++(1.5,1) and ++(1.5,-1) .. (S2) (H1) .. controls ++(-1.5,1) and ++(-1.5,-1) .. (S2);
      \draw (H1) to (U1) (H1) to (X1) (X1) to (U2);
      \draw[dashed] (X1) .. controls ++(0.8,1) and ++(0.8,-1) .. (H2);

      \draw (H2) to (S1) (H2) to (X2) (X2) to (S2);
      \draw[dashed] (X2) .. controls ++(-1,-1) and ++(-1,1) .. (H1);
      \draw (H2) to[bend left] (U2) (H2) to[bend right] (U2);
    \end{tikzpicture}\\
    \hline
  \end{tabular}
\end{table}

Using the master graph we can define a third level in our classification hierarchy.

\begin{definition}
  A \emph{tertiary equilibrium class} contains all convex bodies with isomorphic master graphs.
\end{definition}

Bodies in the same tertiary class belong to the same primary, M\=/secondary and R\=/secondary classes. Table~\ref{tbl:twotwo} shows how there could be two different master graphs in the same M\=/secondary and R\=/secondary classes.

\section{Classification of convex polyhedra}\label{sec:polyhedral}

In this section we examine the case where the radial distance function is not smooth, rather its image is the convex polyhedron~$P$. Just like introduced in subsection~\ref{sss:primary}, the value of the function still measures the distance from the point~$o$ which is usually chosen as the center of gravity of $P$. The function is continuous just like in the previous section, but the gradient $\nabla r_P$ exists only in the interior of the faces of $P$. If the point~$p$ is in the interior of the edge~$e$, then at $p$ only the directional derivative along $e$ exists. However, for each face adjacent to $e$ we can get the gradient of the function at $p$ that measures the distance from $o$ for the whole plane of the face. In~\cite{ludmany2021morsesmale} we named these vectors \emph{candidate gradients at~$p$} if they are tangential to the polyhedron~$P$. Candidate gradients were defined the same way at vertices too. We called points where at least one non\=/zero candidate gradient exists \emph{regular}, and also proved that at these points there is a unique candidate gradient with maximal length that we called the \emph{extended gradient $\nabla^\text{ext}r_P$}.

The concept of equilibrium points and non\=/degeneracy of convex polyhedra are already established in the literature~\cite{balancing}. We say that $x\in P$ is an \emph{equilibrium point} of $P$ (with respect to $o$) if the plane~$H$ through $x$ and perpendicular to $[o,x]$ supports $P$ at $x$. In this case $x$ is \emph{non\=/degenerate} if $H\cap P$ is the (unique) face, edge or vertex of $P$ that contains $x$ in its relative interior. We have shown in~\cite{ludmany2021morsesmale} that a point $x\in P$ is an equilibrium point if and only if there is no candidate gradient at $x$. The function~$r_P$ is \emph{polyhedral Morse} if all its equilibrium points are non\=/degenerate. The function is \emph{excellent polyhedral Morse} if the equilibrium points are at distinct distances from $o$.

Let us take a regular tetrahedron as an example. If we set $o$ in its center of gravity then it has a stable equilibrium point on all of its faces, a saddle on all of its edges and an unstable equilibrium in all of its vertices. All of these are non\=/degenerate, but all equilibria of the same kind has the same distance from $o$. Similarly to the ellipsoid previously, we can move $o$ such that the distance from it becomes an excellent polyhedral Morse function. See Figure~\ref{fig:poly_equilibria} for reference.

\begin{figure}[!ht]
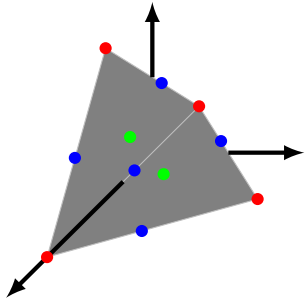

  \centering
  \begin{tikzpicture}
    \coordinate (o) at (0,0,0);
    \coordinate (v1) at (1,1,1);
    \coordinate (v2) at (-1,1,-1);
    \coordinate (v3) at (1,-1,-1);
    \coordinate (v4) at (-1,-1,1);

    \fill[gray] (v2) -- (v1) -- (v3) -- (v4) -- cycle;
    \draw[white!50!gray] (v1) -- (v2) (v1) -- (v3) (v1) -- (v4) (v3) -- (v4) (v4) -- (v2);

    \draw[ultra thick, -latex] (1,0,0) -- (2,0,0);
    \draw[ultra thick, -latex] (0,1,0) -- (0,2,0);
    \draw[ultra thick, -latex] (0,0,1) -- (0,0,5);

    \begin{scope}[fill=green]
      \input{stables}
    \end{scope}
    \begin{scope}[fill=blue]
      \input{saddles}
    \end{scope}
    \fill[red] (v1) circle (0.08);
    \fill[red] (v2) circle (0.08);
    \fill[red] (v3) circle (0.08);
    \fill[red] (v4) circle (0.08);
  \end{tikzpicture}
  \caption{Visible equilibrium points of the example tetrahedron}\label{fig:poly_equilibria}
\end{figure}

An approach of defining critical points directly on polyhedra already exists in the literature~\cite{banchoff_polyhedral} where the function is the distance from a reference plane. This function is piecewise linear and all of its critical points fall on vertices. This setup is a good fit for processing all kinds of datasets from terrains to models for 3D printing, which makes it widely used. A review of existing algorithms for computing critical points and the Morse\==Smale graph in such a case is available in~\cite{defloriani_summary}. These are not applicable to the not\=/piecewise\=/linear radial distance function~$r_P$ discussed in the current article but inspired aspects of our solution.

\subsection{Reeb\=/graph}\label{sec:poly_reeb}

Subsection~\ref{sec:reeb}'s definitions of the contour lines and the Reeb\=/graph apply to $r_P$ as well, but their properties might differ. Contour lines are continuous, closed curves, made up of circular arcs connecting at non\=/differentiable points. Contour lines are unique at regular points. Contour lines containing a stable or unstable point do not include other points. If $r_P$ is excellent polyhedral Morse then a saddle contour line contains a single saddle only. Saddle contour lines containing the non\=/degenerate saddle~$x^h$ on the edge~$e$ contain a circular arc (which might be a complete circle) on each face adjacent to $e$ that is tangent to $e$ at $x^h$. See Figure~\ref{fig:poly_saddle} for an example.

\begin{figure}[!ht]
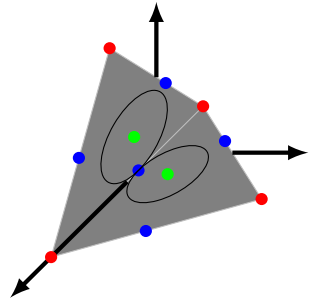

  \centering
  \begin{tikzpicture}
    \coordinate (o) at (0,0,0);
    \coordinate (v1) at (1,1,1);
    \coordinate (v2) at (-1,1,-1);
    \coordinate (v3) at (1,-1,-1);
    \coordinate (v4) at (-1,-1,1);

    \fill[gray] (v2) -- (v1) -- (v3) -- (v4) -- cycle;
    \draw[white!50!gray] (v1) -- (v2) (v1) -- (v3) (v1) -- (v4) (v3) -- (v4) (v4) -- (v2);

    \draw[ultra thick, -latex] (1,0,0) -- (2,0,0);
    \draw[ultra thick, -latex] (0,1,0) -- (0,2,0);
    \draw[ultra thick, -latex] (0,0,1) -- (0,0,5);

    \begin{scope}[fill=green]
      \input{stables}
    \end{scope}
    \begin{scope}[fill=blue]
      \input{saddles}
    \end{scope}
    \fill[red] (v1) circle (0.08);
    \fill[red] (v2) circle (0.08);
    \fill[red] (v3) circle (0.08);
    \fill[red] (v4) circle (0.08);

    \draw[rotate around y=-45.0, rotate around x=144.73561031724535] (0.282842712474619,-0.08164965809277254,-0.5773502691896257) circle (0.6123724356957945);
\draw[rotate around y=-135.0, rotate around x=144.73561031724535] (0.1414213562373095,-0.3265986323710905,0.5773502691896257) circle (0.5307227776030219);
  \end{tikzpicture}
  \caption{The saddle contour line of one of the saddle points}\label{fig:poly_saddle}
\end{figure}

The Reeb\=/graph of an excellent polyhedral Morse function $r_P$ is a tree, its vertices correspond to equilibrium points. Leaves correspond to stable and unstable, inner vertices correspond to saddle points. Inner vertices are of degree 3.

\subsection{Morse\==Smale graph}\label{sec:poly_ms}

The other key concept introduced in~\cite{ludmany2021morsesmale} beside the extended gradient was the \emph{ascending curve}. It traces out a path following the extended gradient, similarly to integral curves, but relaxes the requirement on the derivative. An ascending curve through a regular point is the longest possible curve whose right hand derivatives agree with the extended gradient $\nabla^\text{ext}r_P$. An ascending curve is a continuous open polygon. See Figure~\ref{fig:ascending} for example.

Every regular point belongs to at least one ascending curve, two ascending curves can merge, but cannot cross or split. An ascending curve starts at an equilibrium point and ends at an equilibrium point, containing the latter but not the first one. The two equilibrium points at the ends of an ascending curve are distinct because the function's value is strictly ascending along the curve. The equilibrium point with lower critical value is called the \emph{origin} while the other one is the \emph{destination} of the ascending curve. The origin and destination is either stable\==unstable, stable\==saddle or saddle\==unstable.

\begin{figure}[!ht]
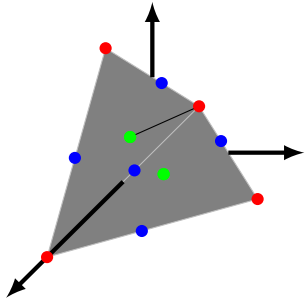

  \centering
  \begin{tikzpicture}
    \coordinate (o) at (0,0,0);
    \coordinate (v1) at (1,1,1);
    \coordinate (v2) at (-1,1,-1);
    \coordinate (v3) at (1,-1,-1);
    \coordinate (v4) at (-1,-1,1);

    \fill[gray] (v2) -- (v1) -- (v3) -- (v4) -- cycle;
    \draw[white!50!gray] (v1) -- (v2) (v1) -- (v3) (v1) -- (v4) (v3) -- (v4) (v4) -- (v2);

    \draw[ultra thick, -latex] (1,0,0) -- (2,0,0);
    \draw[ultra thick, -latex] (0,1,0) -- (0,2,0);
    \draw[ultra thick, -latex] (0,0,1) -- (0,0,5);

    \begin{scope}[fill=green]
      \input{stables}
    \end{scope}
    \begin{scope}[fill=blue]
      \input{saddles}
    \end{scope}

    \draw (s1) -- (v1);
    \begin{scope}[fill=green]
      \input{stables}
    \end{scope}

    \fill[red] (v1) circle (0.08);
    \fill[red] (v2) circle (0.08);
    \fill[red] (v3) circle (0.08);
    \fill[red] (v4) circle (0.08);
  \end{tikzpicture}
  \caption{A stable\==unstable ascending curve}\label{fig:ascending}
\end{figure}

The \emph{descending (ascending) polyhedral manifold} of an unstable (stable) equilibrium point~$x$ is the union of $x$ and all ascending curves with $x$ as their destination (origin). Only stable\==unstable ascending curves can belong to both a descending and an ascending polyhedral manifold, an ascending curve is \emph{isolated} otherwise. Every saddle point is the origin of exactly two isolated ascending curves as well as the destination of exactly two isolated ascending curves. See Figure~\ref{fig:poly_isolated} for an example.

\begin{figure}[!ht]
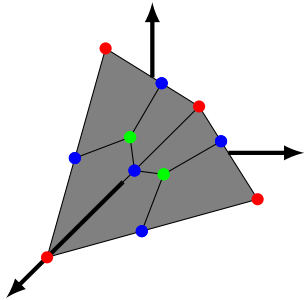

  \centering
  \begin{tikzpicture}
    \coordinate (o) at (0,0,0);
    \coordinate (v1) at (1,1,1);
    \coordinate (v2) at (-1,1,-1);
    \coordinate (v3) at (1,-1,-1);
    \coordinate (v4) at (-1,-1,1);

    \fill[gray] (v2) -- (v1) -- (v3) -- (v4) -- cycle;

    \draw[ultra thick, -latex] (1,0,0) -- (2,0,0);
    \draw[ultra thick, -latex] (0,1,0) -- (0,2,0);
    \draw[ultra thick, -latex] (0,0,1) -- (0,0,5);

    \begin{scope}[fill=green]
      \input{stables}
    \end{scope}
    \begin{scope}[fill=blue]
      \input{saddles}
    \end{scope}

    \draw (h0) -- (s1);
    \draw (h0) -- (v1);
    \draw (h0) -- (v2);
    \draw (h1) -- (s2);
    \draw (h1) -- (v1);
    \draw (h1) -- (v3);
    \draw (h2) -- (s1);
    \draw (h2) -- (s2);
    \draw (h2) -- (v1);
    \draw (h2) -- (v4);
    \draw (h4) -- (s1);
    \draw (h4) -- (v2);
    \draw (h4) -- (v4);
    \draw (h5) -- (s2);
    \draw (h5) -- (v3);
    \draw (h5) -- (v4);
    \begin{scope}[fill=green]
      \input{stables}
    \end{scope}
    \begin{scope}[fill=blue]
      \input{saddles}
    \end{scope}

    \fill[red] (v1) circle (0.08);
    \fill[red] (v2) circle (0.08);
    \fill[red] (v3) circle (0.08);
    \fill[red] (v4) circle (0.08);
  \end{tikzpicture}
  \caption{Isolated ascending curves}\label{fig:poly_isolated}
\end{figure}

The cells of the \emph{Morse\==Smale complex of $P$} are the connected components of the intersections between descending and ascending polyhedral manifolds. Every cell of the Morse\==Smale complex is a quadrangle bounded by isolated integral curves connecting equilibrium points in the following order: saddle, unstable, saddle, stable. We call the 1\=/skeleton of a Morse\==Smale complex the \emph{Morse\==Smale graph}.

The vertices in the Morse\==Smale graph of $P$ correspond to equilibrium points and the edges correspond to isolated ascending curves. Every vertex corresponding to a saddle is of degree 4 and it is connected to exactly two vertices corresponding to stable points and two vertices corresponding to unstable points. There are no edges in the graph other than these. The Morse\==Smale graph is a 3\=/colored quadrangular graph.

\subsection{Master graph}\label{sec:poly_master}

The master graph of a Morse function encompasses the information of isolated integral curves and saddle contours. In the polyhedral case discussed in this subsection it describes the relationship of isolated ascending curves and saddle contours.

\begin{definition}
  The \emph{intersection point~$y\in S^2$} is a point where an isolated ascending curve of $\nabla^\text{ext} r_P$ and a saddle contour of $r_P$ intersect.
\end{definition}

\begin{definition}
  Let $x_1$ and $x_2$ be critical points with an isolated ascending curve~$a$ between them. Let $y_1,\dots,y_n$ be all the intersection points on $a$ ($r_P(y_1)<\dots<r_P(y_n)$). An \emph{isolated set} is the ordered set of points $x_1-y_1-y_2-\dots-y_{n-1}-y_n-x_2$. Two adjacent points in this sequence are called \emph{gradient neighbors}.
\end{definition}

\begin{definition}
  A saddle point~$x^h$ and an intersection point~$y$ are \emph{contour neighbors} if there is a saddle contour line that contains both $x^h$ and $y$.
\end{definition}

Two isolated ascending curves might have merged at a point $p$ and both intersect a saddle contour at $y$ ($r_P(p)<r_P(y)$). We treat these two isolated ascending curves as separate even if they share some of their points. In the master graph we will have two separate vertices corresponding to $y$ as well.

\begin{definition}
  An intersection point~$y$ is called \emph{$n$\=/fold} if it is the intersection of $n$ isolated ascending curves and a saddle contour at the same point.
\end{definition}

\begin{definition}
  In the \emph{polyhedral master graph} a vertex corresponds to every critical point and $n$ vertices correspond to every $n$\=/fold intersection point. Every vertex is connected to one and only one of the vertices corresponding to each of its gradient neighbors. Edges run between contour neighbors too.
\end{definition}

The steps of creating the Reeb\=/graph and Morse\==Smale graph from the polyhedral master graph are identical to the smooth case.

\section{Equilibrium classes as shape catalogs}\label{sec:catalog}

So far we have defined the primary, secondary and tertiary equilibrium classes for both smooth surfaces and polyhedra. Now we can more clearly formulate the mathematical challenge outlined in Section~\ref{sec:intro}. We do this by exploring more details of the classification system in general. In subsection~\ref{sss:higher} we mentioned that $G(S,U)\leq R(S,U)\cdot M(S,U)$ for any primary class $\{S,U\}$. For example, the equality holds for trivial cases like the classes $\{1,1\}$, $\{1,2\}$ or $\{2,1\}$ with a single Reeb\=/graph, a single Morse\==Smale graph and therefore a single master graph.

Our goal in this section is to show by proving Lemma~\ref{lem:cardinality} that there is at least one primary class where the equality does not hold. We start by constructing polyhedra in 3~tertiary classes, providing a lower bound on their number. First, let us classify the polyhedron with the lowest possible number of vertices and faces. A tetrahedron with 2~stable and 2~unstable equilibria is presented in~\cite{balancing}. We use the following lemma in the classification process.

\begin{lemma}\label{lem:parallel}
  A non\==monostatic tetrahedron's radial distance function cannot have parallel isolated ascending curves in its Morse\==Smale graph.
\end{lemma}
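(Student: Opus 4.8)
A "parallel" pair of isolated ascending curves, in the sense used for the master graph, would be two isolated ascending curves that share the same origin $x^h_1$ and the same destination $x^h_2$ (both saddles), or — reading the Morse–Smale graph — two isolated edges connecting the same two vertices. Since the endpoints of an isolated ascending curve are always a saddle together with a stable or unstable point, parallel isolated curves can only occur between a saddle and a stable point, or between a saddle and an unstable point. So the plan is to show: for a tetrahedron that is not mono-unstable (i.e.\ has $U\geq 2$; by the Poincaré–Hopf relation and the fact that a tetrahedron always has $S\geq 2$ stable points on faces is false in general, so we must be careful here) — more precisely, for a non-mono\-monostatic tetrahedron — there cannot be two isolated ascending curves sharing both endpoints.

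**The approach.** First I would recall the concrete structure of equilibria on a tetrahedron $P$ with reference point $o$: stable points live in the interiors of faces (at most one per face), saddles in the interiors of edges (at most one per edge), unstable points at vertices. An isolated ascending curve from a face-stable point $x^s$ to an edge-saddle $x^h$ must leave the face of $x^s$ and cross onto the plane of an adjacent face; a curve from $x^h$ up to a vertex $x^u$ must run along one of the two faces meeting at that edge toward one of its endpoints. The key geometric observation to exploit is that the extended gradient on each face is the gradient of the (smooth, in fact strictly convex) function "distance from the fixed-plane projection of $o$", restricted to that planar face. On a single planar face the ascending curves of such a function do not cross, and between two fixed critical configurations in a convex planar region they are essentially rigid — there is no room for two genuinely distinct isolated ascending curves with the same endpoints to be separated by a region, because a tetrahedron face is a triangle and the relevant curves are short monotone arcs. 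I would formalize "no room" by observing that two parallel isolated curves would have to bound a disk on $S^2$ whose interior contains no critical point (isolated curves are disjoint from critical points except at endpoints), and then derive a contradiction by a degree/count argument: such a bounded disk would force the ascending/descending manifolds to produce an extra Morse–Smale cell, hence extra equilibria, which a tetrahedron (only $4$ faces, $6$ edges, $4$ vertices, and with the non-mono\-monostatic hypothesis ruling out the degenerate low-count cases) cannot accommodate.

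**Carrying it out.** Step 1: enumerate the possible primary classes of a tetrahedron — one checks $S\le 4$, $U\le 4$, $H\le 6$ with $S+U-H=2$, and the non-mono\-monostatic hypothesis excludes $\{1,1\}$; I would then note it suffices to rule out parallel curves in each remaining class, but the argument will be uniform. Step 2: suppose $c_1,c_2$ are isolated ascending curves with common origin the saddle $x^h$ on edge $e$ and common destination $x^u$ at a vertex $v$ of $e$ (the saddle–stable case is symmetric). Both $c_1$ and $c_2$ must, just after leaving $x^h$, lie on one of the two faces adjacent to $e$; since each saddle is the origin of exactly two isolated ascending curves and these go up the two sides of the edge, $c_1$ and $c_2$ start on different faces (one per face) — so immediately they are not "parallel" in any local sense and, being confined to different faces except at $x^h$ and possibly $v$, the only way they could both reach $v$ is along the two faces meeting at $v$; but then the closed curve $c_1\cup c_2$ bounds a region of $S^2$ containing part of $P$'s surface with no interior critical point, contradicting that every Morse–Smale cell has a saddle, unstable and stable vertex on its boundary. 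Step 3: handle the saddle–stable case identically, using that stable points sit in face interiors. Step 4: assemble.

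**Main obstacle.** The delicate point is Step 2's claim that two isolated ascending curves sharing a saddle must start on *different* faces — this is exactly "every saddle is the origin of exactly two isolated ascending curves", quoted in subsection~\ref{sec:poly_ms}, and I would need to check that on a tetrahedron these two curves genuinely leave through the two distinct faces adjacent to the saddle's edge and not, say, both turning the same way; that requires looking at the directional behaviour of the extended gradient near an edge-saddle, where the two candidate gradients on the two adjacent faces point "outward" in opposite transverse directions. Making that local analysis airtight — and correctly excluding the possibility that one of the curves runs along the edge itself before peeling off — is where the real work sits; the global counting contradiction is then routine given the earlier structural results.
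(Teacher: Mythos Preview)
Your approach has a geometric error that undermines Step~2. You claim that the two isolated ascending curves with origin at a saddle $x^h$ on edge $e$ ``must, just after leaving $x^h$, lie on one of the two faces adjacent to $e$'' and hence start on different faces. In fact the curves for which $x^h$ is the \emph{origin} are the saddle--unstable curves, and these initially run \emph{along the edge $e$} toward its two endpoints, not onto the adjacent faces: at an edge saddle the distance function has a local minimum along the edge and decreases into both adjacent faces, so the two ascending directions are the two edge directions. It is the two curves for which $x^h$ is the \emph{destination} (the stable--saddle curves) that arrive from the two adjacent faces. With the local picture inverted, your claim that the curves are ``confined to different faces except at $x^h$ and possibly $v$'' and everything built on it does not get started.

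Even setting that aside, your central step---that the disc bounded by $c_1\cup c_2$ has no interior critical point---is asserted without justification, and in the situations that actually arise it is false. In the paper's saddle--unstable case the two parallel curves run along edges and form a $3$-edge cycle, i.e.\ the boundary of a single triangular face (identified using a second unstable vertex); the quadrangle lemma then forces a \emph{stable} point in the interior of that face, and the contradiction comes from comparing the monotonicity of the distance along both boundary arcs with the two boundary local minima a stable-point-bearing face must have. In the saddle--stable case the cycle must run over three faces and enclose a single vertex, which the quadrangle lemma forces to be unstable; the contradiction then comes from the followed/crossed dichotomy (a stable--saddle isolated curve would have to cross a followed edge incident to that unstable vertex, which is impossible). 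Both arguments hinge on the specific combinatorics of a tetrahedron and on the followed/crossed machinery from~\cite{ludmany2021morsesmale}; your abstract cell-count sketch does not supply either ingredient.
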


Before proving Lemma~\ref{lem:parallel}, we provide some necessary background. In~\cite{ludmany2021morsesmale} we defined an edge of a polyhedron as \emph{followed}, if the extended gradient is parallel to the edge at every single one of its points, otherwise we called it \emph{crossed}. The names come from the behavior of an ascending curve through interior points of an edge: in the first case they follow the edge, in the second case they cross it. We have proven that this behavior is uniform at every single point of a given edge. In the following we take advantage of these two observations:
\begin{enumerate}
\item Every edge connected to an unstable point is followed.
\item On convex polyhedra, a saddle\==stable isolated ascending curve crosses only crossed edges.
\end{enumerate}

\begin{proof}
  Consider the tetrahedron with vertices $A,B,C$ and $D$.

  First we prove that two isolated ascending curves between the saddle~$x_1^h$ and the unstable point~$x_1^u$ cannot exist. Let us consider the case where the tetrahedron has one more unstable point~$x_2^u$, and assign them to vertices as $x_1^u=A$ and $x_2^u=B$. The two isolated ascending curves connecting $x_1^h$ to $x_1^u$ form a cycle on the tetrahedron. It must be 3 edges long as any shorter would not be a cycle and any longer would contain both unstable points, or in other words, the two isolated ascending curves run along the boundary of a face. This can only be the face $ACD$ as every other face contains the other unstable vertex~$B$. According to the quadrangle lemma~\cite{edelsbrunner_morse_smale}, there is a stable point~$x_1^s$ on the face $ACD$ connected to $x_1^h$ by an isolated ascending curve. The radial distance function is strictly monotone ascending from $x_1^h$ to $x_1^u$ along both isolated ascending curves running along the edges of $ACD$. However, any face containing a stable point contains at least 2~local minima when looking at the distance function along its boundary curve. This contradiction means that parallel $x_1^h-x_1^u$ isolated ascending curves cannot exist on tetrahedra with only 2~unstable points in total. The statement holds for 3 or 4~unstable points as well, because every face would contain at least 2 of them on its boundary, which makes parallel isolated ascending curves impossible.

  Next we prove that two isolated ascending curves between the saddle~$x_1^h$ and the stable point~$x_1^s$ cannot exist either. Let us consider the case where the tetrahedron has one more stable point~$x_2^s$. The parallel isolated ascending curves connecting $x_1^h$ to $x_1^s$ form a cycle and run on all 3 faces of the tetrahedron other than the one containing $x_2^s$. The same logic applies as previously: a cycle on less faces would not be possible, a cycle on more faces would have to contain $x_2^s$ too. The cycle encircles one of the vertices of the tetrahedron on its own, which must be unstable according to the quadrangle lemma. This leads to a contradiction though, because every edge connected to an unstable vertex must be \emph{followed}, but one of the $x_1^h-x_1^s$ isolated ascending curves also \emph{crossed} them. The extension to 3 or 4 stable points is also similar to the unstable case: adding more of these points would make it impossible to form a cycle of parallel isolated ascending curves on the faces of the tetrahedron.
\end{proof}

As a consequence, the previously cited tetrahedron is in the first row of Table~\ref{tbl:twotwo}. In the next step we construct polyhedra for the bottom row of the table. Start with a body that has parallel isolated ascending curves between one of its saddle and its single stable point already, the monostable polyhedron of Conway and Guy~\cite{doi:10.1137/1011014}. It has 1~stable, 3~saddle and 4~unstable points, see these in Figure~\ref{fig:conway}.

\begin{figure}[!ht]
  \centering
  \begin{tikzpicture}[scale=.65]
    \tikzmath {
      int \n, \m;
      \m=9;
      \h=1.5;
      \b=180.0/\m;
      \px=0;
      \py=\h;
      {\draw[gray!50!white] (0,0) -- (\px,\py);};
      for \n in {1,...,\m-1} {
        \r=\h*cos(\b)^\n;
        \a=90.0-\n*\b;
        \x=\r*cos(\a);
        \y=\r*sin(\a);
        {\draw (\px,\py) -- (\x,\y);};
        {\draw (-\px,\py) -- (-\x,\y);};
        {\draw[gray!50!white] (0,0) -- (\x,\y);};
        \px=\x;
        \py=\y;
      };
      {\draw (\px,\py) -- (-\px,\py);};
      {\draw[gray!50!white] (0,0) -- (\px,\py) circle (0.05);};
      {\draw[gray!50!white] (0,0) -- (0,\py) circle (0.05);};
      {\fill[blue] (0,\py) circle (0.07);};
      {\fill[red] (\px,\py) circle (0.07);};
      {\fill[red] (-\px,\py) circle (0.07);};
    }
    \fill (0,0) circle (0.05);
    \draw[gray] (-1.7,0) -- (0,0);
    \draw[gray] (-1.7,\h) -- (0,\h);
    \draw[gray,<->] (-1.5,0) -- (-1.5,\h);
    \draw[gray] (-1.7,\h/2) node {$r$};
    \begin{scope}[xshift=5cm]
      \tikzmath {
        \a=2.0;
        \s=2.0/(\h+\h*cos(\b)^\m);
        for \n in {0,...,\m} {
          \y=(\h*cos(\b)^\n)*sin(90-\n*\b);
          \z=\a+(\h-\y)*\s;
          {\draw (-\z,\y) -- (\z,\y);};
        };
        \y0=\h*sin(90);
        \z0=(\h-\y0)*\s+\a;
        \ym=(\h*cos(\b)^\m)*sin(90-\m*\b);
        \zm=(\h-\ym)*\s+\a;
        {\draw (-\z0,\y0) -- (-\zm,\ym);};
        {\draw (\z0,\y0) -- (\zm,\ym);};
      }
      \fill[blue] (0,\y0) circle (0.07);
      \fill[red] (\zm,\ym) circle (0.07);
      \fill[red] (-\zm,\ym) circle (0.07);
      \draw[gray] (0,\h) -- (0,\h+0.5);
      \draw[gray] (\a,\h) -- (\a,\h+0.5);
      \draw[gray] (\a+2,{-\h*cos(\b)^\m}) -- (\a+2,\h+0.5);
      \draw[gray,<->] (0,\h+0.3) -- (\a,\h+0.3);
      \draw[gray,<->] (\a,\h+0.3) -- (\a+2,\h+0.3);
      \draw[gray] (\a/2,\h+0.5) node {$a$};
      \draw[gray] (\a+1,\h+0.5) node {$b$};
    \end{scope}
    \begin{scope}[xshift=5cm,yshift=-3cm]
      \tikzmath {
        \a=2.0;
        \s=2.0/(\h+\h*cos(\b)^\m);
        for \n in {4,...,\m-1} {
          \x=(\h*cos(\b)^\n)*cos(90-\n*\b);
          \y=(\h*cos(\b)^\n)*sin(90-\n*\b);
          \z=\a+(\h-\y)*\s;
          {\draw (-\z,-\x) -- (\z,-\x);};
          {\draw (-\z,\x) -- (\z,\x);};
          if \n != 4 then {
            {\draw (\z,\x) -- (\pz,\px);};
            {\draw (-\pz,\px) -- (-\z,\x);};
            {\draw (\z,-\x) -- (\pz,-\px);};
            {\draw (-\pz,-\px) -- (-\z,-\x);};
          };
          \px=\x;
          \pz=\z;
        };
        {\draw (-\pz,\px) -- (-\pz,-\px);};
        {\draw (\pz,\px) -- (\pz,-\px);};
      }
      \fill[green!70!black] (0,0) circle (0.07);
      \fill[blue] (-\pz,0) circle (0.07);
      \fill[blue] (\pz,0) circle (0.07);
      \fill[red] (\pz,\px) circle (0.07);
      \fill[red] (\pz,-\px) circle (0.07);
      \fill[red] (-\pz,\px) circle (0.07);
      \fill[red] (-\pz,-\px) circle (0.07);
    \end{scope}
  \end{tikzpicture}
  \caption{Monostable polyhedron of Conway and Guy. The stable point is shown in green, saddles are shown in blue, unstable points are shown in red. The figure is not proportional, for example $r=1,a=0.1,b=30$ results in a monostable polyhedron}\label{fig:conway}
\end{figure}
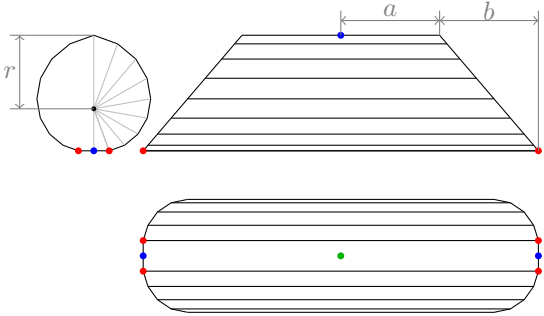

We merge the two pairs of unstable points and the saddle between them at both ends of the polyhedron. Next we raise the two resulting unstable points above their current plane by a small amount as shown in Figure~\ref{fig:split}. The resulting polyhedron is in the class $\{2,2\}$. We end up in different cells of our table depending on how much we move the edge with the saddle on the bottom. Using the notation of the previous two figures, if $c=0$, then there is two isolated ascending curves between the two saddle points, resulting in a degenerate master graph. If $c$ is greater than 0 but close to it, then the polyhedron is in the first column. If $c$ is less then $a+b$ but close to it, then the body is in the second column. If $c$ is close to neither 0 nor $a+b$ then one of the stable points vanishes, the polyhedron leaves the class $\{2,2\}$.

\begin{figure}[!ht]
  \centering
  \begin{tikzpicture}[scale=.65]
    \tikzmath {
      int \n, \m;
      \m=9;
      \h=1.5;
      \b=180.0/\m;
      \px=0;
      \py=\h;
      for \n in {1,...,\m-2} {
        \r=\h*cos(\b)^\n;
        \a=90.0-\n*\b;
        \x=\r*cos(\a);
        \y=\r*sin(\a);
        {\draw (\px,\py) -- (\x,\y);};
        {\draw (-\px,\py) -- (-\x,\y);};
        \px=\x;
        \py=\y;
      };
    }
    \draw (\px,\py) -- (0,\py-0.05) -- (-\px,\py);
    \fill[red] (0,\py-0.05) circle (0.07);  
    \begin{scope}[xshift=5cm]
      \tikzmath {
        \a=2.0;
        \s=2.0/(\h+\h*cos(\b)^\m);
        for \n in {0,...,\m - 2} {
          \y=(\h*cos(\b)^\n)*sin(90-\n*\b);
          \z=\a+(\h-\y)*\s;
          {\draw (-\z,\y) -- (\z,\y);};
        };
        \y0=\h*sin(90);
        \z0=(\h-\y0)*\s+\a;
        \y2=(\h*cos(\b)^(\m-2))*sin(90-(\m-2)*\b);
        \z2=(\h-\y2)*\s+\a;
        {\draw (-\z0,\y0) -- (-\z2,\y2);};
        {\draw (\z0,\y0) -- (\z2,\y2);};
        \y1=(\h*cos(\b)^(\m-1))*sin(90-(\m-1)*\b);
        \z1=(\h-\y1)*\s+\a;
        {\draw (-\z2,\y2) -- (1,\y1) -- (\z2,\y2);};
      }
      \fill[blue] (0,\h) circle (0.07);
      \fill[red] (-\z2,\y2) circle (0.07);  
      \fill[red] (\z2,\y2) circle (0.07);  
    \end{scope}
    \begin{scope}[xshift=5cm,yshift=-3cm]
      \draw[gray] (0,0) -- (0,1.7);
      \draw[gray] (1,0) -- (1,1.7);
      \draw[gray,<->] (0,1.5) -- (1,1.5);
      \draw[gray] (0.5,1.7) node {$c$};
      \tikzmath {
        \a=2.0;
        \s=2.0/(\h+\h*cos(\b)^\m);
        for \n in {4,...,\m-2} {
          \x=(\h*cos(\b)^\n)*cos(90-\n*\b);
          \y=(\h*cos(\b)^\n)*sin(90-\n*\b);
          \z=\a+(\h-\y)*\s;
          {\draw (-\z,-\x) -- (\z,-\x);};
          {\draw (-\z,\x) -- (\z,\x);};
          if \n != 4 then {
            {\draw (\z,\x) -- (\pz,\px);};
            {\draw (-\pz,\px) -- (-\z,\x);};
            {\draw (\z,-\x) -- (\pz,-\px);};
            {\draw (-\pz,-\px) -- (-\z,-\x);};
          };
          \px=\x;
          \pz=\z;
        };
        \n=\m-1;
        \x=(\h*cos(\b)^\n)*cos(90-\n*\b);
        \y=(\h*cos(\b)^\n)*sin(90-\n*\b);
        \z=\a+(\h-\y)*\s;
        {\draw (-\pz,\px) -- (-\z,0) -- (-\pz,-\px);};
        {\draw (\pz,\px) -- (\z,0) -- (\pz,-\px);};
        {\draw (-\pz,\px) -- (1,\x) -- (\pz,\px);};
        {\draw (-\pz,-\px) -- (1,-\x) -- (\pz,-\px);};
        {\draw (-\z,0) -- (1,\x) -- (\z,0);};
        {\draw (-\z,0) -- (1,-\x) -- (\z,0);};
        {\draw (1,\x) -- (1,-\x);};
      }
      \fill[red] (-\z,0) circle (0.07);
      \fill[red] (\z,0) circle (0.07);
      \fill[blue] (1,0) circle (0.07);
      \fill[green!70!black] (0,0) circle (0.07);
      \fill[green!70!black] (1.5,0) circle (0.07);
    \end{scope}
  \end{tikzpicture}
  \caption{Polyhedron with 2~stable and 2~unstable equilibria}\label{fig:split}
\end{figure}
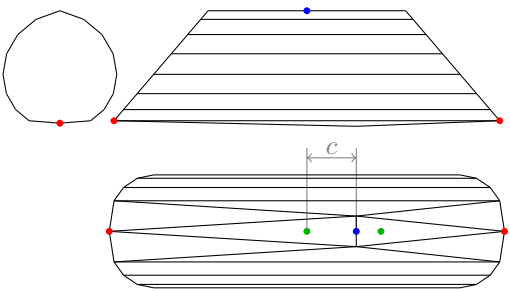

We have seen that in a non\=/degenerate Reeb\=/graph, every saddle point has at least one of its 3 neighbors at higher distance and at least one at lower distance from the center of gravity~$o$. This means that there can only be 2 non\=/isomorphic Reeb\=/graphs in $\{2,2\}$, both shown in Table~\ref{tbl:twotwo}. The same primary class also has 2 Morse\==Smale graphs~\cite{domokos_morse_smale}. However, the number of master graphs inside this class is not 4 but only 3.

\begin{proof}[Proof of Lemma~\ref{lem:cardinality}]
  We have shown that $G(2,2)\geq 3$ by providing a polyhedron in 3 of its tertiary classes.

  Let us use the notation of Table~\ref{tbl:twotwo}, and assume that there exists a master graph in the cell at the first row and second column. According to the Reeb\=/graph, the saddle contour through $x_1^h$ divides the surface into 3 parts, one containing the stable point~$x_2^s$, one containing the unstable point~$x_1^u$ and one containing the rest. Notice that the other stable point~$x_1^s$ is in this last part.

  Considering the neighbors of the saddle~$x_1^h$ in the Reeb\=/graph, only 1 of them has a lower value of $r_K$ than $r_K(x_1^h)$, namely the stable point~$x_2^s$. This means that both stable\==saddle isolated integral curves fall within the same one of the 3 previously mentioned parts bounded by the saddle contour of $x_1^h$. Because this part contains $x_2^s$ only, both curves are connected to this stable point. These two parallel edges contradict our assumption on the Morse\==Smale graph, leading to no master graph in the cell at the first row and second column of Table~\ref{tbl:twotwo}.
\end{proof}

\section{Classification of natural shapes}\label{sec:flocks}

Our last goal is the classification of natural shapes, which leads us to the solution of the algorithmic challenge mentioned in Section~\ref{sec:intro}. The equilibria of a pebble can be measured by hand, but this method is highly dependent on the person's abilities carrying out the experiment. Manual measurement of boulders gets more difficult if not outright impossible as their size increases. We utilize 3D scanning technology to achieve consistency in the results and to get around size limitations.

In such scenarios equilibria appear on two separate scales: the global value~$N$ corresponds to the approximation of the particle's convex hull by a sufficiently smooth surface while the local value~$N^\triangle$ corresponds to the polyhedral approximation (with faces of maximal diameter~$\triangle$) acquired by 3D scanning. The main obstacle is that in general $\lim_{\triangle\to 0}N^\triangle>N$~\cite{domokos_discretized, domokos_rocking}.

Local equilibria appear in \emph{flocks} which are spatially localized around the locations of global equilibria. This poses a considerable problem for measurements: while only $N^\triangle$ is directly available from any 3D scanned dataset, the physically relevant quantity is $N$. To obtain $N$ based on $N^\triangle$, one needs some artificial ``blurring'' of the data.

Edelsbrunner et.al.~achieved this by defining a simplifying operation on the Morse\==Smale graph called \emph{cancellation}, which eliminates two adjacent critical points~\cite{edelsbrunner_morse_smale}. First we describe the original operation, then present our extension of it to the master graph. There are two possible combinations of critical points to be cancelled: a minimum and a saddle or a saddle and a maximum, but they are symmetrical. Let $x^h$ be the saddle and $x^u_1$ the unstable vertex of the canceled pair while $x^u_2$ the other unstable vertex connected to $x^h$ ($x^u_1\neq x^u_2$). The cancellation combines the three vertices into $x^u_2$ by removing every edge connected to $x^h$ and merging the $x^u_1$ and $x^u_2$ vertices.

The basic idea is the same on the master graph, but only the two cancelled vertices must be connected directly, the remaining stable or unstable point can be connected to the cancelled saddle through an isolated path. This requires some extra bookkeeping in the following definition to end up with a valid master graph after cancellation.

\begin{definition}\label{def:cancel}
  Let $x^h$ be a saddle and $x^u_1$ an unstable point such that there is an $x^h-x^u_1$ edge in the master graph. Let $x^u_2$ be the other unstable point that is connected to $x^h$ through an isolated path ($x^u_1\neq x^u_2$), let the intersection points on this path be $y_1,\dots,y_n$. The \emph{cancellation of~$x^h$} merges $x^u_1$ and $x^h$ into $x^u_2$. The critical value in the remaining vertex is $r_K(x^u_2):=\max\{r_K(x^u_1),r_K(x^u_2)\}$. The steps are the following:
  \begin{enumerate}
  \item Remove every intersection along the two isolated paths originating from $x^h$ and ending in a stable vertex.
  \item Let $D$ be the set of the third\=/to\=/last vertices on every isolated path ending in $x^u_1$. Every isolated path ending in $x^u_1$~--~except for the $x^h-x^u_1$ edge~--~is at least 3~vertices long because they originate from a saddle, end in $x^u_1$ and intersect the contour line through $x^h$.
  \item Remove $x^u_1$ and all of its neighbors.
  \item Copy the $y_1,\dots,y_n$ vertices and the edges connected to them $|D|-1$ times and connect a unique copy of $y_1$ to every vertex in $D$. These new edges are not contour edges.
  \end{enumerate}
\end{definition}

Figure~\ref{fig:cancellation} shows an example.

\begin{figure}[!ht]
  \begin{subfigure}[b]{\columnwidth}
    \centering
    \begin{tikzpicture}[dot/.style={circle, draw=black, inner sep=0, minimum size=.5cm}]
      \node[dot] (Xa1) at (-1,1) {$y'_1$};
      \node[dot] (Xa2) at (-2,1) {$y'_2$};
      \node[dot] (Xs1) at (-1,-1) {$y^*_1$};
      \node[dot] (Xs2) at (-2,-1) {$y^*_2$};
      \node[dot] (Ul) at (0,0) {$x^u_1$};
      \node[dot] (H) at (1,0) {$x^h$};
      \node[dot] (X1) at (2,0) {$y_1$};
      \node[dot] (Uh) at (3,0) {$x^u_2$};
      \node[dot] (S1) at (1,1.5) {$x^s_1$};
      \node[dot] (S2) at (1,-1.5) {$x^s_2$};
      \node[dot] (H1) at (4,1) {$x^h_1$};
      
      \draw (H) to (Ul);
      \draw (H) to (X1);
      \draw (X1) to (Uh);
      \draw (Xa1) to (Ul);
      \draw (Xs1) to (Ul);
      \draw (H1) to (Uh);
      \draw (Xa1) to (Xa2);
      \draw (Xs1) to (Xs2);
      \draw[dashed] (Xa1) to (H);
      \draw[dashed] (Xs1) to (H);
      \draw[dashed] (X1) to (H1);
      \draw[decorate, decoration=snake] (H) to (S1);
      \draw[decorate, decoration=snake] (H) to (S2);
    \end{tikzpicture}
    \caption{Part of a master graph. Dashed: contour edge, squiggly: isolated path with internal vertices omitted}
  \end{subfigure}\\
  \begin{subfigure}[b]{\columnwidth}
    \centering
    \begin{tikzpicture}[dot/.style={circle, draw=black, inner sep=0, minimum size=.5cm}]
      \node[dot] (Xa1) at (-1,1) {$y'_1$};
      \node[dot] (Xa2) at (-2,1) {$y'_2$};
      \node[dot] (Xs1) at (-1,-1) {$y^*_1$};
      \node[dot] (Xs2) at (-2,-1) {$y^*_2$};
      \node[dot] (Ul) at (0,0) {$x^u_1$};
      \node[dot] (H) at (1,0) {$x^h$};
      \node[dot] (X1) at (2,0) {$y_1$};
      \node[dot] (Uh) at (3,0) {$x^u_2$};
      \node[dot] (S1) at (1,1.5) {$x^s_1$};
      \node[dot] (S2) at (1,-1.5) {$x^s_2$};
      \node[dot] (H1) at (4,1) {$x^h_1$};
      
      \draw (H) to (Ul);
      \draw (H) to (X1);
      \draw (X1) to (Uh);
      \draw (Xa1) to (Ul);
      \draw (Xs1) to (Ul);
      \draw (H1) to (Uh);
      \draw (Xa1) to (Xa2);
      \draw (Xs1) to (Xs2);
      \draw[dashed] (Xa1) to (H);
      \draw[dashed] (Xs1) to (H);
      \draw[dashed] (X1) to (H1);
    \end{tikzpicture}
    \caption{The graph after step 1. \(D=\{y'_2, y^*_2\}\)}
  \end{subfigure}\\
  \begin{subfigure}[b]{\columnwidth}
    \centering
    \begin{tikzpicture}[dot/.style={circle, draw=black, inner sep=0, minimum size=.5cm}]
      \node[dot] (Xa2) at (-2,1) {$y'_2$};
      \node[dot] (Xs2) at (-2,-1) {$y^*_2$};
      \node[dot] (X1) at (2,0) {$y_1$};
      \node[dot] (Uh) at (3,0) {$x^u_2$};
      \node[dot] (S1) at (1,1.5) {$x^s_1$};
      \node[dot] (S2) at (1,-1.5) {$x^s_2$};
      \node[dot] (H1) at (4,1) {$x^h_1$};
      
      \draw (X1) to (Uh);
      \draw (H1) to (Uh);
      \draw[dashed] (X1) to (H1);
    \end{tikzpicture}
    \caption{The graph after step 3.}
  \end{subfigure}\\
  \begin{subfigure}[b]{\columnwidth}
    \centering
    \begin{tikzpicture}[dot/.style={circle, draw=black, inner sep=0, minimum size=.5cm}]
      \node[dot] (Xa1) at (2,1) {$\ddot{y}_1$};
      \node[dot] (Xa2) at (-2,1) {$y'_2$};
      \node[dot] (Xs1) at (2,-1) {$\hat{y}_1$};
      \node[dot] (Xs2) at (-2,-1) {$y^*_2$};
      \node[dot] (Uh) at (3,0) {$x^u_2$};
      \node[dot] (S1) at (1,1.5) {$x^s_1$};
      \node[dot] (S2) at (1,-1.5) {$x^s_2$};
      \node[dot] (H1) at (4,1) {$x^h_1$};
      
      \draw (H1) to (Uh);
      \draw (Xa1) to (Xa2);
      \draw (Xs1) to (Xs2);
      \draw (Xa1) to (Uh);
      \draw (Xs1) to (Uh);
      \draw[dashed] (Xa1) to (H1);
      \draw[dashed] (Xs1) to [bend right] (H1);
    \end{tikzpicture}
    \caption{The resulting graph}
  \end{subfigure}
  \caption{Canceling the saddle~$x^h$}\label{fig:cancellation}
\end{figure}
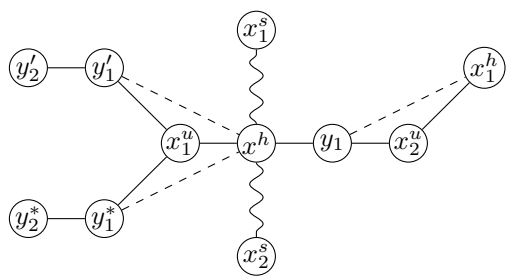
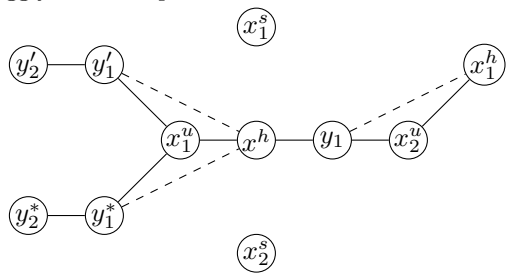
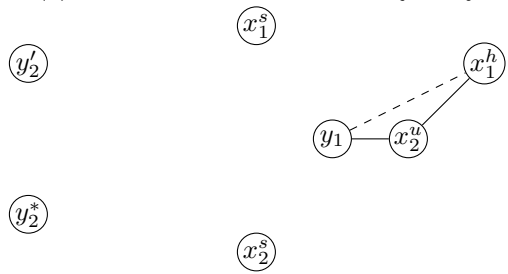
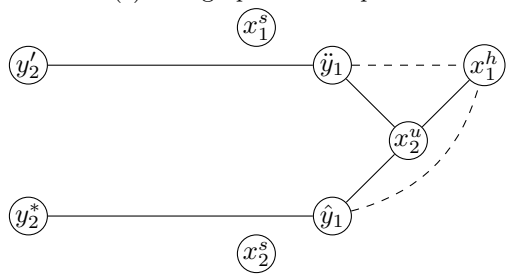

Now that we defined the basic simplification step, the next question becomes that which of the equilibria we should cancel and in what order. We draw our inspiration from the same article as before, where the authors utilized the level set of the surface. They called a critical point \emph{positive} if it created a new contour line and \emph{negative} if it destroyed one while ascending in the codomain of the distance function. Minima are positive, maxima are negative and saddles can be either positive or negative. Every negative saddle was paired with the preceding positive minimum and every negative maximum was paired with the preceding positive saddle. They defined the \emph{persistence} of a pair of equilibria as the difference in function value between the two points. Saddles were then canceled from lower to higher persistence until the desired Morse\==Smale complex was reached.

These steps are easily adapted to the master graph, as it contains the necessary information from the level sets of the surface. At any given point, the saddle with the lowest persistence is connected directly to a stable or an unstable point. There is always at least one such edge in the graph. We choose the next equilibria to be canceled by iterating over all the remaining edges of the graph looking for the saddle\==stable or saddle\==unstable edge having the lowest difference in the value of $r_K$ at its vertices. We cancel the saddles until we get the closest possible to the number of stable and unstable points measured by hand.

\subsection{Example on a single pebble}
% Ez a K1_2.stl alapján készült.

In this section we show the steps the algorithm takes to classify a single pebble. First it was established via manual measurements that this pebble has 2~stable and 3~unstable equilibrium points, placing it in the $\{2, 3\}$ primary class. Next the pebble was scanned using a 3D scanner, resulting in the polyhedron denoted from now on by $P_\text{ex}$. Determining the higher order equilibrium classes consists of three main steps:
\begin{enumerate*}[label=\arabic*)]
\item constructing the master graph of $P_\text{ex}$,
\item sorting the saddles and canceling them up to the point where the master graph has only 2~stable and 3~unstable points,
\item creating the Reeb\=/graph and Morse\==Smale graph from the master graph.
\end{enumerate*}
In the rest of this section we will discuss these steps in detail. The source code of the algorithm's implementation is also available~\cite{balazs_ludmany_2023_7599635}.

\subsubsection[Construction of the master graph of $P_\text{ex}$]{Construction of the master graph of $\boldsymbol{P_\text{ex}}$}

In the very first step the algorithm goes over all the edges of $P_\text{ex}$ and determines if they contain a saddle point. There are 25 such edges with saddle points on them in this example. This results in a list of $(e_i,x^h_i)$ pairs where $e_i$ is the edge containing the saddle point~$x^h_i$ and $i$ goes from 1 to 25.

The next step is tracing the saddle contour line containing the saddle~$x^h_i$ for every $i\in [1, 25]$. As established in Section~\ref{sec:poly_reeb}, contour lines are closed, continuous curves made up of circular arcs. Every arc is either a full circle on a single face or has its endpoints on edges of $P_\text{ex}$. We can therefore trace a contour line in an iterative fashion by always finding the next arc connected to one of the endpoints of a partial contour line until the curve is closed.

We have also established previously that the saddle contour line containing any $x^h_i$ has an arc on each face adjacent to $e_i$. Any of these two can be the starting partial contour line in the previous paragraph's iterative method. The resulting saddle contour lines are shown in blue in Figure~\ref{fig:poly_master}. You can also take a closer look at one of the flocks in Figure~\ref{fig:poly_master_cropped}.

Once we have all the saddle contours, the next step is tracing the isolated ascending curves. Every saddle point is the origin of 2 such curves and the destination of 2 such curves. We start from the point~$x^h_i$ for every $i\in[1,25]$ and trace all 4 curves with an iterative algorithm, one segment at a time. One curve is constructed toward each endpoint of $e_i$ in the direction of the extended gradient, and one curve is constructed toward each face adjacent to $e_i$ in the direction opposite to the extended gradient. For both types of curves the last iteration stops at an unstable or stable point, respectively. The two figures mentioned in the previous paragraph show these curves as well.

\begin{figure}[!ht]
  \centering
  \begin{tikzpicture}
  \draw (0,0) node {\includegraphics{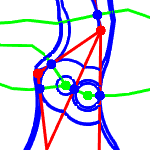}};
  \draw (0.9,-0.7) -- (3,-0.2) ++(0.3,0) node {\(x^h_1\)};
  \draw (0,-0.5) -- (3,0.2) ++(0.3,0) node {\(x^h_2\)};
  \draw (-0.8,0.4) -- (3,0.6) ++(0.3,0) node {\(x^h_3\)};
  \draw (0.9,1.6) -- (3,1) ++(0.3,0) node {\(x^u\)};
  \draw (0.8,2.1) -- (3,1.4) ++(0.3,0) node {\(x^h_4\)};
  \end{tikzpicture}
  \caption{A flock of equilibria. Stable, unstable and saddle points are represented by green, red and blue dots respectively. Saddle contour lines are shown in blue, green lines are stable\==saddle, red lines are saddle\==unstable ascending curves}\label{fig:poly_master_cropped}
\end{figure}

The creation of the master graph starts by adding the vertices~$x^h_i$ for every $i\in[1,25]$. We refer to the vertices of the master graph by the same name we used for their corresponding points on $P_\text{ex}$. The graph is extended every time an isolated ascending curve is traced, take for example the one between $x^h_1$ and $x^u$ in Figure~\ref{fig:poly_master_cropped}. The $x^h_1-x^u$ curve intersects the saddle contours of $x^h_2, x^h_3$ and $x^h_4$, therefore we add intersection points denoted by $y_2, y_3$ and $y_4$ to the graph. Every intersection point is then connected to the saddle with the same index by a contour edge. If this is the first time we encounter $x^u$ we assign a new vertex to it in the master graph. The last step is adding gradient edges along the $x^h_1-y_2-y_3-y_4-x^u$ path.

\subsubsection{Cancellation of saddles}

At this point we have the master graph of the polyhedron $P_\text{ex}$ with 14 stable and 13 unstable vertices. We sort the saddles and apply cancellations as described in Section~\ref{sec:flocks} until we get as close to the hand measurements as possible. In this particular case we can get a graph exactly in the $\{2,3\}$ primary class. Figure~\ref{fig:example_master} shows the master graph after the cancellations.

It is important to mention that cancellation works on the master graph, it does not modify the polyhedron itself. After any number of cancellations, the contour lines passing through the remaining saddles can still be visualized on the polyhedron. On the other hand, there is no one\=/to\=/one correspondence from edges of the master graph to isolated ascending curves on the surface of $P_\text{ex}$ anymore. The method we use to circumvent this issue is as follows (using the notations of definition~\ref{def:cancel}): every isolated ascending curve with its destination at the unstable point~$x^u_1$ is extended with the segments of both the $x^h-x^u_1$ and the $x^h-x^u_2$ isolated ascending curves before cancellation. The resulting curve is not strictly ascending anymore, but in our experience it gives a good intuition of where the corresponding isolated integral curve on the original body would be. Figure~\ref{fig:cancelled_master} shows this kind of visualization of the master graph of $P_\text{ex}$ after the last cancellation.

\subsubsection{Creation of the Reeb\=/graph and Morse\==Smale graph}

The steps to get the Reeb\=/graph and the Morse\==Smale graph from the master graph have already been described in Section~\ref{sec:master}. The resulting graphs for $P_\text{ex}$ can be seen in Figure~\ref{fig:example_graphs}.

\begin{figure}[!ht]
  \centering
  \begin{subfigure}[t]{.3\textwidth}
    \centering
    \begin{tikzpicture}[scale=.5]
        \node[circle, fill=green, minimum size=.2cm] (S1) at (1,-1) {};
        \node[circle, fill=green, minimum size=.2cm] (S2) at (5,-1) {};
    
        \node[circle, fill=blue, minimum size=.2cm] (H1) at (3,0.5) {};
        \node[circle, fill=orange, minimum size=.2cm] (I1) at (0,0.5) {};
        \node[circle, fill=orange, minimum size=.2cm] (I2) at (1,0.5) {};
        \node[circle, fill=orange, minimum size=.2cm] (I3) at (2,0.5) {};
        \node[circle, fill=orange, minimum size=.2cm] (I4) at (4,0.5) {};
        \node[circle, fill=orange, minimum size=.2cm] (I5) at (5,0.5) {};
        \node[circle, fill=orange, minimum size=.2cm] (I6) at (6,0.5) {};
    
        \node[circle, fill=blue, minimum size=.2cm] (H2) at (3,2) {};
        \node[circle, fill=orange, minimum size=.2cm] (I7) at (0,2) {};
        \node[circle, fill=orange, minimum size=.2cm] (I8) at (1,2) {};
        \node[circle, fill=orange, minimum size=.2cm] (I9) at (2,2) {};
        \node[circle, fill=orange, minimum size=.2cm] (I10) at (4,2) {};
        \node[circle, fill=orange, minimum size=.2cm] (I11) at (5,2) {};
        \node[circle, fill=orange, minimum size=.2cm] (I12) at (6,2) {};
    
        \node[circle, fill=blue, minimum size=.2cm] (H3) at (1,3) {};
        \node[circle, fill=orange, minimum size=.2cm] (I13) at (0,3) {};
        \node[circle, fill=orange, minimum size=.2cm] (I14) at (2,3) {};
    
        \node[circle, fill=blue, minimum size=.2cm] (H4) at (5,3) {};
        \node[circle, fill=orange, minimum size=.2cm] (I15) at (4,3) {};
        \node[circle, fill=orange, minimum size=.2cm] (I16) at (6,3) {};
    
        \node[circle, fill=red, minimum size=.2cm] (U1) at (0,4) {};
        \node[circle, fill=red, minimum size=.2cm] (U2) at (2,4) {};
        \node[circle, fill=red, minimum size=.2cm] (U3) at (4,4) {};
        \node[circle, fill=red, minimum size=.2cm] (U4) at (6,4) {};
        
        %H1
        \draw[very thick, blue] (H1) to[bend left] (I1) (H1) to[bend right] (I2) (H1) to (I3) (H1) to (I4) (H1) to[bend left] (I5) (H1) to[bend right] (I6);
        \draw[very thick, red] (H1) to (I7) to (I13) to (U1) (H1) to (I12) to (I16) to (U4);
        \draw[very thick, green] (H1) to (S1) (H1) to (S2);
    
        %H2
        \draw[very thick, blue] (H2) to[bend left] (I7) (H2) to[bend right] (I8) (H2) to (I9) (H2) to (I10) (H2) to[bend left] (I11) (H2) to[bend right] (I12);
        \draw[very thick, red] (H2) to (I14) to (U2) (H2) to (I15) to (U3);
        \draw[very thick, green] (H2) to (I3) to (S1) (H2) to (I4) to (S2);
    
        %H3
        \draw[very thick, blue] (H3) to (I13) (H3) to (I14);
        \draw[very thick, red] (H3) to (U1) (H3) to (U2);
        \draw[very thick, green] (H3) to (I8) to (I1) to (S1) (H3) to (I9) to (I2) to (S2);
    
        %H4
        \draw[very thick, blue] (H4) to (I15) (H4) to (I16);
        \draw[very thick, red] (H4) to (U3) (H4) to (U4);
        \draw[very thick, green] (H4) to (I10) to (I5) to (S1) (H4) to (I11) to (I6) to (S2);
    \end{tikzpicture}
    \caption{Master graph. Contour edges are in blue, green edges are on a stable\==saddle isolated path, red edges are on a saddle\==unstable isolated path}\label{fig:example_master}
  \end{subfigure}
  \begin{subfigure}[t]{.3\textwidth}
    \centering
    \begin{tikzpicture}[scale=.5]
        \node[circle, fill=green, minimum size=.2cm] (S1) at (1,-1) {};
        \node[circle, fill=green, minimum size=.2cm] (S2) at (5,-1) {};

        \node[circle, fill=blue, minimum size=.2cm] (H1) at (3,0.5) {};
        \node[circle, fill=blue, minimum size=.2cm] (H2) at (3,2) {};
        \node[circle, fill=blue, minimum size=.2cm] (H3) at (1,3) {};
        \node[circle, fill=blue, minimum size=.2cm] (H4) at (5,3) {};

        \node[circle, fill=red, minimum size=.2cm] (U1) at (0,4) {};
        \node[circle, fill=red, minimum size=.2cm] (U2) at (2,4) {};
        \node[circle, fill=red, minimum size=.2cm] (U3) at (4,4) {};
        \node[circle, fill=red, minimum size=.2cm] (U4) at (6,4) {};

        \draw[very thick] (S1) to (H1) (S2) to (H1) (H1) to (H2) (H2) to (H3) (H2) to (H4) (H3) to (U1) (H3) to (U2) (H4) to (U3) (H4) to (U4);
    \end{tikzpicture}
    \caption{Reeb\=/graph}
  \end{subfigure}
  \begin{subfigure}[t]{.3\textwidth}
    \centering
    \begin{tikzpicture}[scale=.5]
        \node[circle, fill=green, minimum size=.2cm] (S1) at (1,-1) {};
        \node[circle, fill=green, minimum size=.2cm] (S2) at (5,-1) {};
    
        \node[circle, fill=blue, minimum size=.2cm] (H1) at (3,0.5) {};
        \node[circle, fill=blue, minimum size=.2cm] (H2) at (3,2) {};
        \node[circle, fill=blue, minimum size=.2cm] (H3) at (1,3) {};
        \node[circle, fill=blue, minimum size=.2cm] (H4) at (5,3) {};
    
        \node[circle, fill=red, minimum size=.2cm] (U1) at (0,4) {};
        \node[circle, fill=red, minimum size=.2cm] (U2) at (2,4) {};
        \node[circle, fill=red, minimum size=.2cm] (U3) at (4,4) {};
        \node[circle, fill=red, minimum size=.2cm] (U4) at (6,4) {};
    
        \draw[very thick, green] (H1) to (S1) (H1) to (S2);
        \draw[very thick, red] (H1) to[bend left] (U1) (H1) to[bend right] (U4);
        \draw[very thick, green] (H2) to (S1) (H2) to (S2);
        \draw[very thick, red] (H2) to (U2) (H2) to (U3);
        \draw[very thick, green] (H3) to (S1) (H3) to[bend right] (S2);
        \draw[very thick, red] (H3) to (U1) (H3) to (U2);
        \draw[very thick, green] (H4) to[bend left] (S1) (H4) to (S2);
        \draw[very thick, red] (H4) to (U3) (H4) to (U4);
    \end{tikzpicture}
    \caption{Morse\==Smale graph. Green edges represent stable\==saddle isolated integral curves, red edges represent saddle\==unstable isolated integral curves}
  \end{subfigure}
  \caption{Graphs of $P_\text{ex}$. Vertices in the same position in all 3~subfigures correspond to the same equilibrium point. Stable, saddle, unstable and intersection points are represented by green, blue, red and orange points, respectively}\label{fig:example_graphs}
\end{figure}
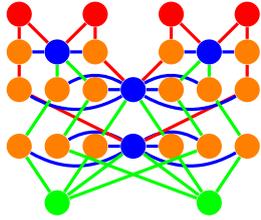
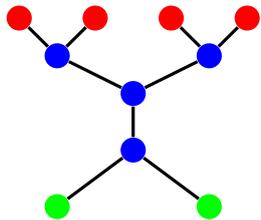
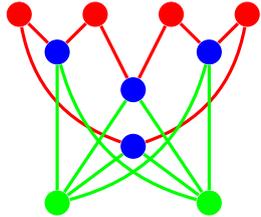

\subsection{Cataloging a population of pebbles}\label{sec:bias}

In the previous example we purposefully selected a pebble where~--~after several cancellations~--~the number of computed stable and unstable equilibria perfectly matched the hand measurements. This is not always the case. Take for instance a pebble that was placed in the class $\{3,4\}$, with its master graph having 4 stable and 4 unstable vertices at an intermediate step of the algorithm. The number of stable points is off by 1, and the next cancellation can either decrease this or decrease the number of unstable points. In the first case we get a perfect match, but in the second case both values would be off by 1, resulting in a worse match.

We have to keep in mind that despite all their best efforts, the people measuring the pebbles by hand can make mistakes. Consequently a perfect match is not inherently possible or even desirable in every single case. Following the previous paragraph's logic, we computed the master graph that is off by the smallest amount for 271 pebbles in our inventory. More formally, after every cancellation we calculated the absolute difference of the manual and the computer measurements in both the number of stable and the number of unstable points, and selected the master graph where the sum of these two absolute differences was the lowest.

A full list of every single pebble with a photo and a host of manually measured and computed values of both classical and mechanical shape descriptors is available in Online Resource 1. The 3D scan of every pebble was also published in a repository~\cite{balazs_ludmany_2023_7609228} with the \LaTeX~source code of the aforementioned document and the scripts compiling it. Be aware that some of the pebbles could not be photographed or their material could not be identified, in this case the corresponding field is left empty. The visual representation of graphs in the document are for illustration purposes only, our program outputs an alphanumerical encoding of every graph which can be used for actual classification.

Table~\ref{tbl:stat} gives the number of pebbles with a given difference in the number of computed stable/unstable points compared to the hand measurements. In 185 out of 271 cases ($68.27\%$) the computed master graph perfectly matched the measured primary class of the corresponding pebble. There are no pebbles where both values are positive because further cancellations always get results closer to the hand measurements. There are also no cases where both values are negative because further cancellations would only increase the sum of their absolute values.

\begin{table}[!ht]
\centering
\caption{The number of pebbles with a given difference in the number of computed stable/unstable points compared to the hand measurements}\label{tbl:stat}
\begin{tabular}{p{1.5cm}|*{5}{r}|r}
\multirow{2}{\linewidth}{stable difference}&\multicolumn{5}{c|}{unstable difference}&\multirow{2}{*}{sum}\\
&-1&0&1&2&3\\
\hline
\hfill -3&0&0&0&0&0&0\\
\hfill -2&0&0&1&1&0&2\\
\hfill -1&0&27&4&4&0&35\\
\hfill 0&4&185&32&10&3&234\\
\hfill 1&0&0&0&0&0&0\\
\hline
sum&4&212&37&15&3&271
\end{tabular}
\end{table}

The most important point is that 248 out of 271 pebbles ($92.51\%$) is off by at most 1 stable or 1 unstable point compared to the manual measurements. With the primary classification being this close we are confident in the computed secondary and tertiary classes as well.

In all 52 cases where the manual measurement put a pebble in the primary class $\{2,2\}$, the computer did so as well. Our most important observation is that all of them belong to the same tertiary~--~and thus R\=/secondary and M\=/secondary~--~equilibrium class, namely the one containing all the tetrahedra inside of this primary class.

Based on the mathematical results presented in Section~\ref{sec:catalog} and the measurement data in this subsection, the higher order mechanical descriptors give a finite, complete, biased natural catalog inside the primary class \(\{2,2\}\). Two of these adjectives outline the direction of further research:
\begin{enumerate*}[label=\alph*)]
    \item from the mathematical point, the completeness of the tertiary classification is still an open question,
    \item from the geomorphological point, the bias of the higher order classification can be further examined.
\end{enumerate*}

\section{Summary}
In this paper we showed that higher order mechanical descriptors are a viable option for the description of sedimentary particles. These descriptors, given as graphs, carry essential, naturally encoded three\=/dimensional information on the shape. Despite the fact that their mathematical existence was known, the challenges connected with their extraction and identification proved to be, until now, prohibitive.

First order mechanical descriptors, defined by the respective numbers of stable and unstable static balance points
of the scalar, radial distance function $r=r(\varphi, \theta)$ measured from the center of mass~\(o\), have already proven their utility in geomorphology. Second order descriptors carry deeper information on the \emph{relative position} of these points by using natural, discrete decompositions of the radial distance function $r=r(\varphi, \theta)$. The decomposition of its range by saddle points leads to the concept of Reeb\=/graphs while the decompositon of its domain by isolated integral curves of the gradient leads to the concept of Morse\==Smale graphs. While both second-order descriptors have been discussed before in the mathematical literature, still, their mutual relationship remained unclear.

Since 3D measurement technology is becoming increasingly accessible and standard, obtaining 3D datasets is not a problem any more. However, 3D data itself, if not coupled with geometric ideas, does not solve the basic question of \emph{how to describe shapes}. Encouraged by these technological developments, in this paper we presented the geometric background of the mentioned second-order descriptors, provided an algorithm to reliably extract them from scanned 3D point clouds.  By introducing a \emph{third-order descriptor}, called the
master graph, we established the relationship between Reeb\=/graphs and Morse\==Smale graphs.

To illustrate the feasibility of the application, we created a catalog  of 271 scanned pebbles where we performed these measurements and we also provided the source code for the implementation of our algorithm as well as the 3D datasets of the aforementioned pebbles.

We hope that our paper gives a signal to geomorphologists that new, geometrically inspired tools for fully three\=/dimensional shape analysis are now available and ready to deploy.

\appendix
\onecolumn
\section{Visualization of the algorithm's output}
\begin{figure}[!ht]
    \centering
    \begin{tikzpicture}
        \node at (0,8) {\includegraphics[width=8cm]{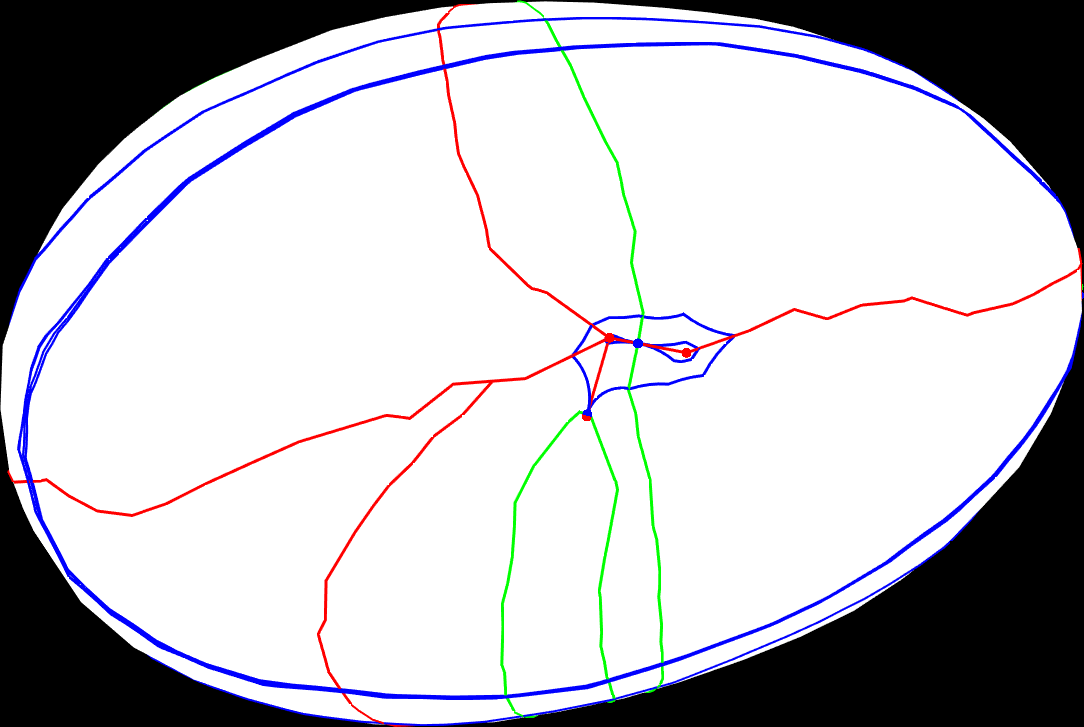}};
        \node at (0,0) {\includegraphics[width=8cm]{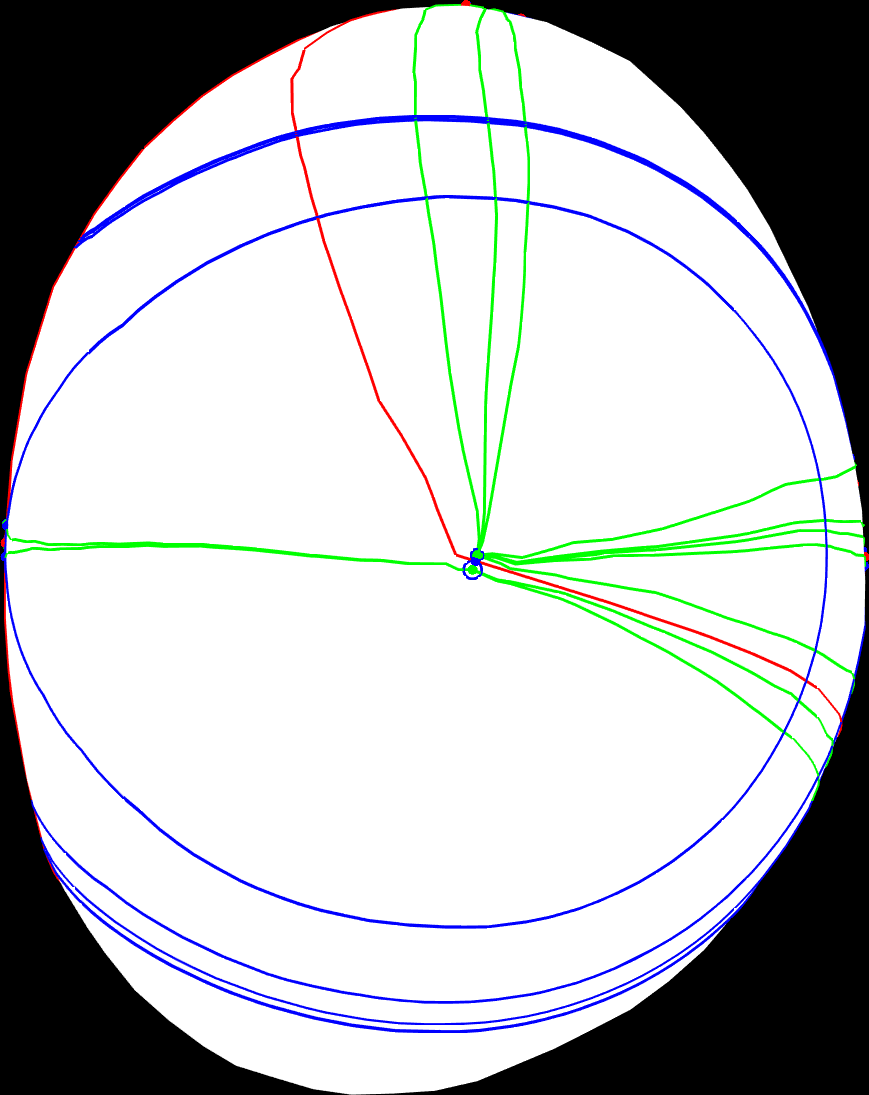}};
        \node at (7,0) {\includegraphics[height=10cm]{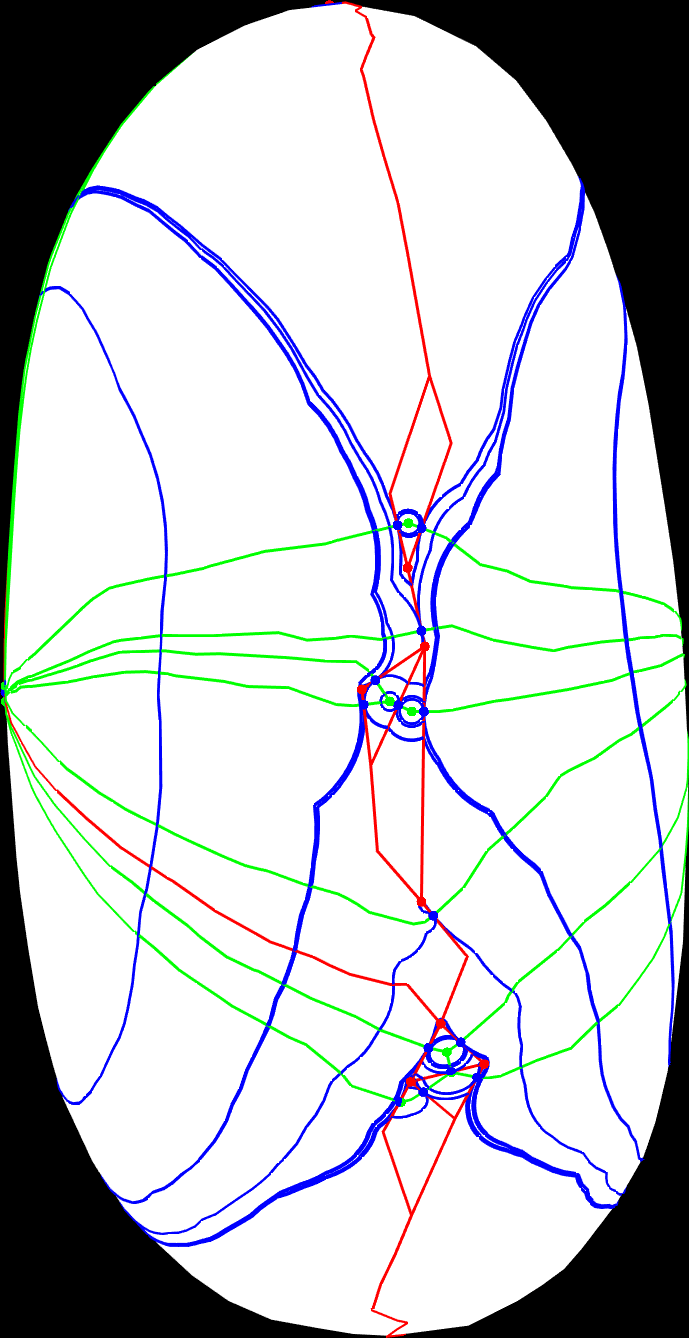}};
    \end{tikzpicture}
    \caption{Saddle contours (blue), saddle\==unstable ascending curves (red) and stable\==saddle ascending curves (green) of the polyhedron $P_\text{ex}$}\label{fig:poly_master}
\end{figure}
\begin{figure}[!ht]
    \centering
    \begin{tikzpicture}
        \node at (0,8) {\includegraphics[width=8cm]{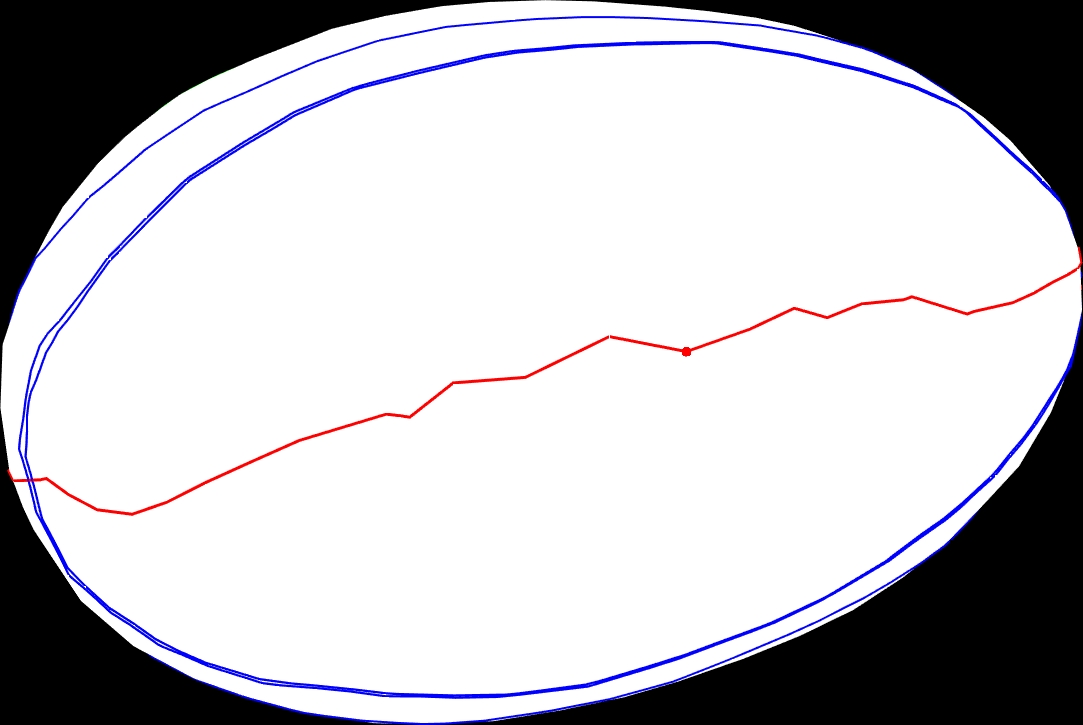}};
        \node at (0,0) {\includegraphics[width=8cm]{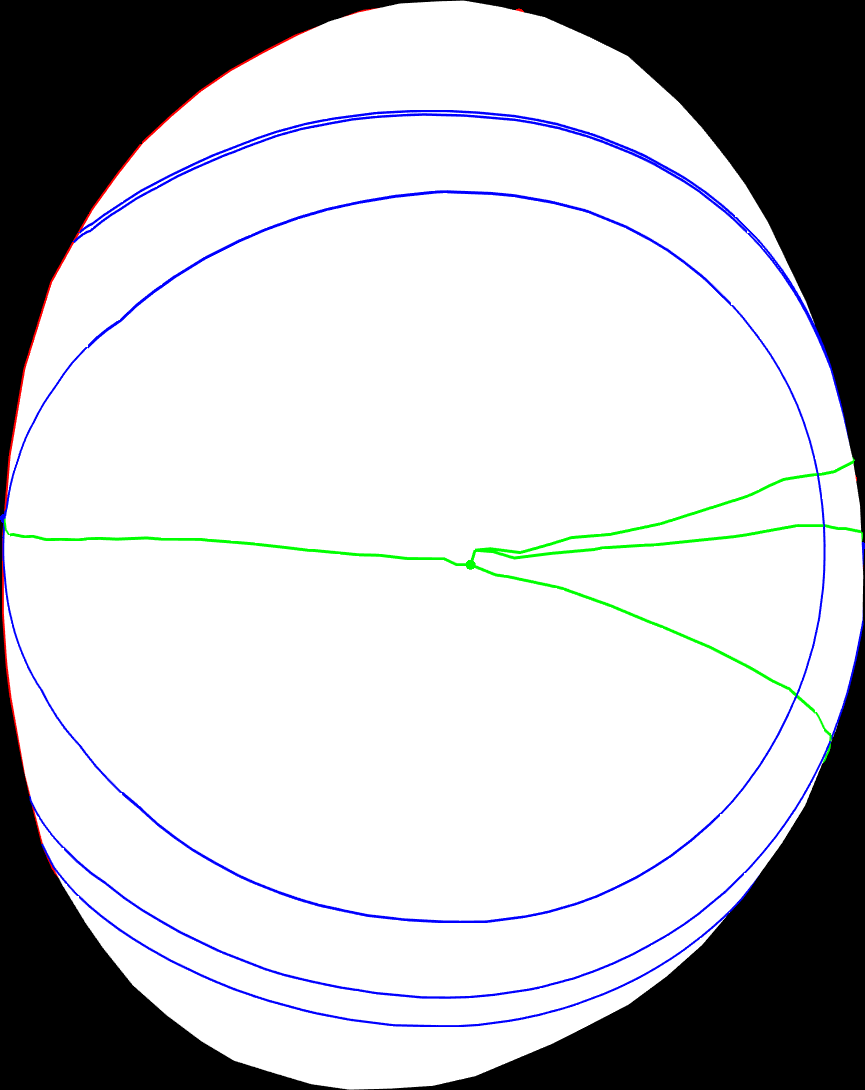}};
        \node at (7,0) {\includegraphics[height=10cm]{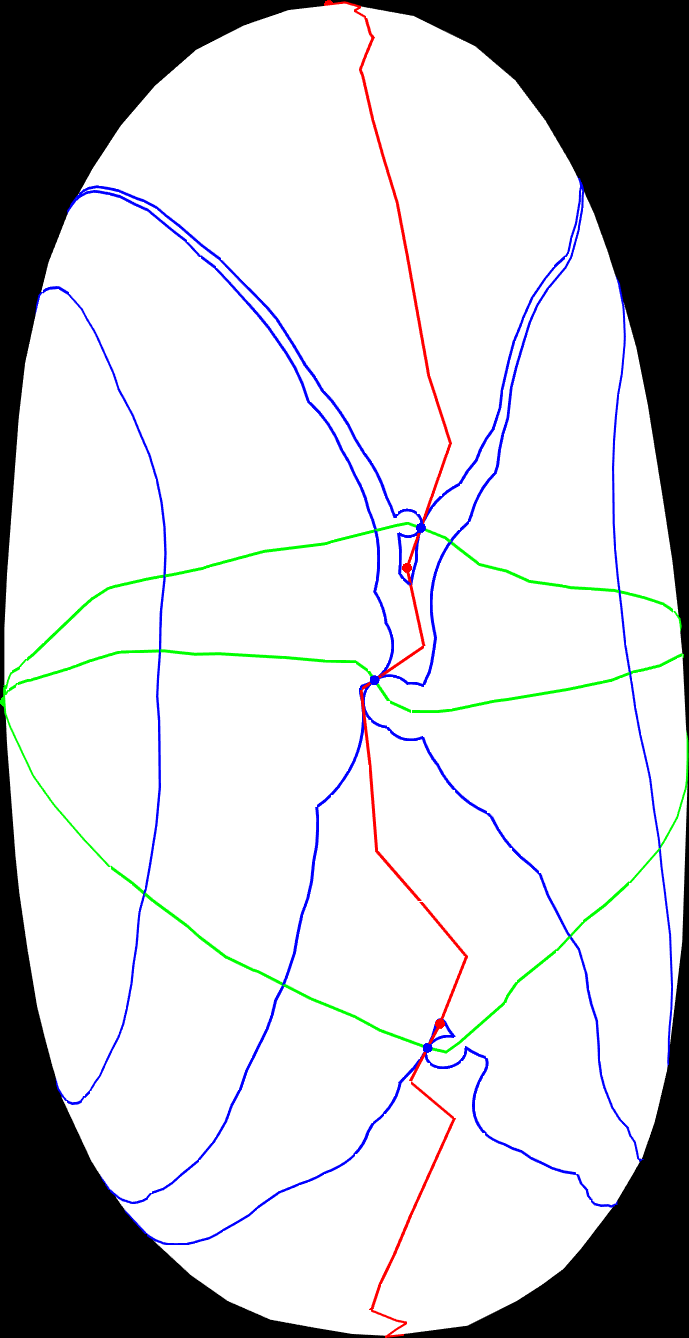}};
    \end{tikzpicture}
    \caption{Saddle contours (blue), saddle\==unstable ascending curves (red) and stable\==saddle ascending curves (green) after cancellations up to the primary class measured by hand}\label{fig:cancelled_master}
\end{figure}

\twocolumn

\addcontentsline{toc}{section}{References}
\printbibliography

\end{document}